\documentclass[11pt, 3p,authoryear,a4paper,table]{elsarticle}

\usepackage{titlesec}

\titlespacing*{\section}{0pt}{1.0ex}{0.6ex}
\titlespacing*{\subsection}{0pt}{0.8ex}{0.4ex}

\usepackage{graphicx}
\usepackage{tikz}
\usepackage{todonotes}

\graphicspath{{figures/}}

\usepackage{slashbox}
\usepackage{setspace}
\usepackage{dsfont}
\usepackage{amsmath}
\usepackage{amssymb}
\usepackage{amsfonts}
\usepackage{amsbsy}
\usepackage{subcaption}
\usepackage{amsthm}
\usepackage{bbm}
\usepackage{empheq}
\usepackage{array}
\usepackage{booktabs} 
\usepackage{verbatim}
\usepackage[english]{babel}
\usepackage{natbib}
\usepackage{ulem}
\usepackage{url}
\usepackage{multirow}
\usepackage{float}
\usepackage{numprint}
\usepackage{enumitem}
\usepackage[title]{appendix}
\usepackage[font=normalsize, labelfont=bf, justification=centering]{caption}
\usepackage{tabularx}
\usepackage{ltablex}
\usepackage{makecell}
\usepackage[hidelinks]{hyperref}
\usepackage{xcolor}

\numberwithin{table}{section}
\numberwithin{figure}{section}

\newcolumntype{R}{>{\raggedleft\arraybackslash}X} 
\newcolumntype{P}[1]{>{\centering\arraybackslash}p{#1}} 

\renewcommand\appendix{\par
\setcounter{section}{0}
\setcounter{subsection}{0}
\setcounter{table}{0}
\setcounter{figure}{0}
\gdef\thetable{\Alph{table}}
\gdef\thefigure{\Alph{figure}}
\gdef\thesection{\Alph{section}}
\setcounter{section}{0}}

\newtheorem{definition}{Definition}[section]
\newtheorem{theorem}{Theorem}[section]
\newtheorem{lemma}{Lemma}[section]

\newtheorem{proposition}{Proposition}[section]
\newtheorem{remark}{Remark}[section]
\numberwithin{equation}{section}

\newcounter{arclist}

\newcounter{arcenum}

\begin{document}


\begin{frontmatter}

\title{Balancing Shareholder Value and Financial Stability \\
       under a Reduced-Form Liquidation Model}

\cortext[cor]{Corresponding author. }
\address[UMelb]{Centre for Actuarial Studies, Department of Economics, University of Melbourne VIC 3010, Australia}
\address[UNSW]{School of Risk and Actuarial Studies, UNSW  Business School, UNSW Sydney NSW 2052, Australia}

\author[UMelb]{Benjamin Avanzi
\corref{cor}}
\ead{b.avanzi@unimelb.edu.au}

\author[UNSW]{Bernard Wong}
\ead{bernard.wong@unsw.edu.au}

\author[UNSW]{Jinxia Zhu 
}
\ead{jinxia.zhu@unsw.edu.au}

\begin{abstract}


Modern resolution and prudential regimes increasingly wind up a distressed firm not at a single hard threshold but through a graduated, state-dependent process. We study how the design of such a regime shapes the trade-off between shareholder value and financial stability for a firm whose surplus follows a general diffusion. Forced liquidation is modelled in reduced form, arriving at a surplus-dependent hazard rate that rises as the firm's position deteriorates. The framework has three regions: an unregulated region where dividends may be paid, a regulated region where solvency requirements prohibit distributions, and a distress region in which the firm faces the liquidation hazard. To quantify shareholder value we solve the resulting singular stochastic control problem: which is to maximise the expected present value of distributions until liquidation. We establish a verification theorem, prove that a barrier strategy is optimal, and obtain tractable expressions for the value function and the expected survival time, so that alternative designs can be compared at low cost. We show that a distress region placed solely below or solely above the classical ruin threshold does not consistently improve both shareholder value and firm survival, whereas combining the two yields a Pareto improvement. Regulatory design is decisive.

\end{abstract}

\begin{keyword} 
bankruptcy \sep
reduced-form liquidation  \sep
liquidation intensity \sep
financial stability \sep
shareholder value 
distress resolution \sep
stochastic control \sep
diffusion process \sep
solvency regulation 

\textit{JEL codes:} C61, G28, G33, G35

\textit{MSC classes:}
49L20 \sep 
60J70 \sep 
91B30 \sep 
93E20 \sep 
91G70 \sep 	

\end{keyword}
\end{frontmatter}

\section{Introduction}\label{introduction}

\subsection{Background and motivation}\label{intro:motivation}

A central tension in corporate governance and financial regulation concerns the trade-off between shareholder value and financial stability. Shareholders prefer payout policies that extract cash from the firm; regulators and creditors favour conservative policies that maintain adequate reserves against adverse shocks. This tension is most acute for firms approaching financial distress, where the institutional design of bankruptcy procedures and the constraints imposed on firms with weakened balance sheets can dramatically alter both the expected lifetime of the firm and the present value of  distributions to shareholders.

In practice, liquidation is rarely an instantaneous event triggered at a single threshold. Most jurisdictions provide mechanisms that allow financially distressed firms to continue operating while the situation evolves. In the United States, Chapter~11 of the Bankruptcy Code permits distressed firms to reorganise under court supervision rather than face immediate liquidation under Chapter~7~\citep{BroadieChernovSundaresan2007, Bris2006}. In Australia, the so-called ``safe harbour'' provisions of section~588GA of the \textit{Corporations Act 2001} shield directors from personal liability for insolvent trading while they pursue a restructuring plan likely to lead to a better outcome than immediate administration. Similar features arise in prudential regulation: the Solvency~II framework requires insurers to maintain capital above a minimum solvency capital requirement (SCR), and restricts dividend payments when reserves fall below this threshold~\citep{EiopaSolvency2}, while Basel~III imposes a comparable capital conservation buffer on banks with explicit restrictions on distributions inside the buffer~\citep{BaselIII2011}. Firms routinely hold capital well above regulatory minima---not only for
precautionary reasons but specifically to avoid the dividend restrictions, ratings downgrades, and reputational costs associated with breaching these thresholds.

The standard mathematical framework for the study of firm survival has been a \textit{first-passage} formulation: ruin occurs the instant the firm's surplus falls below a fixed lower barrier~\citep[see, e.g.,]{AsAl10}. While analytically tractable, this formulation collapses the institutional landscape described above into a single deterministic threshold. It cannot distinguish a firm operating comfortably above a regulatory floor from one on the brink of failure in a distress zone; it cannot capture the fact that forced liquidation is typically a stochastic event triggered by a confluence of creditor actions, audit findings, regulatory intervention, and exhaustion of legal protections, rather than a single mechanical passage.

In this paper we develop a continuous-time framework that works towards addressing these limitations. The firm's surplus $R_t^D$ evolves according to a one-dimensional diffusion with zone-dependent drift and volatility,
\begin{equation}\label{eq:intro-SDE}
dR_t^D \;=\; \mu(R_{t-}^D)\,dt \;+\; \sigma(R_{t-}^D)\,dW_t \;-\; dD_t,
\end{equation}
where $\{W_t\}$ is a standard Brownian motion and $\{D_t\}$ is the cumulative amount of dividend distributions up to time~$t$. Three critical thresholds $a_0 < a_d \leq a_s$ delineate four distinct
regimes, illustrated in Figure~\ref{F_zones} and described in the paragraphs that follow. Forced liquidation is modelled as a point event whose arrival rate depends on the current surplus level: while $R_t^D$ lies in the distress zone $(a_0, a_d)$, liquidation arrives with \textit{liquidation intensity} $\omega(R_t^D)$, meaning that the conditional probability of liquidation in the next instant
$(t, t+dt)$ given $R_t^D = x$ is $\omega(x)\,dt$. At the lower threshold $a_0$ the firm's remaining cash or regulatory viability is exhausted and liquidation is immediate ($\omega(a_0)\equiv\infty$); above $a_d$ the firm is solvent in the accounting sense and no forced liquidation arises from the intensity mechanism ($\omega(\cdot)\equiv 0$). Note that the condition $\omega(a_0)=\infty$ encodes the absorbing boundary at $a_0$: the firm is liquidated immediately upon reaching $a_0$, so $V(a_0)=0$. In the numerical examples this boundary is imposed directly (an absorbing barrier with $V(a_0)=0$) and $\omega$ is taken finite and non-increasing on the open distress interval $(a_0,a_d)$; the value of $\omega$ at $a_0$ itself is never used.

\definecolor{zoneDistress}{HTML}{F39899}
\definecolor{zoneRegulated}{HTML}{A5C5DF}
\definecolor{zoneUnreg}{HTML}{AFDBAE}
\definecolor{zoneLiquidation}{HTML}{B22222}

\begin{figure}[htb]
\begin{center}
\scalebox{0.8}{
\begin{tikzpicture}
    \draw[->] (1,0) -- (1,6.5) node[left] {Surplus level $x$};

    \draw[dotted] (1,4.5) -- (15,4.5);
    \node[left] at (1,4.5) {$a_s$ (solvency level)};
    \draw[dotted] (1,3) -- (15,3);
    \node[left] at (1,3) {$a_d$ (distress level)};
    \draw[dotted] (1,1.5) -- (15,1.5);
    \node[left] at (1,1.5) {$a_0$ (liquidation level)};

    \fill[zoneUnreg] (1,4.5) rectangle (4,6);
    \node at (2.5,5.25) {unregulated zone};

    \fill[zoneRegulated] (1,3) rectangle (4,4.5);
    \node at (2.5,4) {solvent, but};
    \node at (2.5,3.5) {regulated zone};

    \fill[zoneDistress] (1,1.5) rectangle (4,3);
    \node at (2.5,2.25) {distress zone};

    \fill[zoneLiquidation] (1,0) rectangle (4,1.5);
    \node at (2.5,.95) [text=white] {immediate};
    \node at (2.5,.55) [text=white] {liquidation};

    \node at (2.5,7.2)  {\textbf{Zone}};
    \node at (5.5,7.2)  {\textbf{Liquidation intensity}};
    \node at (9.25,7.2) {\textbf{Surplus dynamics}};
    \node at (13.25,7.2){\textbf{Dividend admissibility}};

    \node at (5.5,5.25) {$\omega(\cdot) = 0$};
    \node at (5.5,3.75) {$\omega(\cdot) = 0$};
    \node at (5.5,2.25) {$0 < \omega(\cdot) < \infty$};
    \node at (5.5,1.5)  {$\omega(a_0) \equiv \infty$};

    \node at (9.25,5.25) {$\mu(\cdot)=\mu_u(\cdot),\ \sigma(\cdot)=\sigma_u(\cdot)$};
    \node at (9.25,4)    {$\mu(\cdot)=\mu_s(\cdot),\ \sigma(\cdot)=\sigma_s(\cdot)$};
    \node at (9.25,3.5)  {(potentially different)};
    \node at (9.25,2.5)  {$\mu(\cdot)=\mu_d(\cdot),\ \sigma(\cdot)=\sigma_d(\cdot)$};
    \node at (9.25,2)    {(potentially different again)};

    \node at (13.25,5.25) {yes};
    \node at (13.25,3.75) [draw, blue] {no};
    \node at (13.25,2.25) [draw, blue] {no};
\end{tikzpicture}
}
\end{center}
\caption{The four regimes of the surplus process, with their associated liquidation intensity, surplus dynamics, and dividend admissibility. The drift and volatility functions $(\mu_d,\sigma_d)$, $(\mu_s,\sigma_s)$ and $(\mu_u,\sigma_u)$ need not coincide, but are assumed to satisfy standard regularity conditions within each zone.}
\label{F_zones}
\end{figure}

The four regimes in Figure~\ref{F_zones} each correspond to a distinct institutional situation.

\noindent \textbf{Unregulated zone $[a_s, \infty)$}
The firm operates freely: it may distribute dividends to shareholders and makes cash-management decisions to maximise shareholder value. Regulatory oversight is routine and non-intrusive. In insurance, this corresponds to holding capital comfortably above the Solvency~II SCR; in banking, to operating with Common Equity Tier~1 capital above regulatory minima and above the conservation buffer.

\noindent \textbf{Regulated zone $[a_d, a_s)$}
The firm remains solvent and continues to operate, but its surplus has fallen below the level at which shareholder distributions are permitted. Regulatory constraints prohibit dividend payments until capital is restored above $a_s$. This zone captures binding solvency capital requirements: the Solvency~II SCR
in insurance~\citep{EiopaSolvency2}, the Basel~III capital conservation buffer in banking~\citep{BaselIII2011}, or dividend restrictions in debt covenants.

\noindent \textbf{Distress zone $(a_0, a_d)$.}
The firm is technically insolvent in the accounting sense. Its surplus has fallen below the level at which directors must form a view on the firm's viability, but continues to operate while the situation evolves. At any moment, forced liquidation may be triggered by one of several mechanisms: a creditor petition, a director's realisation of insolvency, an adverse audit finding, regulatory action, or the inability to meet an immediate cash obligation. In Australia, for example, section~588G of the \textit{Corporations Act 2001} makes directors personally liable for debts incurred while trading
insolvently, unless the safe harbour provisions of section~588GA apply; these protect directors developing a restructuring plan subject to specified conditions~\citep{Tiba2019}. The liquidation intensity $\omega(x)$ aggregates the rate at which any of these mechanisms triggers forced liquidation, conditional on the current surplus level $x$. Empirically, this rate is higher when surplus is deeper in distress, which we encode by assuming $\omega$ is non-increasing on $(a_0, a_d)$. This distress zone is also the natural home for institutional mechanisms that allow firms to continue operating through temporary financial difficulty. In the United States, a firm in this zone may file for Chapter~11
reorganisation, continuing to operate under court supervision while restructuring its obligations~\citep{BroadieChernovSundaresan2007, Bris2006}. Our framework is agnostic to which mechanism applies: $\omega(x)$ captures the net effect of all paths from technical insolvency to formal winding-up.

\noindent \textbf{Liquidation level $a_0$.}
The firm has exhausted its capacity to continue operating. This corresponds to running out of cash to meet unavoidable obligations, breaching a hard regulatory threshold (for example, negative net assets
triggering mandatory winding-up), or a definitive court ruling ending any safe harbour. Liquidation is immediate and irreversible.

The specification~\eqref{eq:intro-SDE} together with the liquidation intensity $\omega$ places our framework within the tradition of \textit{reduced-form} models of default that is standard in credit risk
modelling~\citep{JarrowTurnbull1995, DuffieSingleton1999, Lando1998}. Reduced-form models treat default as an exogenous Cox-process event whose intensity is a specified function of state variables---in contrast to structural models, in which default is determined endogenously by firm-value crossings of a barrier~\citep{Merton1974, BlackCox1976}. Whereas structural models require detailed assumptions about the firm's asset and liability dynamics, reduced-form models remain agnostic about the microfoundations of default and focus on the statistical regularity that forced liquidation becomes more likely as financial health deteriorates. Our intensity function $\omega$ plays precisely this role. We emphasise that the specification is \textit{descriptive}, not normative: we do not claim that firms \textit{should} be randomly liquidated in the distress zone, but rather model the fact that they \textit{are}, at rates that increase with the severity of distress. Correspondingly, and despite the local memorylessness of the Cox-process increment, the time to forced liquidation is generally \textit{not} exponentially distributed: it is the first jump of a doubly stochastic Poisson process whose intensity inherits the full state-dependent structure of the controlled surplus process $R_t$.

The mathematical structure we adopt generalises the ruin component of the so-called ``Gamma-Omega model'' introduced in the actuarial literature by \citet{AlbrecherGerberShiu2011}. That literature has established a number of useful technical results, but has typically restricted attention to stationary surplus processes (drifted
Brownian motion, compound Poisson, spectrally negative L\'evy) and simplified intensity specifications (constant or piecewise constant $\omega$). Our framework allows for general diffusion dynamics, zone-dependent drift and volatility, and a general intensity function, and so substantially extends the existing work while placing it within a much broader modelling tradition. It also considers optimal distributions without constraint, that is, without assuming that dividends can only be paid at exponentially distributed intervals of time (the ``gamma'' component of the gamma-omega model).

We use our framework to address two interrelated questions.
\begin{itemize}
\item From the perspective of a regulator or social planner: \textit{when a firm becomes financially distressed, does it improve financial stability to (i) resolve it gradually rather than wind it up the moment it breaches insolvency (that is, by allowing it to keep operating below that point while it restructures, as, for instance, under US Chapter~11 reorganisation or the safe-harbour provisions of the Australian \textit{Corporations Act 2001}), (ii) by restricting shareholder distributions while its reserves are low but still adequate (as under the Solvency~II SCR or the Basel~III capital conservation buffer), or (iii) by combining the two? And if so, under what configurations?}
\item From the perspective of shareholders: \textit{how does the presence and calibration of this zone affect the value available to shareholders under the firm's chosen cash-management strategy?}
\end{itemize}

To quantify the shareholder side of this trade-off, we adopt the Gordon model \citep{Gor62}, which corresponds to the standard valuation approach introduced by \citet{deF57}: the shareholder value of a firm at current surplus $x$ is defined as the maximal expected discounted value of dividend distributions achievable from $x$ under any admissible cash-management strategy. We emphasise that this optimal-dividend formulation is used here as a \textit{descriptive valuation device}, not as a normative prescription for payout policy: it characterises the value that, in principle, shareholders could extract from the firm, which is what we compare across regulatory designs. We pair this shareholder-value criterion with the expected survival time
under the optimal strategy as our measure of financial stability, and compare the results against those of the traditional first-passage benchmark.

\subsection{Related literature}\label{intro:lit}

Our work connects to three strands of the literature. We first place our contributions within the literature on financial distress, bankruptcy procedures, and regulation (first strand). This is achieved by contributions to the dividend optimisation literature (second strand), which we borrow to determine shareholder value, as well as the reduced-form liquidation model literature (third strand), whose approach is used to model solvency.  

 \textbf{Financial distress, bankruptcy procedures, and regulation.} 
The corporate finance literature has extensively studied the interaction between capital structure, financial distress, and bankruptcy procedures. Important contributions on capital structure and bankruptcy design include \citet{Leland1994}, \citet{LelandToft1996}, and \citet{GoldsteinJuLeland2001}. \citet{BroadieChernovSundaresan2007} analyse optimal capital structure in the presence of both Chapter~7 and Chapter~11; \citet{Bris2006} provide empirical
evidence on the costs and incidence of the two procedures; \citet{LiLiuTangZhu2020} study liquidation risk under contemporary regulatory frameworks. Our framework complements this literature: rather than modelling the capital structure decision, we take the firm's surplus dynamics as given and ask how the \textit{design} of the distress zone and associated regulatory constraints affects the trade-off between shareholder value (next strand) and financial stability (third strand).

 \textbf{Dividend optimisation to determine shareholder value.} The expected present value of distributions under optimal cash management has been a standard measure of shareholder value in corporate finance \citep[e.g.,][]{Gor62} and insurance mathematics since~\citet{deF57}; further foundational work includes~\citet{Mor66} and \citet{Ger72}, and comprehensive surveys are given by~\citet{Ava09,AvTuWo16c} and \citet{AlTh09}. Under the traditional first-passage definition of ruin, barrier strategies have been shown to be optimal in many settings, including general diffusions \citep{Paulsen2007, ZhuChen2013IME}, models with fixed transaction costs, and models with solvency constraints~\citep{Pau03, ZhuChen2015Econ,AvChHeWo23}. The three-zone structure adopted in the present paper generalises that of \citet{AvWo12}, which introduced a solvency-constrained band in a mean-reverting dividend model without optimality results. Relative to that work, we allow arbitrary zone boundaries, general diffusion dynamics, and a substantially broader class of liquidation mechanisms. Our framework also differs in a subtle behavioural respect: in~\citet{AvWo12} dividends are permitted in the middle band provided the surplus entered it from above, whereas in our specification the middle band prohibits distributions uniformly, reflecting the typical design of solvency capital requirements.

\textbf{Reduced-form liquidation models and the so-called `gamma-omega' literature.}
Reduced-form models of bankruptcy (in which default or liquidation is a Cox-process event with state-dependent intensity) originate in credit risk \citep{JarrowTurnbull1995, DuffieSingleton1999, Lando1998}, in contrast to structural models in the tradition of~\citet{Merton1974} and \citet{BlackCox1976}. Within actuarial science, the analogous framework was introduced by~\citet{AlbrecherGerberShiu2011} under the name ``Gamma-Omega Model'', and subsequent work has studied ruin probabilities and related quantities for specific stationary processes: compound Poisson \citep{AlbrecherLautscham2013, GerberShiuYang2012}, drifted Brownian motion \citep{AlbrecherGerberShiu2011}, and spectrally negative L\'evy processes \citep{LoeffenRenaudZhou2014, AlbrecherIvanovs2014, Renaud2014, WangWangZhang2016, Glau2016}. Most of these papers impose simplified functional forms for the intensity (typically constant or piecewise constant). Within this framework, dividend questions have been considered under pre-specified (non-optimal) barrier strategies by~\citet{LiuLiu2014} and \citet{CzarnaKaszubowskiPalmowski2020}. Full characterisation of the optimal strategy has been addressed only for drifted Brownian motion~\citep{AlbrecherGerberShiu2011, JinYin2014} and, very recently, for spectrally negative L\'evy processes with piecewise-constant intensity by~\citet{MetaRenaud2024}. Our paper substantially extends this literature by solving the
shareholder-value problem under general diffusion dynamics with zone-dependent drift and volatility and a general intensity function, without relying on stationarity or specific functional forms. It thereby also closes the methodological gap between the reduced-form credit-risk tradition and its actuarial counterpart.

\subsection{Statement of contributions}\label{intro:contributions}

This paper makes the following main contributions.

\textbf{Regulatory design insight.}
We study how the legal and regulatory treatment of a financially distressed firm shapes the trade-off between shareholder value and financial stability. Real regimes rarely wind a firm up the instant it becomes insolvent: distress instead sets off a graduated, partly discretionary process (e.g., court-supervised reorganisation under US Chapter~11, the safe-harbour protections of the Australian \textit{Corporations Act 2001}, supervisory forbearance) in which forced liquidation arrives at a rate that rises as distress deepens. We represent this process by a state-dependent liquidation intensity and, within it, evaluate two interventions a regime may apply: allowing a distressed firm to keep operating below the point of insolvency rather than liquidating it immediately (a recovery buffer below the classical ruin threshold), and restricting shareholder distributions while the firm's reserves are low but still adequate (as under the Solvency~II SCR or the Basel~III capital conservation buffer; a constraint above that threshold). The design proves consequential: on its own, neither intervention reliably improves both objectives: the recovery buffer raises shareholder value but can shorten the life of a well-capitalised firm, while the distribution restriction lengthens survival only by depressing value. Yet, \textit{combining} them can improve both at once, a potential Pareto improvement over immediate liquidation at insolvency (the traditional benchmark). We identify when the combination is most effective and draw out the implications for the design of distress-resolution and solvency regulation.

\textbf{General stochastic control framework.}
To quantify shareholder value in our framework, we formulate and solve the associated singular stochastic control problem for a general diffusion with zone-dependent drift and volatility and a general liquidation intensity function. We establish a verification theorem, prove that a barrier dividend strategy is optimal, and characterise the optimal barrier. This extends substantially beyond existing work, which has been confined to stationary processes (drifted Brownian motion, compound Poisson, or
spectrally negative L\'evy) and simplified intensity specifications.

\textbf{Computational tractability.}
We provide a constructive, implementable procedure for computing the optimal barrier and the associated shareholder-value function via a small system of second-order linear ODEs. We additionally derive closed-form expressions for the expected survival time under the optimal strategy, enabling direct quantitative comparison of alternative regulatory designs.

\textbf{A new qualitative phenomenon.}
We show that the shareholder-value function is concave in the unregulated region but may exhibit convexity---or a mix of convexity and concavity---in the distress and regulated zones. This non-concavity is a new phenomenon in the dividend optimisation literature and arises from the interaction between the liquidation intensity and the solvency constraint, reflecting the fact that the firm cannot exercise optimal control when surplus is below $a_s$.

\medskip
Our framework is also sufficiently flexible to accommodate extensions to related settings, including models with reorganisation features such as random grace periods, and alternative notions of default such as Parisian ruin with stochastic delay.

\subsection{Paper structure}\label{intro:structure}

The remainder of the paper is organised as follows. Section~\ref{formulation} reformulates the problem mathematically: the surplus dynamics with region-dependent coefficients, the liquidation intensity mechanism, the class
of admissible dividend strategies, the shareholder-value function, and the expected survival time.
Section~\ref{results} solves the associated singular stochastic control problem in the general three-zone framework: we establish the Hamilton--Jacobi--Bellman equations, prove a verification theorem, characterise the optimal barrier strategy, and provide a constructive computational procedure. Section~\ref{sec:expectation} deals with the expected survival time under the optimal strategy identified in Section~\ref{results}. Section~\ref{results-reduced} considers the reduced two-region case in which the solvency threshold coincides with the distress threshold ($a_s = a_d$), a special case of practical interest. Section~\ref{sec:numerical} presents the numerical illustrations: a general three-zone example, a demonstration that shareholder value can fail to be concave below the solvency threshold, and---at the heart of the section---a comparison of the traditional regime of immediate liquidation at insolvency against three distress-resolution designs (a recovery buffer below the insolvency point, a distribution restriction above it, and the two combined), quantifying
their effect on shareholder value and firm survival. A Monte-Carlo study then examines the full distribution of the time to liquidation and of the dividends paid, under common random numbers across designs. 
Section~\ref{conclusion} concludes.
Proofs are collected in the Appendix.

\section{Mathematical formulation}\label{formulation}

We work on a complete filtered probability space $(\Omega,\mathcal{F},\{\mathcal{F}_t\}_{t\ge 0},\mathbb{P})$ satisfying the usual conditions and carrying a standard Brownian motion $\{W_t\}_{t\ge 0}$. The firm's surplus is controlled through a dividend strategy $D=\{D_t\}_{t\ge 0}$, where $D_t$ is the cumulative amount distributed to shareholders up to time $t$. Under $D$, the surplus $R^D=\{R_t^D\}_{t\ge 0}$ follows the region-dependent diffusion
\begin{equation}\label{eq:SDE}
dR_t^D=\mu(R_{t-}^D)\,dt+\sigma(R_{t-}^D)\,dW_t-dD_t,
\qquad
(\mu,\sigma)=
\begin{cases}
(\mu_d,\sigma_d), & a_0<R_{t-}^D<a_d,\\[2pt]
(\mu_s,\sigma_s), & a_d\le R_{t-}^D<a_s,\\[2pt]
(\mu_u,\sigma_u), & R_{t-}^D\ge a_s,
\end{cases}
\end{equation}
where the fixed thresholds $a_0<a_d\le a_s$ are the \textit{liquidation},
\textit{distress}, and \textit{solvency} levels, respectively. Dividends may be paid only in the unregulated region $\{R^D\ge a_s\}$, so that the term $-dD_t$ in \eqref{eq:SDE} is active only there; refer to Definition~\ref{admi} below. The drift functions $\mu_i$ are Lipschitz continuous and the volatility functions $\sigma_i$ are strictly positive and Lipschitz continuous on their respective regions, $i\in\{d,s,u\}$; they may differ across regions, allowing operating conditions, financing costs, and regulatory effects to change with the level of surplus; refer to Figure~\ref{F_zones}.

Forced liquidation is modelled through a state-dependent intensity, as explained in the previous section. While the surplus lies in the distress region $(a_0,a_d)$, liquidation arrives as the first event of a Cox process with intensity $\omega(R_t^D)$. Should the process reach the liquidation level $a_0$, liquidation occurs immediately. We take $\omega$ to be continuous, non-negative and non-increasing on $(a_0,a_d)$, and extend it by
$\omega(x)=0$ for $x\ge a_d,$
so that no intensity-driven liquidation occurs once the firm is solvent. The modelling rationale for this reduced-form mechanism, and its interpretation in terms of bankruptcy and prudential-regulation regimes, are given in Section~\ref{intro:motivation}. The regulated region imposes a single constraint on the control: no dividend may be paid while $R^D<a_s$.

\begin{definition}\label{admi}
A dividend strategy $D=\{D_t\}_{t\ge 0}$ is \textit{admissible} if
(i) $D$ is non-decreasing and $\{\mathcal{F}_t\}_{t\ge 0}$-predictable; (ii) $D$ is right-continuous with left limits (c\`adl\`ag);
(iii) $D_{0-}=0$;
(iv) $0\le D_t-D_{t-}\le R_t^D-a_s$ on $\{R_{t-}^D>a_s\}$; and
(v) $D_t-D_{t-}=0$ and $dD_t=0$ on $\{R_{t-}^D<a_s\}$.
The collection of all admissible strategies is denoted by $\Pi$.
\end{definition}

For any $D\in\Pi$ the controlled surplus $R^D$ is c\`adl\`ag and $\{\mathcal{F}_t\}_{t\ge 0}$-adapted. A lump-sum payment at time $t$ (a jump of $D$) is taken to occur at $t-$, so that $R^D$ remains well defined.

Let $T^D$ denote the liquidation time under $D$. The shareholder-value functional is the expected discounted dividend stream 
\begin{gather}\label{perf}
\mathcal{P}_D(x)=\mathbb{E}_x\bigg[\int_0^{T^D}e^{-\delta t}\,dD_t\bigg],
\end{gather}
where $\delta>0$ is the shareholders' discount rate and $\mathbb{E}_x[\cdot]=\mathbb{E}[\,\cdot\mid R_{0-}=x\,]$. Shareholder value is the supremum of \eqref{perf} over admissible strategies,
\begin{gather}\label{eq:V-def}
V(x)=\sup_{D\in\Pi}\mathcal{P}_D(x),
\end{gather}
and a strategy $D^\ast\in\Pi$ is \textit{optimal} if $V(x)=\mathcal{P}_{D^\ast}(x)$. Financial stability is measured by the expected survival time under the optimal strategy,
\begin{align}\label{suvival-new}
m(x)=\mathbb{E}_x\big[T^{D^\ast}\big].
\end{align}

Throughout, we impose the transversality condition
\begin{align}\label{assumption}
\sup_{x\in[a_s,\infty)}\mu_u^{\prime}(x)<\delta,
\end{align}
which guarantees finiteness of $V$ by ruling out unbounded discounted value generated through indefinite accumulation in the unregulated region. Here $\mu_u^{\prime}$ is the rate at which the drift changes with surplus---not the firm's expected return; under the linear specification $\mu(x)=\mu_0+rx$ used in our examples it equals the constant $r$, so \eqref{assumption} reduces to $r<\delta$. Economically, this is the direct analogue of the Gordon growth condition and of the transversality condition in infinite-horizon optimisation: if the marginal growth rate exceeded $\delta$ at some level, the firm could accumulate surplus indefinitely and generate unbounded discounted value, so no finite optimal dividend strategy would exist. In any internally consistent valuation framework, $\delta$ comprises an expected-growth component and a risk premium, and the assumption is equivalent to requiring a strictly positive risk premium. This is the same justification advanced by \citet{Paulsen2007} in a closely related problem: ``In a financial context this is a natural condition, since values are typically calculated under an equivalent martingale measure.'' We retain this assumption throughout.

For any $D\in\Pi$, let
\begin{align}\label{eq:tau-a0}
\tau_{a_0}^D=\inf\{t\ge 0:R_t^D=a_0\}
\end{align}
be the first hitting time of the liquidation level. Since liquidation is immediate at $a_0$ we have $T^D\le\tau_{a_0}^D$, and hence $\mathcal{P}_D(x)\equiv 0$ and $V(x)\equiv 0$ for $x\le a_0$. The intensity-based
liquidation mechanism admits an equivalent representation that absorbs the random liquidation time into a state-dependent discount factor; it underlies the analysis of Section~\ref{results}.

\begin{proposition}\label{alternative-value}
For any admissible strategy $D\in\Pi$ and $x>a_0$,
\begin{align}
\mathcal{P}_D(x)
&=\mathbb{E}_x\Big[\int_0^{\tau_{a_0}^D}
   e^{-\big(\delta t+\int_0^t\omega(R_u^D)\,du\big)}\,dD_t\Big],
\quad V(x)
=\sup_{D\in\Pi}\mathbb{E}_x\Big[\int_0^{\tau_{a_0}^D}
   e^{-\big(\delta t+\int_0^t\omega(R_u^D)\,du\big)}\,dD_t\Big].
   \label{alt-valueF}
\end{align}
\end{proposition}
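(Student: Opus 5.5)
We plan to make the Cox-process construction of the liquidation time explicit and then condition on the path of the controlled surplus. Concretely, we realise $T^D$ as
\[
T^D=\min\big\{\tau_{a_0}^D,\;\inf\{t\ge 0:\textstyle\int_0^t\omega(R_u^D)\,du\ge E\}\big\},
\]
where $E$ is a unit-mean exponential random variable independent of $\mathcal{F}_\infty=\sigma(\bigcup_t\mathcal{F}_t)$ (enlarging the probability space if necessary). This is the standard canonical construction of a doubly stochastic first jump with intensity $\omega(R_t^D)$, and it is what the description in Section~\ref{formulation} means. Writing $\Lambda_t=\int_0^t\omega(R_u^D)\,du$, which is continuous, non-decreasing and $\mathcal{F}_t$-adapted, we have for each $t<\tau_{a_0}^D$ that $\{T^D>t\}=\{E>\Lambda_t\}$ up to the boundary event. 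Hence $\mathbb{P}(T^D>t\mid\mathcal{F}_\infty)=e^{-\Lambda_t}\mathbf{1}_{\{t<\tau_{a_0}^D\}}$.

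Next we rewrite the functional as
\[
\int_0^{T^D}e^{-\delta t}\,dD_t=\int_{[0,\tau_{a_0}^D)}e^{-\delta t}\mathbf{1}_{\{E>\Lambda_t\}}\,dD_t .
\]
Some care is needed with the convention that lump sums occur at $t-$ and with atoms of $dD$. Since $\Lambda$ is continuous and $E$ has a continuous law, the sets $\{E=\Lambda_t\}$ have probability zero for each fixed $t$, and Fubini makes the choice between $\le$ and $<$ irrelevant. Because $D$ is $\mathcal{F}$-adapted and non-decreasing, the integrand is non-negative. We can therefore apply the tower property with respect to $\mathcal{F}_\infty$ and then Tonelli's theorem, pathwise in $\omega$, to the product of the measure $dD_t$ and the law of $E$. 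Since $E$ is independent of $\mathcal{F}_\infty$, $\mathbb{E}[\mathbf{1}_{\{E>\Lambda_t\}}\mid\mathcal{F}_\infty]=e^{-\Lambda_t}$. This yields
\[
\mathcal{P}_D(x)=\mathbb{E}_x\Big[\int_0^{\tau_{a_0}^D}e^{-\delta t-\Lambda_t}\,dD_t\Big],
\]
and non-negativity removes any integrability concerns. Taking the supremum over $D\in\Pi$ on both sides gives the expression for $V$, because the identity holds strategy by strategy.

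The main obstacle is justifying that the independence of $E$ from $\mathcal{F}_\infty$ is compatible with the model. Equivalently, liquidation must not feed back into the surplus dynamics or the admissible strategies before it occurs, so that $D$ and $R^D$ can be taken as $\{\mathcal{F}_t\}$-adapted, with the liquidation time a totally inaccessible time in the enlarged filtration that satisfies immersion (the H-hypothesis). We will handle this by noting that after $T^D$ nothing contributes to the payoff. Hence, without loss of generality, strategies are defined on the Brownian filtration and then stopped at $T^D$, and any strategy adapted to the progressively enlarged filtration coincides before $T^D$ with an $\mathcal{F}$-adapted one (the standard Jeulin--Yor lemma). The remaining point is the behaviour at $a_0$, where $\omega$ may blow up. This is handled by truncating the integral at $\tau_{a_0}^D$, which is allowed because $T^D\le\tau_{a_0}^D$, and by monotone convergence if $\Lambda$ explodes as $t\uparrow\tau_{a_0}^D$.
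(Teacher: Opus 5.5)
Your proposal is correct and is essentially the paper's argument: condition on the controlled surplus path and use the Cox-process survival law $\mathbb{P}(T^D>t\mid\{R_u^D\}_{u\ge0})=e^{-\int_0^t\omega(R_u^D)\,du}$ for $t<\tau_{a_0}^D$. The difference is only in bookkeeping: you integrate the indicator $\mathbf{1}_{\{E>\Lambda_t\}}$ directly against $dD_t$ and apply Tonelli, whereas the paper splits on $\{T^D<\tau_{a_0}^D\}$ and $\{T^D\ge\tau_{a_0}^D\}$, uses the conditional density $\omega(R_s^D)e^{-\int_0^s\omega(R_u^D)\,du}\,ds$, and swaps the order of integration to recombine the two pieces; your explicit construction with an independent exponential also makes rigorous the conditioning step that the paper takes for granted.
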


\begin{remark}
The classical first-passage model is recovered in the limit $a_d\to a_0$, in which liquidation occurs immediately on reaching $a_0$. The reduced two-region case $a_s=a_d$, in which the solvency and distress thresholds coincide, is treated in Section~\ref{results-reduced}; the formulation also contains Parisian
ruin with stochastic delay as a special case \citep[see, e.g.,][]{DassiosWuJul2008}.
\end{remark}

\section{Optimal shareholder value}\label{results}

We now characterize the optimal dividend policy and the corresponding shareholder value. The equivalent representation \eqref{alt-valueF} reformulates the problem as a singular stochastic control problem with state-dependent discounting, making it amenable to dynamic programming techniques. We therefore begin by deriving the associated dynamic programming equation and the corresponding Hamilton--Jacobi--Bellman (HJB) system. Applying the dynamic programming principle, we obtain that, for any stopping time $\tau$, 
 \begin{align}
 V(x)
   =&\sup_{D\in\Pi}\mathbb{E}_x\left[
\int_0^{\tau_{{a_0}}^D\wedge\tau}
e^{-\left(\delta s+\int_0^s\omega(R_u^D)du\right)}dD_s+e^{-\left(\delta (\tau_{{a_0}}^D\wedge\tau)
+\int_0^{\tau_{{a_0}}^D\wedge\tau}\omega(R_u^D)du\right)}V(R_{\tau_{{a_0}}^D\wedge\tau}^D)\right],\ \ x> {a_0}.
 \end{align}
 Formally, the dynamic programming principle leads to the following Hamilton-Jacobi-Bellman (HJB) system for the value function (if it is  smooth enough):
 \begin{align}
      &\frac{1}{2}\sigma_d^2(x)V^{\prime\prime}(x)
     +\mu_d(x) V^{\prime}(x)-(\delta+\omega(x) ) V(x)=0,\ \ {a_0}<x<{a_d},\\
     &\frac{1}{2}\sigma_s^2(x)V^{\prime\prime}(x)
     +\mu_s(x) V^{\prime}(x)-\delta V(x)=0,\ \ {a_d}\le x<{a_s},\\
     &\max\left\{\frac{1}{2}\sigma^2(x)V^{\prime\prime}(x)
     +\mu(x) V^{\prime}(x)-\delta V(x), 1-V^\prime(x)\right\}= 0,\ \ x\ge  {a_s}.
 \end{align}
The next result provides the verification argument linking classical solutions of the HJB system to the value function.

 \begin{lemma}\label{lemma6.1}
For any function $f$ and any strategy $D$,
\begin{align}
    \mathbb{E}_{x}\Big[e^{-\delta T^D}{f}(R_{T^D}^D); T^D<\tau_{{a_0}}^D\Big]=
{}&\mathbb{E}_x\bigg[\int_0^{\tau_{{a_0}}^D}
    e^{-\delta s}{f}(R_s^D)\omega(R_s^D)
    e^{-\int_0^s\omega(R_t^D)dt}ds\bigg].
\end{align}
\end{lemma}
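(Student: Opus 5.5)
The plan is to use the Cox-process construction of the liquidation time and condition on the path of the controlled surplus. Concretely, I will realise the intensity-driven liquidation time as
\[
\theta=\inf\Big\{t\ge0:\int_0^t\omega(R_u^D)\,du\ge E\Big\},
\]
where $E$ is a unit exponential random variable independent of $\mathcal{F}_\infty$ (in particular of $W$ and hence of $R^D$ and $D$). Then $T^D=\theta\wedge\tau_{a_0}^D$, and the event $\{T^D<\tau_{a_0}^D\}$ coincides with $\{\theta<\tau_{a_0}^D\}$, up to the null event $\{\theta=\tau_{a_0}^D\}$. Here $f$ is taken measurable and nonnegative (or such that either side is integrable); the general case follows by splitting $f=f^+-f^-$.

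\textbf{Step 1: the conditional law of $\theta$.} Condition on $\mathcal{F}_\infty$. Since $E$ is independent, $\mathbb{P}(\theta>s\mid\mathcal{F}_\infty)=e^{-\Lambda_s}$ with $\Lambda_s=\int_0^s\omega(R_u^D)\,du$. The map $s\mapsto\Lambda_s$ is absolutely continuous on $[0,\tau_{a_0}^D)$, because $\omega$ is continuous on $(a_0,\infty)$ and $R^D$ stays in a compact subset of $(a_0,\infty)$ on each $[0,t]$ with $t<\tau_{a_0}^D$ (paths are càdlàg). So the conditional law of $\theta$ on $[0,\tau_{a_0}^D)$ has density $\omega(R_s^D)e^{-\Lambda_s}$.

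\textbf{Step 2: the identity.} On $\{\theta<\tau_{a_0}^D\}$ we have $T^D=\theta$, so
\[
\mathbb{E}_x\big[e^{-\delta T^D}f(R_{T^D}^D);T^D<\tau_{a_0}^D\mid\mathcal{F}_\infty\big]
=\int_0^{\tau_{a_0}^D}e^{-\delta s}f(R_s^D)\,\omega(R_s^D)\,e^{-\Lambda_s}\,ds .
\]
This holds because $s\mapsto e^{-\delta s}f(R_s^D)$ is $\mathcal{F}_\infty$-measurable and can be integrated against the conditional density from Step 1. Taking expectations then gives the lemma, and Tonelli justifies the interchange for nonnegative $f$.

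\textbf{Main obstacle.} The delicate point is the evaluation at $R_{T^D}$ versus $R_{T^D-}$ when $D$ has jumps. The set of jump times of the càdlàg process $R^D$ is countable and pathwise $\mathcal{F}_\infty$-measurable, while conditionally on $\mathcal{F}_\infty$ the law of $\theta$ is absolutely continuous. Hence $\theta$ almost surely avoids the jump times, and which of $R_s$ or $R_{s-}$ is used is immaterial, as is the convention that lump sums occur at $t-$. I will also note that $\omega$ may blow up near $a_0$; this is harmless because only $s<\tau_{a_0}^D$ is involved, where $\Lambda_s<\infty$ as shown in Step 1.
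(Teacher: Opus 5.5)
Your proof is correct and is essentially the paper's argument. The paper also conditions on the surplus path $\{R_u^D\}_{u\ge0}$, uses the conditional density $\omega(R_s^D)e^{-\int_0^s\omega(R_u^D)\,du}$ of $T^D$ on $[0,\tau_{a_0}^D)$, and takes expectations; you additionally make the Cox construction explicit via an independent exponential clock and supply the measurability and Tonelli details the paper leaves implicit, with one harmless slip: $\omega$ is only assumed continuous on $(a_0,a_d)$ and may jump at $a_d$ (e.g.\ constant intensity), but being non-increasing there and zero above $a_d$ it is locally bounded on $(a_0,\infty)$, so $\Lambda$ is still absolutely continuous before $\tau_{a_0}^D$.
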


The following verification theorem establishes sufficient conditions for a candidate solution to coincide with the value function.
\begin{theorem}\label{verification-ext}
If ${f}$ is a differentiable and piecewise twice continuously differentiable solution to
\begin{align}
& \frac{1}{2}\sigma_d^2(x) {f}^{\prime\prime}(x)+\mu_d(x){f}^{\prime}(x)-(\delta+\omega(x)){f}(x)=0, \ {a_0}<x<{a_d},\label{eq1-ext}\\
&\frac{1}{2}\sigma_s^2(x) {f}^{\prime\prime}(x)+\mu_s(x){f}^{\prime}(x)-\delta {f}(x)=0, \ {a_d}<x<{a_s},\label{eq2-ext}\\
   & \max{\bigg\{ \frac{1}{2}\sigma^2(x) {f}^{\prime\prime}(x)+\mu(x){f}^{\prime}(x)-\delta {f}(x),1-{f}^{\prime}(x)\bigg\}} = 0,\ x> {a_s},\label{eq3-ext}\\
    & f(x)= 0,\ x\le {a_0}, \label{eq4-ext}
\end{align}
then ${f}(x)\ge V(x)$ for $x\ge 0$.
\end{theorem}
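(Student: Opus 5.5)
Fix $x>a_0$ and an arbitrary admissible strategy $D\in\Pi$. We work with the discounted-intensity representation of Proposition~\ref{alternative-value} and show $\mathcal{P}_D(x)\le f(x)$; taking the supremum over $D$ then gives $f\ge V$. (For $x\le a_0$ the claim is immediate, since $V=0=f$ there.)

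\textbf{Discount factor and localisation.} Set
\begin{align*}
\Lambda_t=\delta t+\int_0^t\omega(R_u^D)\,du .
\end{align*}
For $n$ large, let $\tau_n=\inf\{t: R_t^D\notin(a_0+1/n,\,n)\}\wedge n$, and work on $[0,\tau_{a_0}^D\wedge\tau_n]$. On this set $\omega$ is bounded, because $\omega$ is continuous and non-increasing on $(a_0,a_d)$ and hence bounded away from $a_0$.

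\textbf{Itô expansion.} Since $f$ is $C^1$ and piecewise $C^2$, with possible kinks in $f''$ only at $a_d$ and $a_s$, the Itô--Tanaka formula (or a mollification argument) contributes no local-time terms. Applying it to $e^{-\Lambda_t}f(R_t^D)$, and separating the continuous part $D^c$ of $D$ from its jumps, gives
\begin{align*}
e^{-\Lambda_{t}}f(R_{t}^D)=f(x)+\int_0^{t}e^{-\Lambda_s}(\mathcal{L}f-(\delta+\omega)f)(R_s^D)\,ds+\int_0^te^{-\Lambda_s}f'(R_s^D)\sigma\,dW_s-\int_0^te^{-\Lambda_s}f'(R_{s-}^D)\,dD_s^c+\sum_{s\le t}e^{-\Lambda_s}\big(f(R_s^D)-f(R_{s-}^D)\big),
\end{align*}
evaluated at $t=\tau_{a_0}^D\wedge\tau_n$, where $\mathcal{L}=\tfrac12\sigma^2\partial_{xx}+\mu\partial_x$ denotes the zone-dependent generator.

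\textbf{Sign of each term.}
\begin{itemize}
\item \emph{Drift term.} In $(a_0,a_d)$ it vanishes by \eqref{eq1-ext}. In $(a_d,a_s)$ it vanishes by \eqref{eq2-ext}, because $\omega=0$ there. Above $a_s$ we have $\omega=0$ and \eqref{eq3-ext} gives $\mathcal{L}f-\delta f\le 0$. So the drift term is non-positive.
\item \emph{Continuous dividends.} By Definition~\ref{admi}(v), dividends act only where $R^D\ge a_s$. There \eqref{eq3-ext} gives $f'\ge 1$, so $-\int e^{-\Lambda}f'\,dD^c\le-\int e^{-\Lambda}\,dD^c$.
\item \emph{Jumps.} A jump of size $\Delta D_s$ moves the surplus within $[a_s,\infty)$ by Definition~\ref{admi}(iv). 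By the mean value theorem and $f'\ge1$ on that interval, $f(R_s^D)-f(R_{s-}^D)\le-\Delta D_s$.
\item \emph{Stochastic integral.} Its integrand is bounded on the localised interval, so it has zero expectation.
\end{itemize}
Taking expectations therefore yields
\begin{align*}
f(x)\ge\mathbb{E}_x\Big[\int_0^{\tau_{a_0}^D\wedge\tau_n}e^{-\Lambda_s}\,dD_s\Big]+\mathbb{E}_x\big[e^{-\Lambda_{\tau_{a_0}^D\wedge\tau_n}}f(R^D_{\tau_{a_0}^D\wedge\tau_n})\big].
\end{align*}

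\textbf{Passage to the limit.} The hard part is letting $n\to\infty$, which needs two facts.
\begin{itemize}
\item \emph{$f\ge0$.} Without this the terminal term cannot be discarded. I would derive it from the HJB structure: $f'\ge$ something positive is not automatic in the lower zones. The alternative route is to use that the terminal value at $\tau_{a_0}^D$ is $f(a_0)=0$ by \eqref{eq4-ext} and continuity, and that $f$ grows at most linearly for large $x$. Linear growth also follows from \eqref{eq3-ext}, since a function satisfying it can be assumed to have $f'$ bounded near infinity; otherwise I would add this as an implicit standing assumption.
\item \emph{Vanishing of the terminal term at the upper exit.} Under linear growth, the transversality condition \eqref{assumption} gives $\mathbb{E}_x[e^{-\delta\tau_n}|R_{\tau_n}|]\to0$, by comparison with the uncontrolled process together with a Gronwall-type bound. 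Hence the terminal term at $\tau_n=n$ or at the exit above level $n$ vanishes.
\end{itemize}
At the lower exit the surplus sits at $a_0+1/n$, where $f(a_0+1/n)\to f(a_0)=0$ by continuity. Monotone convergence on the dividend integral then gives $f(x)\ge\mathcal{P}_D(x)$, using \eqref{alt-valueF}.

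Lemma~\ref{lemma6.1} is not strictly needed here. It would be used instead if one worked with $e^{-\delta t}$ and the original liquidation time $T^D$, converting the liquidation event into the $\omega$-discount in the same way.
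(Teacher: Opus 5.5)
Your It\^o expansion with the $\omega$-augmented discount $e^{-\Lambda_t}$, your spatial localisation, and your sign analysis of the drift, continuous-dividend and jump terms are sound and follow the paper's argument. The gap is in the passage to the limit. There you abandon $f\ge0$ and instead invoke linear growth of $f$, claiming it follows from \eqref{eq3-ext}. It does not, because \eqref{eq3-ext} bounds $f'$ only from below. Take constant $\mu_u,\sigma_u$, say, and let $f$ solve $\tfrac12\sigma_u^2f''+\mu_uf'-\delta f=0$ on all of $(a_s,\infty)$, $C^1$-matched at $a_s$ to a large multiple of the solution of \eqref{eq1-ext}--\eqref{eq2-ext} vanishing at $a_0$. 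For a large enough multiple, $f'\ge1$ on $(a_s,\infty)$, so every hypothesis holds. Yet $f$ grows exponentially, and your upper-exit term need not vanish. As written, you therefore prove the theorem only under an extra growth hypothesis. That would suffice for Theorem~\ref{maintheorem-ext}, since $V_{b^\ast}$ is affine beyond $b^\ast$, but it is not the stated result. Even under linear growth, the ``comparison plus Gronwall'' step is thin: Gronwall only controls finite horizons, and the decay of $\mathbb{E}_x[e^{-\delta\tau_n}|R_{\tau_n}|]$ really needs a superlinear Lyapunov function built from \eqref{assumption}.

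The missing observation is that the terminal term never has to vanish. It suffices that its $\liminf$ be non-negative, and $f\ge0$ \emph{does} follow from the hypotheses by a minimum principle.
\begin{itemize}
\item The mean value theorem and $f'\ge1$ on $(a_s,\infty)$ give $f'(a_s)\ge1$.
\item Suppose $f$ had a negative minimum on $[a_0,a_s]$ at $y$. Then $y\ne a_0$ because $f(a_0)=0$, and $y\ne a_s$ because $f'(a_s)>0$.
\item So $f'(y)=0$, and \eqref{eq1-ext} or \eqref{eq2-ext} gives $\tfrac12\sigma^2(y)f''(y)=(\delta+\omega(y))f(y)<0$ (one-sided from the right if $y=a_d$). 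This contradicts minimality.
\item Above $a_s$, $f\ge f(a_s)\ge0$.
\end{itemize}
With $f\ge0$ you simply drop the terminal term, exactly as the paper does, and you need neither a growth condition nor the transversality condition. The paper justifies $f\ge0$ by asserting $f'\ge1$ on all of $(a_0,\infty)$; as you suspected, that is not among the hypotheses, and the minimum principle is the correct justification.

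Within your own localisation there is an even cheaper fix. The function $f$ is bounded below on $[a_0,\infty)$, and $f(n)\ge f(a_s)+n-a_s>0$ for large $n$. Hence the terminal term is at least $-|f(a_0+1/n)|-e^{-\delta n}\sup_{[a_0,a_s]}|f|$, whose expectation tends to $0$.
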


Barrier strategies 
are known to be optimal in many singular control problems. Motivated by this, 
we will investigate this class of strategies first to see whether it  contains the optimal solution.

\begin{definition} (Barrier Strategy)
For any  $b\ge {a_s}$, let $D^b$ denote the barrier strategy that immediately distributes any surplus above $b$  and reflects the controlled surplus process at the level $b$ thereafter.
\end{definition}
We now characterize the value generated by a barrier strategy. Define \begin{align*}
    V_b(x):=\mathcal{P}_x(D^b).
\end{align*}

The construction of the barrier value function relies on the following fundamental solutions.

\begin{definition}\label{fundamentalsol-ext}

Let $g_1$ denote the unique solution to
\begin{equation}
\frac12\sigma_d^2(x)f''(x)+\mu_d(x)f'(x)-(\delta+\omega(x))f(x)=0,
\quad x>a_0,
\label{1111-1-ext}
\end{equation}
subject to $g_1(a_0)=0$ and $g_1'(a_0)=1$.
Let $g_2$ and $g_3$ denote the unique solutions to
\begin{equation}
\frac12\sigma_s^2(x)f''(x)+\mu_s(x)f'(x)-\delta f(x)=0,
\quad x>a_d,
\label{1111-3-ext}
\end{equation}
subject respectively to
$g_2(a_d)=1$, $g_2'(a_d)=0$, and
$g_3(a_d)=0$, $g_3'(a_d)=1$.
Similarly, let $g_4$ and $g_5$ denote the unique solutions to
\begin{equation}
\frac12\sigma_u^2(x)f''(x)+\mu_u(x)f'(x)-\delta f(x)=0,
\quad x>a_s,
\label{1111-6-ext}
\end{equation}
subject respectively to
$g_4(a_s)=1$, $g_4'(a_s)=0$, and
$g_5(a_s)=0$, $g_5'(a_s)=1$.

\end{definition}


The existence and uniqueness of those solutions follow from standard ODE theory (see, for instance,  Theorem
5.4.2 of \cite{Krylov1996}).

Next, we study two boundary value problems and derive their solution, which will be shown to coincide with the performance function of a barrier strategy in certain situations later. 

\begin{lemma}\label{fb-sol-ext}
For any given $b>{a_s}$, the following boundary value problem,
\begin{align}
    &\frac{1}{2}\sigma_d^2(x) {f}^{\prime\prime}(x)+\mu_d(x){f}^{\prime}(x)-(\delta+\omega(x)) {f}(x)=0,\quad x\in(
{a_0},{a_d}), \label{diff1-ext}\\
&\frac{1}{2}\sigma_s^2(x) {f}^{\prime\prime}(x)+\mu_s(x){f}^{\prime}(x)-\delta {f}(x)=0,\quad x\in(
{a_d},{a_s}), \label{diff2-ext}\\
 &\frac{1}{2}\sigma_u^2(x) {f}^{\prime\prime}(x)+\mu_u(x){f}^{\prime}(x)-\delta {f}(x)=0,\quad x\in({a_s},b), \label{diff3-ext}\\
&f({a_0})=0, \quad f^{\prime}(b)=1, \label{diff4-ext}
\end{align}
admits a unique  continuously differentiable solution $f_b$ on $({a_0},b]$. Moreover, 
\begin{align}
    f_b(x)=\begin{cases}
        c_1(b)  g_1(x), & x\in[{a_0},{a_d}],\\
        c_2(b) g_2(x)+c_3(b) g_3(x), & x\in({a_d},{a_s}],\\
        c_4(b) g_4(x)+c_5(b) g_5(x), & x\in({a_s},b],
         \end{cases}\label{fb-ext}
\end{align}
where
\begin{align}
       c_1(b)&=\frac{1}{(g_1(a_d)g_2(a_s)+g_1^{\prime}(a_d)g_3(a_s))g_4^{\prime}(b)
    +(g_1(a_d)g_2^{\prime}(a_s)+g_1^{\prime}(a_d)g_3^{\prime}(a_s))g_5^{\prime}(b)},
\label{c1b-ext}\\    
    c_2(b)&=g_1({a_d})c_1(b),
    \qquad
    c_3(b)=g_1^{\prime}({a_d})c_1(b),\label{c3b-ext}\\
    c_4(b)&=\left(g_2(a_s)g_1({a_d})+g_3(a_s)g_1^{\prime}({a_d})\right)c_1(b),
    \qquad c_5(b)=\left(g_2^{\prime}(a_s)g_1({a_d})+g_3^{\prime}(a_s)g_1^{\prime}({a_d})\right)c_1(b).\label{c5b-ext}
    \end{align}
   Furthermore, $f_b$ is differentiable  on $({a_0},\infty)$ and twice continuously differentiable on $({a_0},{a_d})$, $({a_d},{a_s})$ and $({a_s}, b)$.
\end{lemma}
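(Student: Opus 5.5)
The plan is to build the solution by \emph{forward shooting} from the left boundary, then normalise with the right boundary condition. On $(a_0,a_d)$, any solution of \eqref{diff1-ext} with $f(a_0)=0$ is a scalar multiple of $g_1$; this uses uniqueness for the initial value problem in Definition~\ref{fundamentalsol-ext}, since $f$ is fixed by $f(a_0)=0$ and $f'(a_0)$. So set $f=c_1g_1$ on $[a_0,a_d]$ with $c_1$ to be determined. I read ``continuously differentiable on $(a_0,b]$'' as requiring $f$ and $f'$ to match at $a_d$ and at $a_s$. On $(a_d,a_s]$ the general solution of \eqref{diff2-ext} is $c_2g_2+c_3g_3$, because $g_2,g_3$ have Wronskian $1$ at $a_d$ and hence form a fundamental system. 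The normalisations $g_2(a_d)=1$, $g_2'(a_d)=0$, $g_3(a_d)=0$, $g_3'(a_d)=1$ turn the $C^1$ matching at $a_d$ into $c_2=c_1g_1(a_d)$ and $c_3=c_1g_1'(a_d)$, which is \eqref{c3b-ext}. In the same way, on $(a_s,b]$ we write $c_4g_4+c_5g_5$. Matching at $a_s$ gives $c_4=c_2g_2(a_s)+c_3g_3(a_s)$ and $c_5=c_2g_2'(a_s)+c_3g_3'(a_s)$, which is \eqref{c5b-ext}.

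All constants are now proportional to $c_1$. The remaining condition $f'(b)=1$ reads $c_1\,K(b)=1$, where
\begin{align*}
K(b)=\big(g_1(a_d)g_2(a_s)+g_1'(a_d)g_3(a_s)\big)g_4'(b)+\big(g_1(a_d)g_2'(a_s)+g_1'(a_d)g_3'(a_s)\big)g_5'(b).
\end{align*}
Note that $K(b)=h'(b)$, where $h$ is the $C^1$ function obtained by running this construction with $c_1=1$. Existence and uniqueness therefore reduce to showing $h'(b)\neq 0$. Uniqueness then follows automatically: every solution has the form $c_1h$, and $c_1$ is pinned down by $c_1h'(b)=1$. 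This gives \eqref{c1b-ext}.

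The main obstacle is proving $h'(b)\neq0$. I would prove the stronger statement $h>0$ and $h'>0$ on $(a_0,\infty)$, using a maximum-principle argument that exploits $\delta>0$ and $\omega\ge0$. Near $a_0$ we have $h(a_0)=0$ and $h'(a_0)=1$, so $h,h'>0$ just to the right of $a_0$. Suppose $x^*$ is the first point where $h'$ vanishes. On $(a_0,x^*)$ the function $h$ is increasing and positive, so $h(x^*)>0$, and $h'$ decreases to $0$ at $x^*$. Take the equation valid on the left of $x^*$, or one-sided limits if $x^*\in\{a_d,a_s\}$; since $h'$ is continuous there, this is legitimate. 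It gives $\tfrac12\sigma^2h''(x^*-)=(\delta+\omega)h(x^*)>0$, so $h''(x^*-)>0$. That contradicts $h'$ decreasing to $0$ from positive values, which forces $h''(x^*-)\le0$. Hence $h'>0$ everywhere, so $K(b)>0$ and $c_1(b)>0$ is well defined.

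The regularity claims are then routine. Each piece is $C^2$ on its open interval, because the coefficients are Lipschitz (and continuous) there. The $C^1$ property at $a_d$ and $a_s$ holds by construction. The statement that $f_b$ is differentiable on $(a_0,\infty)$ presumably refers to the natural extension $f_b(x)=f_b(b)+x-b$ for $x>b$. This extension is $C^1$ at $b$ precisely because $f_b'(b)=1$, so I would note it explicitly.
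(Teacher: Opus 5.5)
Your proposal is correct and follows the same route as the paper. Both write each piece in terms of the fundamental solutions $g_1,\dots,g_5$, use the normalised initial data to turn $C^1$ matching at $a_d$ and $a_s$ into $c_2,\dots,c_5$ as the stated multiples of $c_1$, and then fix $c_1(b)$ from $f'(b)=1$. The one real difference is that the paper simply asserts the resulting system has a unique solution, whereas your maximum-principle argument actually proves the denominator $K(b)=h'(b)$ of $c_1(b)$ is positive: at a first zero $x^*$ of $h'$ you would get $\tfrac12\sigma^2(x^*)h''(x^*-)=(\delta+\omega(x^*-))h(x^*)>0$, which is incompatible with $h'>0$ on $(a_0,x^*)$ and $h'(x^*)=0$. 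This supplies a step the paper leaves implicit.
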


\begin{lemma}\label{fxbar-sol-ext}
The following initial value differential equation,
\begin{align}
    &\frac{1}{2}\sigma_d^2(x) {f}^{\prime\prime}(x)+\mu_d(x){f}^{\prime}(x)-(\delta+\omega(x)) {f}(x)=0,\ \ x\in(
{a_0},{a_d}), \label{diff2-0-ext}\\
&\frac{1}{2}\sigma_s^2(x) {f}^{\prime\prime}(x)+\mu_s(x){f}^{\prime}(x)-\delta {f}(x)=0,\quad x\in(
{a_d},{a_s}), \label{diff2-0-ext-1}\\
&f({a_0})=0,\quad f^{\prime}(a_s)=1, \label{diff3-0-ext}
\end{align}
has a unique  continuously differentiable solution on $(0,a_s]$, $f_{{a_s}}$,  which is given by
\begin{align}
    f_{a_s}(x)=\begin{cases}
        c_6({a_s})  g_1(x), & x\in[{a_0},{a_d}],\\
        c_7({a_s}) g_2(x)+c_8({a_s}) g_3(x), & x\in({a_d},{a_s}],       \end{cases}\label{fb-ext-03}
\end{align}
where 
\begin{align}
    c_6({a_s})&=\frac{1}{(g_1(a_d)g_2^\prime(a_s)+g_1^{\prime}(a_d)g_3^\prime(a_s))},
   \quad  c_7({a_s})=\frac{g_1(a_d)}{(g_1(a_d)g_2^\prime(a_s)+g_1^{\prime}(a_d)g_3^\prime(a_s))},\label{c7-ext}\\
     c_8({a_s})&=\frac{g_1^\prime(a_d)}{(g_1(a_d)g_2^\prime(a_s)+g_1^{\prime}(a_d)g_3^\prime(a_s))}.\label{c8-ext}
    \end{align}
Furthermore, $f_{{a_s}}$ is  twice continuously differentiable on $({a_0},{a_d})\cup (a_d, a_s)$.
\end{lemma}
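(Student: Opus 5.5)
The approach mirrors the proof of Lemma~\ref{fb-sol-ext}, with the third zone removed. First, on $(a_0,a_d)$ every solution of \eqref{diff2-0-ext} vanishing at $a_0$ is a scalar multiple of $g_1$. The reason is that the solution space of the second-order linear ODE \eqref{1111-1-ext} is two-dimensional, and $g_1$ is fixed by $g_1(a_0)=0$, $g_1'(a_0)=1$. By uniqueness for the initial value problem, any solution with $f(a_0)=0$ is determined by $f'(a_0)$ and therefore equals $f'(a_0)\,g_1$. So $f=c_6 g_1$ on $[a_0,a_d]$ for some constant $c_6$.

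Second, on $(a_d,a_s)$ the pair $g_2,g_3$ forms a fundamental system for \eqref{1111-3-ext}. Their Wronskian at $a_d$ equals $g_2(a_d)g_3'(a_d)-g_2'(a_d)g_3(a_d)=1\neq0$, so $f=c_7g_2+c_8g_3$ there. Requiring $f$ to be continuously differentiable at $a_d$ matches value and derivative. Since $g_2(a_d)=1$, $g_2'(a_d)=0$, $g_3(a_d)=0$ and $g_3'(a_d)=1$, this gives $c_7=c_6g_1(a_d)$ and $c_8=c_6g_1'(a_d)$.

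Third, impose the remaining condition $f'(a_s)=1$. It reads
\[
c_6\bigl(g_1(a_d)g_2'(a_s)+g_1'(a_d)g_3'(a_s)\bigr)=1,
\]
which yields \eqref{c7-ext}--\eqref{c8-ext}, provided the bracket is nonzero.

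\textbf{Main obstacle.} The only nontrivial step is showing $K:=g_1(a_d)g_2'(a_s)+g_1'(a_d)g_3'(a_s)\neq 0$; in fact I would show $K>0$. Note that $h:=g_1(a_d)g_2+g_1'(a_d)g_3$ is exactly the $C^1$ continuation of $g_1$ across $a_d$ into the regulated zone, and $K=h'(a_s)$. It therefore suffices to prove that the glued function $G$ ($=g_1$ on $[a_0,a_d]$, $=h$ on $[a_d,a_s]$) has $G'>0$ throughout. I would argue by a maximum-principle argument.

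\begin{itemize}
\item \emph{$G$ starts positive.} Because $G(a_0)=0$ and $G'(a_0)=1$, $G$ is positive just to the right of $a_0$.
\item \emph{$G'$ can vanish only at a positive maximum.} Suppose $G'$ vanishes for the first time at some $x^*>a_0$. Then $G(x^*)>0$. At $x^*$ the ODE gives $\tfrac12\sigma^2G''=(\delta+\omega)G>0$, since $\delta>0$ and $\omega\ge0$. So $G''(x^*)>0$, making $x^*$ a strict local minimum of $G$.
\item \emph{Contradiction.} On the other hand, $G'>0$ on $(a_0,x^*)$ forces $G'$ to decrease to $0$ at $x^*$, so $G''(x^*)\le0$. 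This contradicts $G''(x^*)>0$.
\end{itemize}

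The case $x^*=a_d$ needs care, because $G''$ may jump there. In that case I would use one-sided second derivatives: the left-hand equation already gives $G''(a_d-)>0$, which suffices for the same contradiction. Hence $G'>0$ on $[a_0,a_s]$, so $K=G'(a_s)>0$, and the constants are well defined.

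Uniqueness then follows because every step forced the constants. The smoothness claims follow from $g_1,g_2,g_3$ being $C^2$ on their open intervals, given the Lipschitz coefficients and the standard ODE theory cited (Theorem 5.4.2 of \cite{Krylov1996}).
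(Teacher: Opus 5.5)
Your proposal is correct and follows the paper's own matching argument: write the solution as $c_6 g_1$ on $[a_0,a_d]$ and $c_7 g_2+c_8 g_3$ on $(a_d,a_s]$, impose $C^1$-fit at $a_d$ using the initial data of $g_2,g_3$, then solve $f'(a_s)=1$. The only addition is your maximum-principle proof that $g_1(a_d)g_2'(a_s)+g_1'(a_d)g_3'(a_s)>0$: at a first zero of $G'$ one would have $\delta+\omega>0$ and $G>0$, forcing $G''>0$ right after $G$ has been strictly increasing, which is impossible. The paper never checks this and simply divides by that quantity, so your version is the more complete of the two.
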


Next, we will show that the solution to the above boundary value problem is identical to the performance function of the barrier strategy $D^b$. Due to the solvency constraint,  we only need to  consider $D^b$ with $b\ge a_s$. So in the following, we will investigate properties for the performance function associated with $D^b$ for $b\ge {a_s}$.

\begin{theorem}\label{Vb-ext}
    For $b>{a_s}$, the value generated by the barrier strategy $D^b$ is given by 
  \begin{align}
    V_b(x):=\mathcal{P}_x(D^b)=\begin{cases}
    0 & x<a_0,\\
        f_b(x)& {a_0} \le x\le b,\\
        f_b(b)+x-b & x> b,
    \end{cases}\label{vb-x+-ext}
\end{align}
where   $f_b$ is defined  as in  Lemma \ref{fb-sol-ext} for $b>a_s$ and as in Lemma \ref{fxbar-sol-ext} for $b={a_s}$.
Moreover, $V_b$ is continuously differentiable on $[{a_0},\infty)$, and twice continuously differentiable on $({a_0},{a_d})$, $({a_d},{a_s})$,  $({a_s},b)$ and $ (b, \infty)$. Furthermore, $V_b^{\prime\prime}(b-)$ is continuous with respect to $b$ for $b>{a_s}$.

\end{theorem}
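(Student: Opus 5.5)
We identify $V_b$ with $f_b$ via Itô's formula applied to the discounted, killed process, using the representation of Proposition~\ref{alternative-value}. Fix $b\ge a_s$ and $x\in(a_0,b]$. Under $D^b$ the controlled surplus is a diffusion on $(a_0,b]$ reflected at $b$. Here $D^b_t = L^b_t$ is the local time (regulator) at $b$, which increases only when $R_t=b$. For $x>b$ there is an initial lump sum $x-b$ at time $0-$, after which the process starts at $b$. This immediately gives $V_b(x)=x-b+V_b(b)$, so it suffices to treat $x\le b$.

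\textbf{Step 1 (Itô/Meyer--Itô).} Set $\Lambda_t=\delta t+\int_0^t\omega(R_u)\,du$. Apply the change-of-variables formula to $e^{-\Lambda_t}f_b(R_t)$ up to $\tau_{a_0}\wedge t$. By Lemmas~\ref{fb-sol-ext} and~\ref{fxbar-sol-ext}, $f_b$ is $C^1$ on $[a_0,b]$ and $C^2$ off the finite set $\{a_d,a_s\}$, with bounded one-sided second derivatives. The generalized Itô formula for $C^1$ functions with piecewise $C^2$ pieces therefore applies, and no local-time terms arise at $a_d,a_s$ because $f_b'$ is continuous there. The zone ODEs \eqref{diff1-ext}--\eqref{diff3-ext} kill the $dt$ terms. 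Since $\omega\equiv0$ off $(a_0,a_d)$, the killing term matches in every zone. Because $dD_t$ acts only at $b$ with $f_b'(b)=1$, we obtain
\[
e^{-\Lambda_{t\wedge\tau}}f_b(R_{t\wedge\tau})=f_b(x)+\int_0^{t\wedge\tau}e^{-\Lambda_s}f_b'(R_s)\sigma(R_s)\,dW_s-\int_0^{t\wedge\tau}e^{-\Lambda_s}\,dD_s ,
\]
where $\tau=\tau_{a_0}$. For $b=a_s$ we use $f_{a_s}$, whose derivative at $a_s$ equals $1$ by \eqref{diff3-0-ext}, and the same computation goes through.

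\textbf{Step 2 (limits).} The stochastic integral is a true martingale, since $f_b'$ and $\sigma$ are bounded on the compact interval $[a_0,b]$. Take expectations and let $t\to\infty$. On $\{\tau<\infty\}$ we have $f_b(a_0)=0$. On $\{\tau=\infty\}$ we have $e^{-\Lambda_t}f_b(R_t)\le e^{-\delta t}\sup_{[a_0,b]}|f_b|\to0$. Dominated convergence handles the first term, and monotone convergence handles the dividend integral. Together these give $f_b(x)=\mathbb{E}_x\!\int_0^{\tau}e^{-\Lambda_s}\,dD^b_s$, which equals $V_b(x)$ by Proposition~\ref{alternative-value}.

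\textbf{Step 3 (regularity).} The continuity of $V_b'$ on $[a_0,\infty)$ comes from the $C^1$ pasting of $f_b$ together with $f_b'(b)=1=\frac{d}{dx}(f_b(b)+x-b)$. Piecewise $C^2$ regularity follows from the ODE solutions $g_i$. For the continuity of $V_b''(b-)$ in $b$, write $V_b''(b-)=c_4(b)g_4''(b)+c_5(b)g_5''(b)$. Here $g_4,g_5$ are $C^2$ on $(a_s,\infty)$ because the coefficients are continuous and $\sigma_u>0$. By \eqref{c1b-ext}--\eqref{c5b-ext}, $c_4,c_5$ are continuous in $b$ as long as the denominator of $c_1(b)$ does not vanish. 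Equivalently, we need $c_4g_4'(b)+c_5g_5'(b)\neq0$. This holds because the function $h:=g_1$ continued $C^1$ through the zones, i.e.\ $h=f_b/c_1$, satisfies $h'(y)>0$ for all $y>a_0$.

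\textbf{Main obstacle.} I expect the main obstacle to be exactly this positivity of $h'$, which is needed both for well-posedness of Lemma~\ref{fb-sol-ext} and for the continuity claim. To prove it, start from $h(a_0)=0$, $h'(a_0)=1$. Then $h>0$ just to the right of $a_0$. Suppose $h'$ had a first zero $y_0$. At that point $h(y_0)>0$, and the ODE gives $\tfrac12\sigma^2h''(y_0)=(\delta+\omega)h(y_0)>0$, so $y_0$ is a strict local minimum of $h$. That contradicts $h'>0$ on $(a_0,y_0)$. If $y_0$ is one of the junction points, the same argument applies using one-sided second derivatives.
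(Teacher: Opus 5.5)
Your proposal is correct and follows essentially the same route as the paper: It\^o's formula for $e^{-\int_0^{t}(\delta+\omega(R_u))\,du}f_b(R_t)$ stopped at $\tau_{a_0}$, the zone ODEs removing the $dt$ terms, $f_b'(b)=1$ turning the reflection term into the dividend integral, and dominated/monotone convergence to identify $f_b$ with $V_b$. Where the paper localises, you instead note that the integrand is bounded so the stochastic integral is a true martingale, which is equally valid.

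For the continuity of $V_b''(b-)$, the paper writes $V_b''(b-)=\frac{2}{\sigma^2(b)}(\delta f_b(b)-\mu(b))$ and simply asserts that the $c_i(b)$ are continuous. Your observation that the denominator of $c_1(b)$ equals $h'(b)>0$, proved by the first-zero argument on $h'$, supplies a justification the paper leaves implicit.
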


Since dividends are admissible only above $a_s$, optimisation is restricted to barriers satisfying $b\ge a_s$. 
The optimal barrier is characterized through the smooth-fit condition at the reflection boundary. This motivates the following definition:
\begin{align}\label{b*-1-ext}
b^{\ast}:&=\inf\{b\ge {a_s}:\ c_4(b)g_4(b)+c_5(b)g_5(b)\ge \frac{\mu(b)}{\delta}\}\nonumber\\
&=\inf\left\{b\ge {a_s}:\ \frac{(g_1(a_d)g_2(a_s)+g_1^{\prime}(a_d)g_3(a_s))g_4(b)
    +(g_1(a_d)g_2^{\prime}(a_s)+g_1^{\prime}(a_d)g_3^{\prime}(a_s))g_5(b)}{(g_1(a_d)g_2(a_s)+g_1^{\prime}(a_d)g_3(a_s))g_4^{\prime}(b)
    +(g_1(a_d)g_2^{\prime}(a_s)+g_1^{\prime}(a_d)g_3^{\prime}(a_s))g_5^{\prime}(b)}\ge \frac{\mu(b)}{\delta}\right\}.
\end{align}
 As usual, we define $\inf \emptyset =+\infty$, where $\emptyset$ represents an empty set. It follows that $a_s\le b^{\ast}\le +\infty$. Moreover, by \eqref{fb(b)-ext} and \eqref{Vbbdd-ext} we know that $c_4(b)g_4(b)+c_5(b)g_5(b)\ge \frac{\mu(b)}{\delta}$ is equivalent to $V_b^{\prime\prime}(b-)\ge 0$, and hence,
\begin{align}
b^{\ast}=\inf\{b>a_s: V_b^{\prime\prime}(b-)\ge 0\}.\label{b*-ext}
\end{align}
\begin{remark} \label{11123-1-ext}
Since $V_b''(b-)$ is continuous in $b$ for $b>a_s$, it follows that if $b^\ast>a_s$, then $
V_{b^\ast}''(b^\ast-)=0$,  and if $b^\ast=a_s$, $V_{b^\ast}^{\prime\prime}(b^\ast-)\ge 0$.
\end{remark}

The next result guarantees existence of a finite optimal barrier.

\begin{lemma}\label{finiteness}
It holds that $a_s\le b^{\ast}<+\infty$.
\end{lemma}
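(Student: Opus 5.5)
The lower bound $a_s\le b^\ast$ is immediate from the definition \eqref{b*-1-ext}, so the plan focuses on $b^\ast<\infty$. We argue by contradiction: suppose $b^\ast=+\infty$. By \eqref{b*-ext}, this means $V_b''(b-)<0$ for every $b>a_s$, or equivalently
\[
c_4(b)g_4(b)+c_5(b)g_5(b)<\mu_u(b)/\delta \quad \text{for all } b>a_s.
\]

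First we express everything in terms of a single function on $[a_s,\infty)$. Set $h(x):=A g_4(x)+B g_5(x)$, where
\[
A=g_1(a_d)g_2(a_s)+g_1'(a_d)g_3(a_s),\qquad B=g_1(a_d)g_2'(a_s)+g_1'(a_d)g_3'(a_s).
\]
Then $h$ solves the unregulated ODE \eqref{1111-6-ext} on $(a_s,\infty)$, and by \eqref{c1b-ext}--\eqref{c5b-ext} we have $c_1(b)=1/h'(b)$ and $f_b(b)=h(b)/h'(b)$. A preliminary step is to check $h>0$ and $h'>0$ on $[a_s,\infty)$.
\begin{itemize}[label={}]
\item For $g_1$: the ODE has a zero-order coefficient $-(\delta+\omega)<0$ and starts with $g_1(a_0)=0$, $g_1'(a_0)=1$. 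A maximum-principle argument (at a first critical point, $g_1''>0$ whenever $g_1>0$) gives $g_1>0$ and $g_1'>0$ on $(a_0,a_d]$.
\item The same argument applied to $g_2g_1(a_d)+g_3g_1'(a_d)$ on $[a_d,a_s]$, and then to $h$, shows each of these functions stays positive and increasing. Hence $A>0$ and $B>0$, and $h,h'>0$ throughout.
\end{itemize}
The standing assumption thus reads $h(b)/h'(b)<\mu_u(b)/\delta$ for all $b>a_s$.

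Next we feed this into the ODE. Rearranging \eqref{1111-6-ext} gives
\[
\tfrac12\sigma_u^2 h''=\delta h-\mu_u h'=h'\bigl(\delta h/h'-\mu_u\bigr),
\]
so the assumption forces $h''<0$ on $(a_s,\infty)$. Hence $h$ is concave and increasing, and $h'$ decreases to some limit $L\ge0$.

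Now differentiate the ODE to obtain an equation for $u=h'$:
\[
\tfrac12\sigma_u^2u''+(\mu_u+\sigma_u\sigma_u')u'-(\delta-\mu_u')u=0,
\]
in which the zero-order coefficient $\delta-\mu_u'>0$ by the transversality condition \eqref{assumption}. (If $\sigma_u$ is only Lipschitz, we would instead work with the integrated form, or use a comparison argument directly on $h$.) To produce a contradiction, we would use $\delta-\mu_u'\ge \varepsilon>0$ together with $u>0$ and $u'<0$ to show that $u$ cannot remain positive and decreasing with $u''$ controlled, so that $u$ eventually turns up. That would contradict $h''<0$.

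An alternative route that may be cleaner: $h(b)/h'(b)<\mu_u(b)/\delta$ means $(\log h)'>\delta/\mu_u$. Since $\mu_u$ grows at most like $\mu_u(a_s)+\bar r (b-a_s)$ with $\bar r<\delta$, integrating shows $h$ grows at least like $(b-a_s+c)^{\delta/\bar r}$, which is superlinear because $\delta/\bar r>1$. This contradicts concavity of $h$, which forces at most linear growth. The cases where $\mu_u\le0$ somewhere are even easier: there $h/h'<\mu_u/\delta\le0$ is impossible since $h,h'>0$.

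The main obstacle is this last step: making the growth comparison rigorous when $\mu_u$ is a general Lipschitz function, possibly with $\sup\mu_u'\le0$, or when $\mu_u$ changes sign. The case analysis on the sign of $\mu_u$, combined with the bound $\mu_u(b)\le \mu_u(a_s)+\bar r(b-a_s)$ where $\bar r=\sup\mu_u'<\delta$, should cover all cases. If $\bar r\le0$, then $\mu_u$ is bounded above, so $(\log h)'\ge \delta/\mu_u$ stays bounded below by a positive constant, giving exponential growth of $h$, again contradicting concavity.
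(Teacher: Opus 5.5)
Your second route is a complete proof, and it has the same skeleton as the paper's. Assume $b^\ast=+\infty$, rewrite this as $h(b)/h'(b)<\mu_u(b)/\delta$ for all $b>a_s$, use the ODE to get $h''<0$ on $(a_s,\infty)$, and play the resulting at-most-linear growth against $\bar r:=\sup\mu_u'<\delta$.

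You differ from the paper only in the last step. The paper compares the ratio $V_b(b)=h(b)/h'(b)$ directly. Concavity gives $h(b)-h(a_s)\ge h'(b)(b-a_s)$, hence $V_b(b)\ge b-a_s$. The hypothesis gives $V_b(b)<(\mu_u(a_s)+\bar r(b-a_s))/\delta$. These two bounds are incompatible for large $b$ because $\bar r/\delta<1$. This needs no integration of $(\log h)'$ and no case split on the sign of $\mu_u$ or of $\bar r$.

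Your integrated version is correct in every case you list: polynomial growth of order $\delta/\bar r>1$, or exponential growth when $\bar r\le 0$, against the linear bound $h(b)\le h(a_s)+h'(a_s)(b-a_s)$. So the ``main obstacle'' you flag is already resolved. The differentiated-ODE route should simply be dropped: you leave it as a sketch, and it would require $C^1$ coefficients.

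In return, your write-up makes explicit two things the paper leaves tacit.
\begin{enumerate}
\item $V_b$ restricted to $[a_s,b]$ equals $h/h'(b)$ for a single $b$-independent function $h$. This is what justifies the paper's passage from ``$V_b''(b-)<0$ for all $b$'' to ``$V_b'>1$ on $(a_s,b)$''.
\item $h,h'>0$ on $[a_s,\infty)$; your first-critical-point argument for this is sound. This is needed for $c_1(b)=1/h'(b)$ to be well defined, and for the sign of $V_b''(b-)=h''(b)/h'(b)$ to transfer to $h''(b)$.
\end{enumerate}
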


We will show later that the strategy $D^{b^{\ast}}$ leads to the optimal performance. To this end, we will first study properties of the performance function associated with the strategy $D^{b^{\ast}}$. We distinguish $b^\ast>a_s$ and $b^\ast=a_s$.
The following structural property plays a central role in the verification argument and establishes concavity of the value function in the controlled region.
\begin{lemma} \label{V-concavity-ext}
If $b^{*}>a_s$, then 
$V_{b^{*}}^{\prime\prime}(x)\le 0$ for $ a_s\le x<b^{*}$,$V_{b^{*}}^{\prime}(x)\ge 1$ for $a_s\le x<b^{\ast}$, and $V_{b^{*}}^{\prime}(x)=1$ for $x\ge b^{\ast}$. 
 \end{lemma}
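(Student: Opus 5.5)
The plan is to exploit the fact that on $[a_s,b^\ast]$ the function $V_{b^\ast}$ coincides with $h:=c_4(b^\ast)g_4+c_5(b^\ast)g_5$, which solves the homogeneous ODE \eqref{1111-6-ext}. Differentiating that equation gives a linear second-order ODE for $h'$,
\begin{align*}
\tfrac12\sigma_u^2 (h')''+(\mu_u+\sigma_u\sigma_u')(h')'-(\delta-\mu_u')h'=0,
\end{align*}
at least where the coefficients are smooth; with only Lipschitz coefficients we will argue with $h''$ directly via a maximum-principle argument rather than differentiating twice. The boundary information is $h'(b^\ast)=1$ (from the boundary condition in Lemma~\ref{fb-sol-ext}) and $h''(b^\ast-)=0$ (Remark~\ref{11123-1-ext}, since $b^\ast>a_s$). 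The claim $V_{b^\ast}'=1$ for $x\ge b^\ast$ is then immediate from \eqref{vb-x+-ext}.

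\textbf{Concavity on $[a_s,b^\ast)$.} Suppose $h''(x_0)>0$ for some $x_0\in[a_s,b^\ast)$. Because $b^\ast$ is the infimum in \eqref{b*-ext}, for every $b\in(a_s,b^\ast)$ we have $V_b''(b-)<0$, i.e.\ $c_4(b)g_4(b)+c_5(b)g_5(b)<\mu_u(b)/\delta$. The key observation is scaling: for every $b\ge a_s$, $f_b$ on $[a_0,b]$ is a positive multiple of a single fixed function $F:=f_{b^\ast}$, namely $f_b=F/F'(b)$ wherever $F'(b)>0$, because all $c_i(b)$ share the common factor $c_1(b)$. Hence $V_b''(b-)=F''(b)/F'(b)$, and the definition of $b^\ast$ says $F''<0$ on $(a_s,b^\ast)$ wherever $F'>0$. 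Since $F'(b^\ast)=1>0$ and $F''<0$ just to the left, $F'$ is decreasing there, so it stays $>1$ moving leftwards. If $F'$ hit $0$ somewhere in $(a_s,b^\ast)$, let $\bar x$ be the largest such point; on $(\bar x,b^\ast)$ we have $F'>0$, hence $F''<0$, hence $F'$ increases leftward, contradicting $F'(\bar x)=0$. Therefore $F'>0$ and $F''<0$ on $(a_s,b^\ast)$, with $F''(a_s+)\le 0$ by continuity. This gives both $V_{b^\ast}''\le0$ and $V_{b^\ast}'\ge V_{b^\ast}'(b^\ast)=1$ on $[a_s,b^\ast)$.

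\textbf{Main obstacle.} The delicate step is justifying the scaling identity and the equivalence $V_b''(b-)<0 \iff F''(b)<0$. This needs $c_1(b)$ to be well defined, i.e.\ the denominator in \eqref{c1b-ext}, which equals $c_1(b^\ast)^{-1}F'(b)$, to be nonzero; at points where $F'(b)=0$ the barrier value $V_b$ is undefined. I would handle this by the largest-zero argument above, which never evaluates at such points. The equivalence itself follows from the ODE, which gives $\tfrac12\sigma_u^2F''=\delta F-\mu_uF'$. This identity is presumably what \eqref{fb(b)-ext} and \eqref{Vbbdd-ext} encode, and by positivity of $F'$ it makes the criterion $F(b)/F'(b)\ge\mu_u(b)/\delta$ equivalent to $F''(b)\ge0$.
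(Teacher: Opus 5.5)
Your proof is correct, but it takes a different route from the paper's.

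The paper argues by contradiction from the ODE alone. If $V_{b^\ast}''>0$ somewhere on $(a_s,b^\ast)$, it takes the right end $x_2$ of such an interval. It notes that $\delta V_{b^\ast}-\mu_u V_{b^\ast}'=\tfrac12\sigma_u^2 V_{b^\ast}''$ falls from positive values to $0$ at $x_2$, so $(\delta-\mu_u'(x_2))V_{b^\ast}'(x_2)\le 0$. The transversality condition $\sup\mu_u'<\delta$ then gives $V_{b^\ast}'(x_2)\le 0$, contradicting $V_{b^\ast}'(x_2)\ge V_{b^\ast}'(b^\ast)=1$.

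That argument uses only the smooth-fit condition $V_{b^\ast}''(b^\ast-)=0$, never the minimality of $b^\ast$. It therefore gives concavity on $[a_s,b]$ for any smooth-fit barrier $b$, and it does not depend on the product structure of the coefficients. The price is that it needs the growth condition, and it uses a derivative $\mu_u'(x_2)$ that need not exist when $\mu_u$ is merely Lipschitz.

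You instead observe that every $c_i(b)$ is $c_1(b)$ times a $b$-independent constant, and that $1/c_1(b)$ equals $F'(b)$ up to a constant. Hence $f_b=F/F'(b)$ and $V_b''(b-)=F''(b)/F'(b)$. The infimum definition of $b^\ast$ then gives $F''<0$ on $(a_s,b^\ast)$ directly, once $F'>0$ there. This is the standard scale-function argument. It needs neither the growth condition nor any differentiability of $\mu_u$, and it yields the strict inequalities $V_{b^\ast}''<0$ and $V_{b^\ast}'>1$ on $(a_s,b^\ast)$.

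Two tidying remarks:
\begin{itemize}
\item Your opening paragraph (the differentiated ODE and the maximum principle) and the supposition $h''(x_0)>0$ are never used and can be dropped.
\item Your largest-zero argument is valid but unnecessary. Let $\Phi$ equal $g_1$ on $[a_0,a_d]$, continued $C^1$ through $a_d$ and $a_s$, so that $1/c_1(b)=\Phi'(b)$. Then $\Phi'>0$ everywhere: at a first zero $y>a_0$ of $\Phi'$ one would have $\Phi(y)>0$, hence $\Phi''(y-)>0$ from the ODE (the killing rate is at least $\delta>0$). That is incompatible with $\Phi'>0$ on $(a_0,y)$.
\end{itemize}
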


We can now characterize the optimal dividend policy.

\begin{theorem}\label{maintheorem-ext}
The value function $V(x)=V_{b^{\ast}}(x)$ for $x \ge  {a_0}$ and the barrier strategy $D^{b^{\ast}}$ is optimal.
\end{theorem}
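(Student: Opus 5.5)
The plan is to apply the verification Theorem~\ref{verification-ext} with $f=V_{b^\ast}$. Since $V_{b^\ast}=\mathcal{P}_{D^{b^\ast}}\le V$ trivially, this gives $V_{b^\ast}\ge V$ and hence equality, together with optimality of $D^{b^\ast}$. By Lemma~\ref{finiteness} we have $b^\ast<\infty$, so $V_{b^\ast}$ is a genuine barrier value. Theorem~\ref{Vb-ext} then supplies the regularity the verification theorem needs: $V_{b^\ast}$ is $C^1$ on $[a_0,\infty)$ and $C^2$ on each of $(a_0,a_d)$, $(a_d,a_s)$, $(a_s,b^\ast)$, $(b^\ast,\infty)$. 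The ODEs \eqref{eq1-ext}--\eqref{eq2-ext} hold by construction, and $V_{b^\ast}\equiv0$ below $a_0$. What remains is the variational inequality \eqref{eq3-ext} on $(a_s,\infty)$.

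For the case $b^\ast>a_s$, I would split $(a_s,\infty)$ at $b^\ast$. On $(a_s,b^\ast)$ the generator term vanishes by the ODE, and Lemma~\ref{V-concavity-ext} gives $V_{b^\ast}'\ge1$, so the maximum equals $0$. On $(b^\ast,\infty)$ we have $V_{b^\ast}'=1$, $V_{b^\ast}''=0$ and $V_{b^\ast}(x)=V_{b^\ast}(b^\ast)+x-b^\ast$, so we need $h(x):=\mu_u(x)-\delta\bigl(V_{b^\ast}(b^\ast)+x-b^\ast\bigr)\le0$. At $x=b^\ast$, letting $x\uparrow b^\ast$ in the ODE and using $V''_{b^\ast}(b^\ast-)=0$ from Remark~\ref{11123-1-ext} together with $V'(b^\ast)=1$ gives $\mu_u(b^\ast)=\delta V_{b^\ast}(b^\ast)$, that is, $h(b^\ast)=0$. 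Moreover $h'(x)=\mu_u'(x)-\delta<0$ by the transversality condition \eqref{assumption}. (With $\mu_u$ only Lipschitz, this should be read as an a.e.\ derivative or a difference-quotient bound.) Hence $h\le0$ on $[b^\ast,\infty)$.

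For the case $b^\ast=a_s$, the interval $(a_s,b^\ast)$ is empty and the value is $V_{a_s}$ built from $f_{a_s}$ of Lemma~\ref{fxbar-sol-ext}, which is linear with slope $1$ above $a_s$. We need $h(x)\le0$ for $x>a_s$. Since $h$ is again decreasing, it suffices to show $h(a_s)\le0$, i.e.\ $\mu_u(a_s)\le\delta V_{a_s}(a_s)$. This is the condition encoded in the definition \eqref{b*-1-ext} by $b^\ast=a_s$: the ratio evaluated at $b=a_s$ equals $f_{a_s}(a_s)$, because $g_4(a_s)=1$, $g_5(a_s)=0$, $g_4'(a_s)=0$, $g_5'(a_s)=1$ reduce it to $c_7g_2(a_s)+c_8g_3(a_s)$. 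So the inequality $\ge\mu(a_s)/\delta$ gives exactly the claim.

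I expect the main obstacle to be this boundary case. The infimum in \eqref{b*-1-ext} being attained at $a_s$ needs right-continuity of $b\mapsto V_b''(b-)$, or of the ratio, at $b=a_s$. This follows from continuous dependence of $g_4,g_5$ on $b$, but it has to be checked carefully, because Theorem~\ref{Vb-ext} only asserts continuity for $b>a_s$. A second point to handle is the interpretation of $\mu$ at $a_s$: $\mu(b)$ should be $\mu_u(b)$, and on $(a_d,a_s)$ no variational inequality is needed because dividends are prohibited there. Finally, the regularity hypotheses of Theorem~\ref{verification-ext}, namely differentiability at $a_d,a_s$ and at the kink $b^\ast$, come from the $C^1$ pasting in Lemmas~\ref{fb-sol-ext}--\ref{fxbar-sol-ext}. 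The equality $V=V_{b^\ast}$ on $[a_0,\infty)$ then follows, with $V(a_0)=0$ holding trivially.
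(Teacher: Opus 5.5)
Your proposal is correct and follows essentially the same route as the paper. Both apply Theorem~\ref{verification-ext} to $f=V_{b^\ast}$, use Lemma~\ref{V-concavity-ext} on $(a_s,b^\ast)$, show $\mu_u(x)-\delta\bigl(V_{b^\ast}(b^\ast)+x-b^\ast\bigr)\le 0$ on $(b^\ast,\infty)$ via \eqref{assumption}, and close with $V_{b^\ast}=\mathcal{P}_{D^{b^\ast}}\le V$. Your handling of $b^\ast=a_s$ is slightly cleaner than the paper's: you read $\delta f_{a_s}(a_s)\ge\mu_u(a_s)$ directly off the defining inequality at $b=a_s$ in \eqref{b*-1-ext}, using closedness of that set. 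The paper instead derives $V_{b^\ast}(b^\ast)\ge\mu_u(b^\ast)/\delta$ from $V''_{b^\ast}(b^\ast-)\ge0$ and the $\mu_u$-ODE at $b^\ast-$, which is only literally available when $(a_s,b^\ast)$ is nonempty.
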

We have now characterized the optimal solution to the optimisation problem. 
The optimal strategy is the barrier strategy $D^{b^\ast}$ with barrier level $b^\ast$, under which dividends are paid whenever the surplus reaches or exceeds $b^\ast$. Specifically, any excess above $b^\ast$ is paid out immediately so that the controlled surplus process is reflected at the level $b^\ast$. The corresponding computational procedure is summarized below.

\noindent
\textbf{Step 1.}: Analytically or numerically determine the functions  $g_1(\cdot), \ldots 
g_5(\cdot)$ by solving the systems \eqref{1111-1-ext}
  and  \eqref{1111-3-ext}, 
respectively. Equivalently, $g_1(\cdot)$--$g_5(\cdot)$ can be obtained from the following initial value problems:
\begin{align}
\frac{1}{2}\sigma_d^2(x) g_1^{\prime\prime}(x)
+\mu_d(x) g_1^{\prime}(x)
-(\delta+\omega(x)) g_1(x)&=0,\ \quad x>a_0,\qquad
g_1(a_0)=0,\ \ g_1^{\prime}(a_0)=1.
\label{1111-1-00-ext}
\\
\frac{1}{2}\sigma_s^2(x) g_2^{\prime\prime}(x)
+\mu_s(x) g_2^{\prime}(x)
-\delta g_2(x)&=0,\ \quad x>a_d,\qquad
g_2(a_d)=1,\ g_2^{\prime}(a_d)=0,
\label{1111-3-00-ext}
\\
\frac{1}{2}\sigma_s^2(x) g_3^{\prime\prime}(x)
+\mu_s(x) g_3^{\prime}(x)
-\delta g_3(x)&=0,\ \quad x>a_d,\qquad
g_3(a_d)=0,\ g_3^{\prime}(a_d)=1,
\label{1111-5-00-ext}
\\
\frac{1}{2}\sigma^2(x) g_4^{\prime\prime}(x)
+\mu(x) g_4^{\prime}(x)
-\delta g_4(x)&=0,\ \quad x>a_s,\qquad
g_4(a_s)=1,\ g_4^{\prime}(a_s)=0,
\label{1111-7-00-ext}
\\
\frac{1}{2}\sigma^2(x) g_5^{\prime\prime}(x)
+\mu(x) g_5^{\prime}(x)
-\delta g_5(x)&=0,\ \quad x>a_s,\qquad
g_5(a_s)=0,\ g_5^{\prime}(a_s)=1.
\label{1111-9-001-ext}
\end{align}

\noindent
\textbf{Step 2}:    Compute the derivatives $g_1^\prime(\cdot)$ and $g_2^\prime(\cdot)$.

\noindent
\textbf{Step 3}: Determine the functions $C_1(\cdot)$, $C_2(\cdot)$ and $C_3(\cdot)$  using \eqref{c1b-ext}--\eqref{c5b-ext}:
   
\begin{align}
    C_1(b)&=\frac{1}{(g_1(a_d)g_2(a_s)+g_1^{\prime}(a_d)g_3(a_s))g_4^{\prime}(b)
    +(g_1(a_d)g_2^{\prime}(a_s)+g_1^{\prime}(a_d)g_3^{\prime}(a_s))g_5^{\prime}(b)},\label{c1b-00-ext}\\    
    C_2(b)&=g_1({a_d})C_1(b),
   \quad  C_3(b)=g_1^{\prime}({a_d})C_1(b),\label{c3b-00-ext}\\    
    C_4(b)&=g_2(a_s)C_2(b)+g_3(a_s)C_3(b)=\left(g_2(a_s)g_1({a_d})+g_3(a_s)g_1^{\prime}({a_d})\right)C_1(b),\label{c4b-00-ext}\\
    C_5(b)&=g_2^{\prime}(a_s)C_2(b)+g_3^{\prime}(a_s)C_3(b)=\left(g_2^{\prime}(a_s)g_1({a_d})+g_3^{\prime}(a_s)g_1^{\prime}({a_d})\right)C_1(b).\label{c5b-00-ext}
    \end{align}

\noindent
\textbf{Step 4}: The optimal strategy is the barrier strategy with a barrier $b^\ast$, where the barrier $b^{\ast}=\inf\{b\ge {a_s}: V_b^{\prime\prime}(b-)\ge 0\}$. Equivalently, $b^\ast$ is  the smallest $b\ge {a_s}$ such that $$\frac{(g_1(a_d)g_2(a_s)+g_1^{\prime}(a_d)g_3(a_s))g_4(b)
    +(g_1(a_d)g_2^{\prime}(a_s)+g_1^{\prime}(a_d)g_3^{\prime}(a_s))g_5(b)}{(g_1(a_d)g_2(a_s)+g_1^{\prime}(a_d)g_3(a_s))g_4^{\prime}(b)
    +(g_1(a_d)g_2^{\prime}(a_s)+g_1^{\prime}(a_d)g_3^{\prime}(a_s))g_5^{\prime}(b)}-\frac{\mu(b)}{\delta}\ge 0.$$

This can be implemented as follows:
 
(i)  Evaluate whether $\lim_{b\downarrow a_s}V_b^{\prime\prime}(b-)=\frac{(g_1(a_d)g_2(a_s)+g_1^{\prime}(a_d)g_3(a_s))g_4(a_s)
    +(g_1(a_d)g_2^{\prime}(a_s)+g_1^{\prime}(a_d)g_3^{\prime}(a_s))g_5(a_s)}{(g_1(a_d)g_2(a_s)+g_1^{\prime}(a_d)g_3(a_s))g_4^{\prime}(a_s)
    +(g_1(a_d)g_2^{\prime}(a_s)+g_1^{\prime}(a_d)g_3^{\prime}(a_s))g_5^{\prime}(a_s)}-\frac{\mu(a_s)}{\delta}$  is greater than or equal to $0$.

(ii) If the condition in (i)  holds,  then set  $b^{\ast}=a_s$.  Otherwise, determine $b^{\ast}$ as the smallest root on $(a_s,+\infty)$ of
    $\frac{(g_1(a_d)g_2(a_s)+g_1^{\prime}(a_d)g_3(a_s))g_4(b)
    +(g_1(a_d)g_2^{\prime}(a_s)+g_1^{\prime}(a_d)g_3^{\prime}(a_s))g_5(b)}{(g_1(a_d)g_2(a_s)+g_1^{\prime}(a_d)g_3(a_s))g_4^{\prime}(b)
    +(g_1(a_d)g_2^{\prime}(a_s)+g_1^{\prime}(a_d)g_3^{\prime}(a_s))g_5^{\prime}(b)}-\frac{\mu(b)}{\delta}$.\\
    
\noindent
\textbf{Step 5}: (i)  If $b^{\ast}>a_s$, determine the value function $V(\cdot)=V_{b^{\ast}}(\cdot)=f_{b^{\ast}}(\cdot)$ for $a_s<x<{b^{\ast}}$ by using \eqref{fb-ext}:
\begin{align}
   V(x)= V_{b^{\ast}}(x)=\begin{cases}
        c_1({b^{\ast}})  g_1(x), & x\in[{a_0},{a_d}],\\
        c_2({b^{\ast}}) g_2(x)+c_3({b^{\ast}}) g_3(x), & x\in ({a_d},{a_s}],\\
        c_4(b^{\ast}) g_4(x)+c_5(b^{\ast}) g_5(x), & 
        x\in({a_s},    {b^{\ast}}].       \end{cases}\label{fb-00-ext}
\end{align}
For $x>b^\ast$, the value function is given by
$
    V_{b^{\ast}}(x)=x-{b^{\ast}}+V_{b^{\ast}}({b^{\ast}}), \quad x>b^\ast.
$

\noindent (ii) 
If $b^{\ast}=a_s$, then 
\begin{align}
   V(x)=f_{a_s}(x)=\begin{cases}
        c_6({a_s})  g_1(x), & x\in[{a_0},{a_d}],\\
        c_7({a_s}) g_2(x)+c_8({a_s}) g_3(x), & x\in({a_d},{a_s}],   \\
        V(a_s)+x-a_s&x\in(a_s,\infty)\end{cases}
\end{align}
where 
\begin{align}
    C_6({a_s})&=\frac{1}{(g_1(a_d)g_2^\prime(a_s)+g_1^{\prime}(a_d)g_3^\prime(a_s))},
    \quad  
    C_7({a_s})=\frac{g_1(a_d)}{(g_1(a_d)g_2^\prime(a_s)+g_1^{\prime}(a_d)g_3^\prime(a_s))},\label{c7-ext-new} \\
     C_8({a_s})&=\frac{g_1^\prime(a_d)}{(g_1(a_d)g_2^\prime(a_s)+g_1^{\prime}(a_d)g_3^\prime(a_s))}.\label{c8-ext-new}
    \end{align}

In subsequent sections, 
we  apply the above framework to several examples  and investigate the ``value" of the additional feature in the Omega model. Before we proceed with this, we now first comment on an interesting impact that the liquidation-intensity feature has on the shape and concavity of the value function.

\begin{remark}\label{nonconcavity-ext}
By Theorem \ref{maintheorem-ext}, the value function satisfies $V(x)=V_{b^\ast}(x)$. Hence, by Lemma \ref{V-concavity-ext} the value function is concave on  $[a_s,\infty)$. However, the concavity of the value function  $[0,a_s)$ is unclear.
Our subsequent examples demonstrate that, on $[0,a_s)$, the value function may be concave, convex, or exhibit both concave and convex regions.  
This behavior contrasts sharply with the classical diffusion models without the liquidation-intensity feature or solvency constraints, where the value function is typically concave on the entire state space.
The threshold $a_s$ is determined by the distress region and any additional solvency constraints. When the surplus level is below $a_s$, dividend payments are optimally postponed until the surplus returns to the admissible region above $a_s$. 
This structural feature fundamentally alters the shape of the value function and gives rise to the potential loss of global concavity.
\end{remark}

\section{Expected survival 
time}
\label{sec:expectation}

Recall that $m(x)$ represents the expected survival time under the optimal strategy that achieves the maximal shareholder value in \eqref{suvival-new}. 
 Since Theorem~\ref{maintheorem-ext} established that the barrier strategy $D^{b^\ast}$ is optimal,
\begin{align*}
m(x)=\mathbb E_x\big[T^{D^{b^\ast}}\big], \qquad x\ge a_0.
\end{align*}

Under the Omega ruin mechanism, bankruptcy occurs either upon reaching the ultimate bankruptcy level $a_0$ or through the state-dependent hazard rate $\omega(\cdot)$ in the distress region $(a_0,a_d)$. Since under the barrier strategy the controlled surplus is confined to $[a_0,b^\ast]$ with $b^\ast<\infty$, the survival time is finite under the standard assumptions.
The following proposition provides an equivalent representation of the expected survival time.

\begin{proposition}\label{survival-representation-prop}
For $x\ge a_0$,
$
m(x)
=
\mathbb{E}_x\bigg[
\int_0^{\tau_{a_0}^{D^{b^\ast}}}
e^{-\int_0^s \omega(R_u^{D^{b^\ast}})\,du}\,ds
\bigg].
$
\end{proposition}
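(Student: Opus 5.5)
We use the Cox-process construction of the liquidation time. Write $R=R^{D^{b^\ast}}$, $\tau=\tau_{a_0}^{D^{b^\ast}}$, $T=T^{D^{b^\ast}}$ and $\Lambda_t=\int_0^t\omega(R_u)\,du$. Under the reduced-form mechanism, liquidation in the distress region is the first arrival of a doubly stochastic Poisson process with intensity $\omega(R_t)$. Hence there is a unit exponential random variable $E$, independent of $\mathcal{F}_\infty$ (the $\sigma$-field generated by $W$, and therefore by $R$ under the barrier strategy), such that $T=\inf\{t:\Lambda_t\ge E\}\wedge\tau$. In particular, $\mathbb{P}_x(T>s\mid\mathcal{F}_\infty)=e^{-\Lambda_s}\mathbf 1_{\{s<\tau\}}$. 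This is the same representation that underlies Proposition~\ref{alternative-value} and Lemma~\ref{lemma6.1}, and we take it as given.

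The main computation uses the layer-cake formula for the nonnegative variable $T$:
\[
\mathbb{E}_x[T]=\int_0^\infty\mathbb{P}_x(T>s)\,ds=\mathbb{E}_x\Big[\int_0^\infty \mathbb{P}_x(T>s\mid\mathcal{F}_\infty)\,ds\Big].
\]
The interchange of expectation and integral is justified by Tonelli, since the integrand is nonnegative. Substituting the conditional survival probability gives
\[
\mathbb{E}_x\Big[\int_0^\infty e^{-\Lambda_s}\mathbf 1_{\{s<\tau\}}\,ds\Big]=\mathbb{E}_x\Big[\int_0^{\tau}e^{-\Lambda_s}\,ds\Big],
\]
which is the claimed identity. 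At $x=a_0$ both sides vanish, because $\tau=0$.

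As a cross-check, we can also argue directly from Lemma~\ref{lemma6.1} without conditioning. Since $T=\tau$ on $\{T=\tau\}$, we have $T\wedge\tau=T$. Integrating by parts along each path gives $\int_0^\tau e^{-\Lambda_s}ds=\tau e^{-\Lambda_\tau}+\int_0^\tau s\,\omega(R_s)e^{-\Lambda_s}ds$. By Lemma~\ref{lemma6.1}, formally with $\delta=0$ and $f\equiv$ time, the second term equals $\mathbb{E}_x[T;T<\tau]$, and the first term equals $\mathbb{E}_x[\tau;T=\tau]$. I would present the conditioning argument as the primary proof.

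The main obstacle is a careful justification of the conditional survival formula $\mathbb{P}(T>s\mid\mathcal{F}_\infty)=e^{-\Lambda_s}\mathbf 1_{\{s<\tau\}}$ for the controlled process. This is where we need the barrier strategy $D^{b^\ast}$ to be a functional of $W$ alone, so that $R$ is $\mathcal{F}^W$-measurable and $E$ is independent of it. The paper's earlier proofs of Proposition~\ref{alternative-value} and Lemma~\ref{lemma6.1} rely on exactly this fact. Finiteness is not needed for the identity itself, since both sides may a priori be $+\infty$ and the Tonelli argument still applies.
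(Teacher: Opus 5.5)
Your proof is correct. Its essential input is the same as the paper's: condition on the path of the controlled surplus and use the Cox conditional law of $T$. The computation differs, however.

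The paper splits $\mathbb{E}_x[T]$ over $\{T<\tau\}$ and $\{T\ge\tau\}$. It uses the conditional density $\omega(R_s)e^{-\Lambda_s}\,ds$ on the first event and $\mathbb{P}(T\ge\tau\mid R)=e^{-\Lambda_\tau}$ on the second. It then writes $s=\int_0^s dy$ and swaps the order of integration to get $\mathbb{E}_x\bigl[\int_0^\tau e^{-\Lambda_y}\,dy-\tau e^{-\Lambda_\tau}\bigr]$, and finally adds back $\mathbb{E}_x[\tau e^{-\Lambda_\tau}]$. This is your ``cross-check'', done by Fubini rather than pathwise integration by parts. It also does not invoke Lemma~\ref{lemma6.1}, whose $f$ is a function of the state, so ``$f\equiv$ time'' is indeed only formal.

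Your layer-cake route uses only the conditional survival function and reaches the identity in one Tonelli step. It is also cleaner at the edges: there is no $\tau e^{-\Lambda_\tau}$ term, no need for $\tau<\infty$, and $[0,\infty]$-valued quantities are handled automatically. The paper's version has the merit of mirroring its proofs of Proposition~\ref{alternative-value} and Lemma~\ref{lemma6.1}.

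One small correction to your last paragraph: you do not need $R$ to be a functional of $W$ alone. For a reflected SDE whose coefficients jump at $a_d$ and $a_s$, that would call for a strong-existence result that neither you nor the paper supplies. Conditioning on $\sigma(R_u:u\ge 0)$, as the paper does, suffices. Both $\Lambda_s$ and $\tau$ are path functionals of $R$, and independence of the exponential clock from that path is exactly the reduced-form modelling assumption.
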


Proposition~\ref{survival-representation-prop} provides a convenient representation of the expected survival time in terms of the optimally controlled surplus process. As a consequence, the function $m$ satisfies a linear boundary-value problem. In the general case, the continuation region has three pieces, and the corresponding differential equations are inhomogeneous with forcing term equal to $1$.

\begin{theorem}\label{verification-ext-new}
If $b^{\ast}>a_s$, then 
  $m(x)$ is   continuously differentiable on $[a_0,b^\ast]$ and piecewise twice continuously differentiable  on $(a_0,a_d)$, $(a_d,a_s)$, and $(a_s,b^\ast)$, and it solves
\begin{align}
& \frac{1}{2}\sigma_d^2(x) {f}^{\prime\prime}(x)+\mu_d(x){f}^{\prime}(x)-\omega(x){f}(x)+1=0, \ {a_0}<x<{a_d},\label{eq1-00-ext}\\
&\frac{1}{2}\sigma_s^2(x) {f}^{\prime\prime}(x)+\mu_s(x){f}^{\prime}(x)+1=0, \ {a_d}<x<{a_s},\label{eq2-00-ext}\\
&\frac{1}{2}\sigma_u^2(x) {f}^{\prime\prime}(x)+\mu_u(x){f}^{\prime}(x)+1=0, \ {a_s}<x<b^\ast,\label{eq4-00-ext}\\
    & f(a_0)=0,\quad f^\prime(b^\ast)=0. \label{eq3-00-ext}
\end{align} 
Moreover, 
$m(x)=m(b^\ast)$ for $ x>b^\ast.$

If $b^{\ast}=a_s$, then 
$m(x)$    is continuously differentiable on $[a_0,\infty)$ and piecewise twice continuously differentiable on $(a_0,a_d)$ and $(a_d,a_s)$, and it solves
\begin{align}
& \frac{1}{2}\sigma_d^2(x) {f}^{\prime\prime}(x)+\mu_d(x){f}^{\prime}(x)-\omega(x){f}(x)+1=0, \ {a_0}<x<{a_d},\label{eq1-0000-ext}\\
& \frac{1}{2}\sigma_s^2(x) {f}^{\prime\prime}(x)+\mu_s(x){f}^{\prime}(x)+1=0, \ {a_d}<x<{a_s},\label{eq2-0000-ext}\\
    & f(a_0)=0,\quad f^\prime(a_s)=0. \label{eq3-0000-ext}
\end{align} 
Furthermore, 
$m(x)=m(a_s)$ for $ x>a_s.$
\end{theorem}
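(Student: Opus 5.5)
We prove Theorem~\ref{verification-ext-new} in the case $b^\ast>a_s$; the case $b^\ast=a_s$ is the same argument with the unregulated piece removed. The approach is to construct the candidate solution explicitly, show it is regular enough, and then identify it with $m$ through the representation of Proposition~\ref{survival-representation-prop} via It\^o's formula and a localisation argument.

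\textbf{Step 1: constructing the candidate.} We first show that the boundary-value problem \eqref{eq1-00-ext}--\eqref{eq3-00-ext} has a unique $C^1$ solution $h$ on $[a_0,b^\ast]$ that is $C^2$ on each open piece. On each zone the equation is a linear second-order ODE with forcing $1$. A particular solution exists on each zone by variation of constants, using the fundamental systems of the homogeneous equations with $\delta$ replaced by $0$. The general solution therefore has six free constants. These are pinned down by the six conditions
\begin{itemize}
\item $h(a_0)=0$,
\item continuity of $h$ and $h'$ at $a_d$ and at $a_s$,
\item $h'(b^\ast)=0$.
\end{itemize}

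Uniqueness reduces to showing that the homogeneous problem (forcing $0$) has only the trivial solution. Suppose $k$ solves it. On the two upper zones, $k'$ satisfies a first-order linear ODE. The condition $k'(b^\ast)=0$ then forces $k'\equiv0$ on $[a_d,b^\ast]$, so $k$ equals a constant $c$ there. On the distress zone, $k$ solves $\frac12\sigma_d^2k''+\mu_dk'-\omega k=0$ with $k(a_0)=0$, $k(a_d)=c$ and $k'(a_d)=0$. Because $\omega\ge0$, the maximum principle applies. If $c>0$, then $k$ has a positive maximum on $[a_0,a_d]$. At an interior positive maximum, $k''\ge 0$ would be needed, giving $k\equiv c$, which contradicts $k(a_0)=0$. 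A maximum at $a_d$ with $k'(a_d)=0$ is excluded by Hopf's lemma. The case $c<0$ is symmetric. Hence $k\equiv0$.

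\textbf{Step 2: identifying $h$ with $m$.} We extend $h$ by $h(x)=h(b^\ast)$ for $x>b^\ast$. Under $D^{b^\ast}$, starting from $x>b^\ast$ the surplus jumps immediately to $b^\ast$, so it suffices to treat $x\le b^\ast$. Set $R=R^{D^{b^\ast}}$ and $\Lambda_t=\int_0^t\omega(R_u)\,du$.

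We apply the It\^o--Tanaka formula to $e^{-\Lambda_t}h(R_t)$ up to $\tau_{a_0}\wedge t$. The generalised It\^o formula is valid because $h$ is $C^1$ with piecewise continuous $h''$, and local times at $a_d,a_s$ contribute nothing since $h'$ is continuous there. The ODEs give
\[
dh\text{-drift}-\omega h=-1.
\]
The reflection term is $-h'(b^\ast)\,dD_t=0$, since $h'(b^\ast)=0$ and $D$ increases only when $R=b^\ast$. Hence
\[
h(x)=\mathbb E_x\Big[\int_0^{\tau_{a_0}\wedge t}e^{-\Lambda_s}\,ds\Big]+\mathbb E_x\big[e^{-\Lambda_{\tau_{a_0}\wedge t}}h(R_{\tau_{a_0}\wedge t})\big].
\]

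We now let $t\to\infty$. The first term converges by monotone convergence to the representation of $m(x)$ in Proposition~\ref{survival-representation-prop}. For the second term:
\begin{itemize}
\item $h$ is bounded on $[a_0,b^\ast]$;
\item $h(a_0)=0$, so the term vanishes on $\{\tau_{a_0}\le t\}$;
\item it remains to show $\mathbb E_x\big[e^{-\Lambda_t}\,;\,\tau_{a_0}>t\big]\to0$.
\end{itemize}

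\textbf{Step 3: the main obstacle.} The hard step is this last limit. We handle it as follows. The reflected diffusion lives on the compact interval $[a_0,b^\ast]$ and has nondegenerate volatility, so the probability of reaching $a_0$ within one unit of time is bounded below by some $p>0$, uniformly in the starting point. By the Markov property,
\[
\mathbb P_x(\tau_{a_0}>n)\le(1-p)^n .
\]
This gives $\mathbb P_x(\tau_{a_0}>t)\to0$. The same bound shows $\mathbb E_x[\tau_{a_0}]<\infty$, which confirms that $m$ is finite.

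\textbf{Conclusion.} Steps 1--3 give $m=h$ on $[a_0,b^\ast]$, so $m$ has the stated regularity and solves the stated equations, and $m(x)=m(b^\ast)$ for $x>b^\ast$ by the immediate-jump observation.
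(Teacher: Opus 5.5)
Your argument is correct, and its core is the paper's argument. You apply the generalised It\^o formula to $e^{-\int_0^t\omega(R_u)\,du}h(R_t)$ up to $t\wedge\tau_{a_0}$. The drift term reduces to $-1$ by the ODEs. The reflection term vanishes because $h'(b^\ast)=0$ and $D^{b^\ast}$ increases only at $b^\ast$. Monotone convergence together with Proposition~\ref{survival-representation-prop} then handles the running term. You differ from the paper in two supporting steps, and in both you are more complete.

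First, the paper does not prove existence and uniqueness for the boundary-value problem inside this proof. It defers to the explicit construction of Theorem~\ref{survival-representation}, where the denominator of $\tilde C_4$ is never shown to be nonzero. Your uniqueness result for the homogeneous problem is exactly what guarantees this. Since the linear system for the six constants is square, it also gives existence.

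In the distress zone, your remark at an interior maximum only yields $k''=0$ there, so the strong maximum principle is what actually does the work. A self-contained one-dimensional alternative is the following. Write $(qk')'=2q\omega k/\sigma_d^2$ with $q(x)=\exp\big(\int_{a_0}^x 2\mu_d/\sigma_d^2\,dy\big)$. Then $k'(a_d)=0$ gives $q(x)k'(x)=-\int_x^{a_d}2q\omega k/\sigma_d^2\,dy\le 0$ as long as $k>0$ on $[x,a_d]$. Hence $k\ge c>0$ all the way down to $a_0$, contradicting $k(a_0)=0$.

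Second, for the terminal term the paper asserts that $e^{-\int_0^t\omega(R_u)\,du}\to 0$. This fails in general, for instance when $\omega\equiv 0$ as in the regulation-only design of Example~\ref{num:design}. What is really needed is $\mathbb{P}_x(\tau_{a_0}>t)\to 0$. Your geometric bound $\mathbb{P}_x(\tau_{a_0}>n)\le(1-p)^n$ supplies exactly that, from a uniform positive probability of reaching $a_0$ within unit time on the compact state space $[a_0,b^\ast]$. It also gives the finiteness of $m$, which the paper only asserts in the text.

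The one routine point you leave implicit is that the stochastic integral is a true martingale, because its integrand is bounded while $R$ stays in $[a_0,b^\ast]$. This replaces the paper's localising sequence $t_n$.
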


To solve the boundary-value problem, we introduce the corresponding homogeneous fundamental solutions and particular solutions.

\begin{definition}
\label{survival-fundamental-ext}
Let $\tilde g_4 , \ldots,\tilde g_9$ be the unique solutions to the following homogeneous initial value problems:
\begin{align}
(\tilde g_4):\quad
& \frac{1}{2}\sigma_d^2(x)\tilde g_4''(x)
  + \mu_d(x)\tilde g_4'(x)
  - \omega(x)\tilde g_4(x)=0, \quad x>a_0, \label{1111-1-000-ext}
  \qquad \tilde g_4(a_0)=0, \quad \tilde g_4'(a_0)=1, 
  \\
(\tilde g_5):\quad
& \frac{1}{2}\sigma_d^2(x)\tilde g_5''(x)
  + \mu_d(x)\tilde g_5'(x)
  - \omega(x)\tilde g_5(x)=0, \quad x>a_0, \label{1111-3-000-ext}
  \qquad \tilde g_5(a_0)=1, \quad \tilde g_5'(a_0)=0, 
  \\
(\tilde g_6):\quad
& \frac{1}{2}\sigma_s^2(x)\tilde g_6''(x)
  + \mu_s(x)\tilde g_6'(x)=0, \quad x>a_d, \label{1111-5-000-ext}
  \qquad \tilde g_6(a_d)=0, \quad \tilde g_6'(a_d)=1, 
  \\
(\tilde g_7):\quad
& \frac{1}{2}\sigma_s^2(x)\tilde g_7''(x)
  + \mu_s(x)\tilde g_7'(x)=0, \quad x>a_d, \label{1111-7-000-ext}
  \qquad \tilde g_7(a_d)=1, \quad \tilde g_7'(a_d)=0, 
  \\
(\tilde g_8):\quad
& \frac{1}{2}\sigma^2_u(x)\tilde g_8''(x)
  + \mu_u(x)\tilde g_8'(x)=0, \quad x>a_s, \label{1111-9-000-ext}
  \qquad \tilde g_8(a_s)=0, \quad \tilde g_8'(a_s)=1, 
  \\
(\tilde g_9):\quad
& \frac{1}{2}\sigma^2_u(x)\tilde g_9''(x)
  + \mu_u(x)\tilde g_9'(x)=0, \quad x>a_s, \label{1111-11-000-ext}
  \qquad \tilde g_9(a_s)=1, \quad \tilde g_9'(a_s)=0. 
\end{align}

\end{definition}
The existence and uniqueness of $\tilde{g}_4-\tilde{g}_9$ follows by Theorem 5.4.2. in \cite{Krylov1996}. 

\begin{definition}
Define the Wronskians
\[
W_{\tilde g_4,\tilde g_5}(x)=\tilde g_4(x)\tilde g_5'(x)-\tilde g_4'(x)\tilde g_5(x),\quad
W_{\tilde g_6,\tilde g_7}(x)=\tilde g_6(x)\tilde g_7'(x)-\tilde g_6'(x)\tilde g_7(x),
\]
\[
W_{\tilde g_8,\tilde g_9}(x)=\tilde g_8(x)\tilde g_9'(x)-\tilde g_8'(x)\tilde g_9(x),
\]
and
\[
B_j(x)=\tilde g_{2j+2}(x)\int_{a_j}^x
\frac{\tilde g_{2j+3}(y)}{W_{\tilde g_{2j+2},\tilde g_{2j+3}}(y)}
\frac{2}{\sigma_j^2(y)}\,dy
-\tilde g_{2j+3}(x)\int_{a_j}^x
\frac{\tilde g_{2j+2}(y)}{W_{\tilde g_{2j+2},\tilde g_{2j+3}}(y)}
\frac{2}{\sigma_j^2(y)}\,dy,
\]
for \(j=1,2,3\), with \((a_1,a_2,a_3)=(a_0,a_d,a_s)\).
%

\end{definition}

The functions $B_1,B_2,$ and $B_3$ are obtained by the method of variation of parameters.
Since  $W_{\tilde g_4,\tilde g_5}(a_0)=-1\neq 0$, $W_{\tilde g_6,\tilde g_7}(a_d)=-1\neq 0$ and $W_{\tilde g_8,\tilde g_9}(a_s)=-1\neq 0$, the pair  $(\tilde g_4, \tilde g_5)$ forms a linearly  independent pair of solutions to $\frac{1}{2}\sigma_d^2(x) {f}^{\prime\prime}(x)+\mu_d(x){f}^{\prime}(x)-\omega(x) {f}(x)=0$. Similarly,   $(\tilde g_6,\tilde g_7)$  forms a linearly independent pair of solutions to $\frac{1}{2}\sigma_s^2(x) {f}^{\prime\prime}(x)+\mu_s(x){f}^{\prime}(x)=0$, and $(\tilde g_8,\tilde g_9)$ forms a linearly independent pair of solutions  to $\frac{1}{2}\sigma^2_u(x) {f}^{\prime\prime}(x)+\mu_u(x){f}^{\prime}(x)=0$. 
Moreover, $B_1$, $B_2$ and $B_3$ are particular solutions to \eqref{eq1-00-ext}, \eqref{eq1-00-ext}, and \eqref{eq3-00-ext}, respectively.

The expected survival time can now be written in closed form.
\begin{theorem}\label{survival-representation}
If $b^\ast>a_s$, then the function admits the representation

\begin{align}
m(x)=\begin{cases}
\tilde C_4 \tilde g_4(x)+B_1(x)&a_0\le x\le a_d,\\
\tilde C_6 \tilde g_6(x)+ \tilde C_7 \tilde g_7(x)+B_2(x)& a_d< x\le a_s,\\
\tilde C_8 \tilde g_8(x)+ \tilde C_9 \tilde g_9(x)+B_3(x)& a_s< x\le b^\ast,\\
m(b^\ast)&x>b^\ast,
\end{cases}\label{171123-10-ext}
\end{align}

where the constants 
\begin{align}
&\tilde C_4= -\frac{B_1^\prime(a_d)\tilde g_6^\prime(a_s)\tilde g_8^\prime(b^\ast)+B_1^\prime(a_d)\tilde g_6(a_s)\tilde g_9^\prime(b^\ast)+B_1(a_d) \tilde g_7^\prime(a_s)\tilde g_8^\prime(b^\ast)+B_1(a_d) \tilde g_7(a_s)\tilde g_{{9}}^\prime(b^\ast)}
{\tilde g_4^\prime(a_d) \tilde g_6^\prime(a_s)\tilde g_8^\prime(b^\ast)+\tilde g_4^\prime(a_d) \tilde g_6(a_s)\tilde g_9^\prime(b^\ast)+\tilde g_4(a_d) \tilde g_7^\prime(a_s)\tilde g_8^\prime(b^\ast)+\tilde g_4(a_d) \tilde g_7(a_s)\tilde g_9^\prime(b^\ast)}\nonumber\\
&\quad\quad\quad-\frac{B_2^\prime(a_s)\tilde g_8^\prime(b^\ast)+B_2(a_s)\tilde g_9^\prime(b^\ast)+B_3^\prime(b^\ast)}{\tilde g_4^\prime(a_d) \tilde g_6^\prime(a_s)\tilde g_8^\prime(b^\ast)+\tilde g_4^\prime(a_d) \tilde g_6(a_s)\tilde g_9^\prime(b^\ast)+\tilde g_4(a_d) \tilde g_7^\prime(a_s)\tilde g_8^\prime(b^\ast)+\tilde g_4(a_d) \tilde g_7(a_s)\tilde g_9^\prime(b^\ast)},\label{171123-6-ext}\\
&\tilde C_6=  \tilde C_4 \tilde g_4^\prime(a_d)+B_1^\prime(a_d),
\quad \tilde C_7=\tilde C_4 \tilde g_4(a_d)+B_1(a_d),
\quad \tilde C_8=\tilde C_6 \tilde g_6^\prime(a_s)+ \tilde C_7 \tilde g_7^\prime(a_s)+B_2^\prime(a_s),\label{171123-9-ext}\\
&\tilde C_9=\tilde C_6 \tilde g_6(a_s)+ \tilde C_7 \tilde g_7(a_s)+B_2(a_s).\label{171123-10-ext-00}
\end{align}
If $b^{\ast}=a_s$, then 
\begin{align}
m(x)=\begin{cases}
\tilde C_{10} \tilde g_4(x)+B_1(x) & \quad a_0\le x\le a_d,\\
\tilde C_{12}\tilde g_6(x)+\tilde C_{13}\tilde g_7(x)+B_2(x) & \quad a_d< x\le a_s,\\
m(a_s) & \quad  x\ge a_s,
\end{cases}
\end{align}
where
\begin{align}
\tilde{ C}_{10}= -\frac{B_1^\prime(a_d)\tilde g_6^\prime(a_s)+B_1(a_d) \tilde g_7^\prime(a_s)+B_2^\prime(a_s)}{\tilde g_4^\prime(a_d) \tilde g_6^\prime(a_s)+\tilde g_4(a_d) \tilde g_7^\prime(a_s)},
\ 
\tilde{C}_{12}= \tilde C_{10} \tilde g_4^\prime(a_d)+B_1^\prime(a_d),
\ \tilde{C}_{13}= \tilde C_{10} \tilde g_4(a_d)+B_1(a_d).\label{171123-600-ext3}
\end{align}
\end{theorem}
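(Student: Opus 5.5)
We take Theorem~\ref{verification-ext-new} as given: $m$ is $C^1$ on $[a_0,b^\ast]$, piecewise $C^2$, and solves the inhomogeneous system \eqref{eq1-00-ext}--\eqref{eq3-00-ext}. The proof then reduces to identifying the unique such solution in the basis of Definition~\ref{survival-fundamental-ext}.

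\textbf{General solution on each zone.} On $(a_0,a_d)$, $\tilde g_4,\tilde g_5$ span the homogeneous solutions. Variation of parameters, with Wronskian $W_{\tilde g_4,\tilde g_5}$ and forcing $-2/\sigma_d^2$, shows that $B_1$ solves \eqref{eq1-00-ext} with $B_1(a_0)=B_1'(a_0)=0$. Hence $m=\alpha\tilde g_4+\beta\tilde g_5+B_1$ there. The boundary condition $m(a_0)=0$, together with $\tilde g_4(a_0)=0$ and $\tilde g_5(a_0)=1$, forces $\beta=0$, so $m=\tilde C_4\tilde g_4+B_1$. In the same way, $m=\tilde C_6\tilde g_6+\tilde C_7\tilde g_7+B_2$ on $(a_d,a_s)$ and $m=\tilde C_8\tilde g_8+\tilde C_9\tilde g_9+B_3$ on $(a_s,b^\ast)$, where $B_2(a_d)=B_2'(a_d)=0$ and $B_3(a_s)=B_3'(a_s)=0$.

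\textbf{Matching at the interior thresholds.} The $C^1$ property of $m$ gives continuity of $m$ and $m'$ at $a_d$ and $a_s$. At $a_d$ we have $\tilde g_6=0$, $\tilde g_6'=1$, $\tilde g_7=1$, $\tilde g_7'=0$ and $B_2=B_2'=0$, so
\[
\tilde C_7=\tilde C_4\tilde g_4(a_d)+B_1(a_d),\qquad \tilde C_6=\tilde C_4\tilde g_4'(a_d)+B_1'(a_d).
\]
At $a_s$ the analogous normalisations of $\tilde g_8,\tilde g_9$ and $B_3$ give
\[
\tilde C_9=m(a_s)=\tilde C_6\tilde g_6(a_s)+\tilde C_7\tilde g_7(a_s)+B_2(a_s),\qquad \tilde C_8=m'(a_s)=\tilde C_6\tilde g_6'(a_s)+\tilde C_7\tilde g_7'(a_s)+B_2'(a_s).
\]
These are exactly \eqref{171123-9-ext}--\eqref{171123-10-ext-00}.

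\textbf{Determining $\tilde C_4$.} The remaining condition $m'(b^\ast)=0$ reads $\tilde C_8\tilde g_8'(b^\ast)+\tilde C_9\tilde g_9'(b^\ast)+B_3'(b^\ast)=0$. Substituting the recursion makes this linear in $\tilde C_4$. Solving it yields \eqref{171123-6-ext}, after checking that the coefficient of $\tilde C_4$ is the displayed denominator.

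\textbf{Main obstacle: the denominator is nonzero.} We must show
\[
\tilde g_4'(a_d)\tilde g_6'(a_s)\tilde g_8'(b^\ast)+\tilde g_4'(a_d)\tilde g_6(a_s)\tilde g_9'(b^\ast)+\tilde g_4(a_d)\tilde g_7'(a_s)\tilde g_8'(b^\ast)+\tilde g_4(a_d)\tilde g_7(a_s)\tilde g_9'(b^\ast)\neq 0.
\]
This expression equals $h'(b^\ast)$, where $h$ is the $C^1$ concatenation of homogeneous solutions with $h(a_0)=0$, $h'(a_0)=1$. We argue $h'>0$ throughout, in three steps.
\begin{itemize}
\item On the distress zone, the operator has zero-order coefficient $-\omega\le0$. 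By the maximum-principle argument, $\tilde g_4$ stays positive and increasing: at a first point where $\tilde g_4'=0$ we would have $\tilde g_4>0$, hence $\tilde g_4''=2\omega\tilde g_4/\sigma_d^2\ge0$, which precludes a decrease.
\item On the regulated and unregulated zones, $f''=-(2\mu/\sigma^2)f'$. Any solution therefore has $f'=f'(\text{left endpoint})\exp(-\int 2\mu/\sigma^2)$, which keeps its sign. Starting from $h'(a_d)>0$, this gives $h'>0$ across the remaining zones.
\item Consequently $h'(b^\ast)>0$. This argument also shows that $m$ is uniquely determined, which is consistent with Theorem~\ref{verification-ext-new}.
\end{itemize}

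\textbf{Boundary case $b^\ast=a_s$.} The same scheme applies with only two zones. Imposing $m'(a_s)=0$ gives $\tilde C_{10}$ in \eqref{171123-600-ext3}, with denominator $h'(a_s)>0$. Finally, $m(x)=m(b^\ast)$ for $x>b^\ast$ is inherited directly from Theorem~\ref{verification-ext-new}.
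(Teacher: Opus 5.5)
Your proposal is correct and follows the paper's route. You take the general solution on each zone and get $\tilde C_5=0$ from $m(a_0)=0$, then use $C^1$ matching at $a_d$ and $a_s$ via the normalisations of $\tilde g_6,\dots,\tilde g_9$ and $B_2,B_3$, and finally solve $m'(b^\ast)=0$ (resp.\ $m'(a_s)=0$) for $\tilde C_4$ (resp.\ $\tilde C_{10}$).

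Your identification of the denominator as $h'(b^\ast)>0$ usefully supplies the nondegeneracy the paper only asserts, and with it the existence that the proof of Theorem~\ref{verification-ext-new} defers to this result. For the distress zone, justify $\tilde g_4'>0$ with the integrating factor $p(x)=\exp\bigl(\int_{a_0}^x 2\mu_d(y)/\sigma_d^2(y)\,dy\bigr)$, for which $(p\tilde g_4')'=2p\omega\tilde g_4/\sigma_d^2\ge 0$ while $\tilde g_4\ge 0$, so $p\tilde g_4'\ge p(a_0)>0$; your first-zero argument on its own does not exclude the case $\tilde g_4''(x_1)=0$.
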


\section{Reduced case: coinciding distress and solvency thresholds}\label{results-reduced}
We now consider the reduced case in which the distress threshold and the solvency threshold coincide,  $a_s=a_d$. 
In this setting,  the intermediate continuation region $(a_d,a_s)$ disappears. 
Consequently, in contrast to the general framework developed  in Section \ref{results},  the value function is constructed by solving the HJB system only on the distress region $(a_0,a_d)$ and the admissible dividend region $(a_d,\infty)$, rather than piecing together solutions across the distress region and the solvency region.
The reduced model is of independent interest because it corresponds to the practically important case in which the regulatory solvency threshold coincides with the upper boundary of the distress zone. 

\subsection{Shareholder value and optimal strategy}
The arguments below follow the same dynamic programming, verification, and smooth-fit ideas developed in Section \ref{results}. However, because the underlying ODE system and matching structure simplify considerably, we present the resulting formulas separately and omit proofs that are direct analogues of those already established in the general case.

Following standard  stochastic control arguments, the value function satisfies the dynamic programming principle: 
 \begin{align}
 V(x)
   =&\sup_{D\in\Pi}\mathbb{E}_x\Big[
\int_0^{\tau_{{a_0}}^D\wedge\tau}
e^{-(\delta s+\int_0^s\omega(R_u^D)du)}dD_s+e^{-\delta(\tau_{{a_0}}^D\wedge\tau)}V(R_{\tau_{{a_0}}^D\wedge\tau}^D)\Big],\ \ x> {a_0}.
 \end{align}
Formally, the associated HJB system becomes
 \begin{align}
      &\frac{1}{2}\sigma_d^2(x)V^{\prime\prime}(x)
     +\mu_d(x) V^{\prime}(x)-(\delta+\omega(x) ) V(x)=0,\ \ {a_0}<x<{a_d},\label{reduced-hjb-1}\\
        &\max\left\{\frac{1}{2}\sigma^2(x)V^{\prime\prime}(x)
     +\mu(x) V^{\prime}(x)-\delta V(x), 1-V^\prime(x)\right\}= 0,\ \ x\ge  {a_d}.\label{reduced-hjb-2}
 \end{align}

 The corresponding verification theorem is the direct analogue of Theorem \ref{verification-ext} and is therefore omitted. 

 As in the general case, we focus on barrier strategies. The construction of the barrier value function relies on the following fundamental solutions.



\begin{definition}\label{fundamentalsol}
(i) Let $g_1$ denote the unique solution to 
\begin{align}
&\frac{1}{2}\sigma_d^2(x) {f}^{\prime\prime}(x)+\mu_d(x){f}^{\prime}(x)-(\delta+\omega(x)) {f}(x)=0,\ \ x>{a_0},\label{1111-1}
\end{align}
subject to $f({a_0})=0$ and $ f^\prime({a_0})=1.$ 
(ii) Let $g_2(x)$ and $g_3(x)$ denote the unique   solutions to the 
\begin{align}
&\frac{1}{2}\sigma^2(x) {f}^{\prime\prime}(x)+\mu(x){f}^{\prime}(x)-\delta {f}(x)=0,\ \ x>{a_d}, \label{1111-3}
\end{align}
subject  respectively to 
$ g_2({a_d})=1,\ g_2^{\prime}({a_d})=0,$ 
and
$
 g_3({a_d})=0,\ g_3^{\prime}({a_d})=1.$ 

\end{definition}
The existence and uniqueness of these solutions follow from standard ODE theory.

The next lemma characterizes the solution  to the associated boundary-value problem.

\begin{lemma}\label{fb-sol-reduced}
For any $b>{a_d}$, the  boundary value problem,
\begin{align}
    &\frac{1}{2}\sigma_d^2(x) {f}^{\prime\prime}(x)+\mu_d(x){f}^{\prime}(x)-(\delta+\omega(x)) {f}(x)=0,\quad x\in(
{a_0},{a_d}), \label{diff1}\\
 &\frac{1}{2}\sigma^2(x) {f}^{\prime\prime}(x)+\mu(x){f}^{\prime}(x)-\delta {f}(x)=0,\quad x\in({a_d},b), \label{diff2}\\
&f({a_0})=0, \quad f^{\prime}(b)=1, \label{diff3}
\end{align}
admits a unique continuously differentiable solution  $f_b$ on   $({a_0},b]$. Moreover, 
\begin{align}
    f_b(x)=\begin{cases}
        c_1(b)  g_1(x), & x\in[{a_0},{a_d}),\\
         c_2(b) g_2(x)+c_3(b) g_3(x), & x\in[{a_d},b),
         \end{cases}\label{fb}
\end{align}
where
\begin{align}
    c_1(b)&=\left(g_1({a_d})g_2^{\prime}(b)
    +g_1^{\prime}({a_d})g_3^{\prime}(b)\right)^{-1},\label{c1b}
    \quad c_2(b)=g_1({a_d})c_1(b),
    \quad c_3(b)=g_1^{\prime}({a_d})c_1(b).
    \end{align}
   Furthermore, $f_b$ is differentiable  on $({a_0},\infty)$ and twice continuously differentiable on $({a_0},{a_d})$ and $({a_d}, b)$.
\end{lemma}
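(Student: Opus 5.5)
The plan is to construct the solution explicitly from the fundamental solutions of Definition~\ref{fundamentalsol} and then to show that any solution must coincide with it. On $(a_0,a_d)$, every solution of \eqref{diff1} is a combination of $g_1$ and a second independent solution. The condition $f(a_0)=0$ removes the second solution, because $g_1(a_0)=0$, $g_1'(a_0)=1$, and the Wronskian is non-zero at $a_0$. Hence $f=c_1g_1$ on $[a_0,a_d)$. On $(a_d,b)$, $g_2$ and $g_3$ form a fundamental system: their Wronskian at $a_d$ equals $-1\neq0$. So $f=c_2g_2+c_3g_3$ there.

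Next we impose continuous differentiability at $a_d$. Since $g_2(a_d)=1$, $g_2'(a_d)=0$, $g_3(a_d)=0$ and $g_3'(a_d)=1$, matching $f(a_d\pm)$ and $f'(a_d\pm)$ gives $c_2=g_1(a_d)c_1$ and $c_3=g_1'(a_d)c_1$. The remaining condition $f'(b)=1$ then reads
\[
c_1\bigl(g_1(a_d)g_2'(b)+g_1'(a_d)g_3'(b)\bigr)=1 .
\]
This determines $c_1$ uniquely, and hence the formula \eqref{c1b}, provided the bracket is non-zero. Uniqueness follows because every step of the derivation was forced. Regularity is read off directly: $f_b$ is $C^2$ on each open subinterval by the ODE theory (Krylov), and $C^1$ at $a_d$ by construction. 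We extend past $b$ by $f_b(b)+x-b$ to obtain differentiability on $(a_0,\infty)$, since $f_b'(b)=1$ makes this extension $C^1$ at $b$.

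The main obstacle is showing that
\[
h(x):=g_1(a_d)g_2(x)+g_1'(a_d)g_3(x)
\]
satisfies $h'(b)\neq0$; in fact we aim to show $h'>0$ on $[a_d,\infty)$. First, $g_1>0$ and $g_1'>0$ on $(a_0,a_d]$. Starting from $g_1(a_0)=0$ and $g_1'(a_0)=1$, suppose $g_1'$ first vanishes at some $x_1$. At $x_1$ we have $g_1>0$, so the ODE gives $\tfrac12\sigma_d^2g_1''=(\delta+\omega)g_1>0$. This contradicts $g_1'$ decreasing to zero at $x_1$. Hence $g_1$ is positive and increasing on $(a_0,a_d]$. (The potential singularity of $\omega$ at $a_0$ is handled by working on $(a_0+\epsilon,\cdot)$, or by using finiteness of $\omega$ as in the numerical setting.)

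Now $h$ solves \eqref{1111-3} with $h(a_d)=g_1(a_d)>0$ and $h'(a_d)=g_1'(a_d)>0$. The same first-zero argument applies: wherever $h>0$ and $h'=0$, the ODE gives $h''=2\delta h/\sigma^2>0$. So $h'$ can never reach zero, and $h'>0$ on $[a_d,\infty)$. This yields $c_1(b)>0$, which completes the proof.
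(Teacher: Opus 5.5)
Your proof is correct and follows essentially the paper's matching argument. The paper omits this proof and refers to the proof of Lemma~\ref{fb-sol-ext}: write the general solution on each piece, impose $f(a_0)=0$ and $C^1$ matching at $a_d$, then solve for the constants from $f'(b)=1$.

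Your first-zero argument, giving $g_1,g_1'>0$ on $(a_0,a_d]$ and $h'>0$ on $[a_d,\infty)$, is a worthwhile addition. The paper merely asserts that the constants are uniquely determined, without checking that the denominator in \eqref{c1b} is nonzero; your argument shows this and in fact gives $c_1(b)>0$.
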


\begin{lemma}\label{fxbar-sol}
The following initial value differential equation,
\begin{align}
    &\frac{1}{2}\sigma_d^2(x) {f}^{\prime\prime}(x)+\mu_d(x){f}^{\prime}(x)-(\delta+\omega(x)) {f}(x)=0,\ \ x\in(
{a_0},{a_d}), \label{diff2-0}\\
&f({a_0})=0, \quad  f^{\prime}({a_d})=1, \label{diff3-0}
\end{align}
has a unique solution $f_{{a_d}}$, which is given by
$f_{{a_d}}(x)=\frac{g_1(x)}{g_1^\prime({a_d})},\quad x\in[{a_0},{a_d}),$
where  $g_1$ is the unique solution to \eqref{diff2-0} that satisfies the initial conditions $g_1({a_0})=0$ and $g_1^\prime({a_0})=1$. Furthermore, $f_{{a_d}}$ is  twice continuously differentiable on $({a_0},{a_d})$.
\end{lemma}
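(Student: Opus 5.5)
The plan is to show that $f_{a_d}=g_1/g_1'(a_d)$ solves the problem and that no other solution does, by reducing everything to the one‑dimensional solution space of the ODE under the boundary condition at $a_0$.

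\textbf{Solution space.} By Definition~\ref{fundamentalsol}, $g_1$ solves \eqref{diff2-0} on $(a_0,\infty)$, in particular on $(a_0,a_d)$. It is $C^2$ there and extends $C^1$ up to both endpoints, since the coefficients are Lipschitz and continuous, $\sigma_d>0$, and the standard existence theory (Theorem 5.4.2 of \cite{Krylov1996}) applies. Any solution $f$ of \eqref{diff2-0} is a linear combination $\alpha g_1+\beta h$, where $h$ is the solution with $h(a_0)=1$ and $h'(a_0)=0$. The Wronskian of $g_1$ and $h$ at $a_0$ equals $-1\neq0$, so the pair is linearly independent. The condition $f(a_0)=0$ forces $\beta=0$, so every candidate has the form $f=\alpha g_1$.

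\textbf{Main obstacle: $g_1'(a_d)\neq 0$.} The condition $f'(a_d)=1$ then reads $\alpha g_1'(a_d)=1$. This is solvable, and uniquely so, exactly when $g_1'(a_d)\neq0$. I will prove $g_1>0$ and $g_1'>0$ on $(a_0,a_d]$ by a maximum‑principle argument.

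\begin{itemize}
\item Since $g_1(a_0)=0$ and $g_1'(a_0)=1$, both $g_1$ and $g_1'$ are positive on some interval $(a_0,a_0+\epsilon)$.
\item Suppose $g_1'$ first vanishes at $x_1\le a_d$, with $g_1>0$ on $(a_0,x_1]$ (this holds because $g_1$ is increasing up to $x_1$).
\item At $x_1$ the ODE gives $\tfrac12\sigma_d^2(x_1)g_1''(x_1)=(\delta+\omega(x_1))g_1(x_1)>0$, using $\delta>0$ and $\omega\ge0$. Hence $g_1''(x_1)>0$.
\item But $g_1'$ decreases to $0$ at $x_1$ from positive values, which requires $g_1''(x_1)\le 0$. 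This is a contradiction.
\end{itemize}

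If $x_1=a_d$, I use the one‑sided limit of the ODE there. This limit is justified by continuity of $\omega$ up to $a_d$: $\omega$ is continuous on $(a_0,a_d)$, and I will take its left limit at $a_d$, which is finite because $\omega$ is non‑increasing and bounded near $a_d$. It follows that $g_1'(a_d)>0$.

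\textbf{Conclusion.} Setting $\alpha=1/g_1'(a_d)$ gives the unique solution $f_{a_d}=g_1/g_1'(a_d)$. Twice continuous differentiability on $(a_0,a_d)$ follows directly from that of $g_1$. The mirror statement for $f_{a_s}$, Lemma~\ref{fxbar-sol-ext}, suggests the authors argue the same way, so the positivity of $g_1'$ is the only non‑routine ingredient.
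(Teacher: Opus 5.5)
Your proposal is correct and follows the same matching argument the paper invokes when it omits this proof and defers to Lemma~\ref{fb-sol-ext}: the condition $f(a_0)=0$ reduces every solution to a multiple of $g_1$, and $f'(a_d)=1$ fixes that multiple as $1/g_1'(a_d)$. Your maximum-principle argument that $g_1'>0$ on $(a_0,a_d]$ (using $\delta>0$, $\omega\ge 0$, and the one-sided limit of the ODE at $a_d$) is a worthwhile addition: the paper divides by $g_1'(a_d)$ without checking that it is nonzero, and that check is exactly what the uniqueness claim requires.
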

The proof follows the same matching argument as in Lemma \ref{fb-sol-ext} and is therefore omitted.


The next result identifies the value generated by a barrier strategy. 
\begin{theorem}\label{Vb}
For $b>{a_d}$, the value generated by the barrier strategy $D^b$ is
\begin{align}
V_b(x)=
\begin{cases}
    0& x<a_0,\\
    f_b(x),& {a_0} \le x\le b,\\
    f_b(b)+x-b,& x> b.\label{vb-x+}
\end{cases}
\end{align}
Moreover, $V_b$ is continuously differentiable on $[{a_0},\infty)$, and twice continuously differentiable on $({a_0},{a_d})$, $({a_d},b)$ and $ (b, \infty)$. Furthermore, $V_b^{\prime\prime}(b-)$ is continuous with respect to $b$ for $b>{a_d}$.

\end{theorem}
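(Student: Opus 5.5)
We prove Theorem~\ref{Vb} in two parts: first the regularity and continuity claims, which are properties of the explicit $f_b$, and then the identification $V_b=f_b$, via Itô's formula applied to the reflected process under $D^b$.

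\textbf{Regularity.} By Lemma~\ref{fb-sol-reduced}, $f_b$ is $C^1$ on $(a_0,b]$ and $C^2$ on $(a_0,a_d)$ and $(a_d,b)$, with $f_b'(b)=1$. Extending linearly by $f_b(b)+x-b$ for $x>b$ therefore preserves $C^1$ at $b$, and the extension is trivially $C^2$ on $(b,\infty)$. At $a_0$ we have $f_b(a_0)=0$, and the one-sided derivative $c_1(b)g_1'(a_0)=c_1(b)$ exists, so $C^1$ holds on $[a_0,\infty)$.

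For continuity of $V_b''(b-)$ in $b$, use the ODE \eqref{diff2} at $x=b-$:
\[
V_b''(b-)=\frac{2}{\sigma^2(b)}\big(\delta f_b(b)-\mu(b)\big),
\qquad
f_b(b)=\frac{g_1(a_d)g_2(b)+g_1'(a_d)g_3(b)}{g_1(a_d)g_2'(b)+g_1'(a_d)g_3'(b)}.
\]
This is continuous in $b$ because $g_2,g_3$ are $C^1$ and $\mu,\sigma$ are continuous. We also need the denominator to be nonzero. The function $h:=g_1(a_d)g_2+g_1'(a_d)g_3$ solves \eqref{1111-3} with $h(a_d)=g_1(a_d)>0$ and $h'(a_d)=g_1'(a_d)$. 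Positivity of $g_1$ on $(a_0,a_d]$ and of $g_1'(a_d)$ comes from a maximum-principle argument for $\tfrac12\sigma_d^2f''+\mu_df'-(\delta+\omega)f=0$ with $g_1(a_0)=0$, $g_1'(a_0)=1$: $g_1$ cannot attain a positive local maximum, so it is increasing. The same argument applied to $h$ on $(a_d,\infty)$, starting from $h>0$ and $h'>0$, shows $h'>0$ throughout. Hence $c_1(b)>0$ is finite.

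\textbf{Identification.} Fix $x\in(a_0,b]$ and let $R=R^{D^b}$, $\tau=\tau^{D^b}_{a_0}$, and $\Lambda_t=\delta t+\int_0^t\omega(R_u)\,du$. Apply the Itô--Tanaka/Meyer formula to $e^{-\Lambda_t}V_b(R_t)$; this is legitimate because $V_b$ is $C^1$ and piecewise $C^2$, so the finitely many kink points $a_d,b$ of $V_b''$ contribute no local-time terms. On $(a_0,a_d)$ and $(a_d,b)$ the generator terms vanish by \eqref{diff1}--\eqref{diff2}, since $\omega=0$ above $a_d$ in the definition of $\Lambda$. The singular part is $-e^{-\Lambda_t}V_b'(b)\,dD^b_t=-e^{-\Lambda_t}dD^b_t$, because $D^b$ increases only when $R=b$. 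The process is confined to $[a_0,b]$, where $V_b'$ and $\sigma$ are bounded, so the stochastic integral is a true martingale. Localising at $t\wedge\tau$ gives
\[
V_b(x)=\mathbb E_x\Big[\int_0^{t\wedge\tau}e^{-\Lambda_s}dD^b_s\Big]+\mathbb E_x\big[e^{-\Lambda_{t\wedge\tau}}V_b(R_{t\wedge\tau})\big].
\]
Letting $t\to\infty$, the second term tends to zero: on $\{\tau<\infty\}$ it equals $V_b(a_0)=0$, and on $\{\tau=\infty\}$ it is bounded by $e^{-\delta t}\sup_{[a_0,b]}V_b\to0$. Monotone convergence on the first term, together with Proposition~\ref{alternative-value}, yields $V_b(x)=\mathcal P_{D^b}(x)$.

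For $x>b$, the strategy $D^b$ pays the lump sum $x-b$ at time $0$ and then continues from $b$, which gives $V_b(x)=x-b+V_b(b)$. For $x<a_0$ the value is $0$ by definition.

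The main obstacle is the sign and nondegeneracy argument for $h$ (equivalently, $c_1(b)$ finite) and justifying the change-of-variables formula across the points where $V_b''$ jumps. I would handle the latter by mollifying $V_b$, or by citing the Itô formula for $C^1$ functions with piecewise-continuous, bounded second derivatives.
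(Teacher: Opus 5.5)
Your proposal is correct and follows essentially the paper's route. The paper omits this proof and defers to that of Theorem~\ref{Vb-ext}, which uses the same ingredients: the generalized It\^o formula applied to $e^{-\int_0^t(\delta+\omega(R_u))\,du}f_b(R_t)$, the fact that $f_b'(b)=1$ on the support of $dD^b$, a vanishing terminal term plus monotone convergence, the lump-sum argument for $x>b$, and $V_b''(b-)=\frac{2}{\sigma^2(b)}\bigl(\delta f_b(b)-\mu(b)\bigr)$. Your maximum-principle argument showing that $g_1(a_d)g_2'(b)+g_1'(a_d)g_3'(b)>0$ is a worthwhile addition, since the paper simply asserts that the $c_i(b)$ are continuous without checking that they are finite.
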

The proof follows the same verification and martingale arguments as in Theorem \ref{Vb-ext} and is omitted.



As in the general case, the optimal barrier is characterized through the smooth-fit condition at the reflection boundary. Define 
\begin{align}
b^{\ast}=\inf\{b>a: V_b^{\prime\prime}(b-)\ge 0\}.\label{b*}
\end{align}
Using the representation \eqref{fb}, this can equivalently be written as
\begin{align}\label{b*-1}
b^{\ast}&=\inf\{b\ge a_d:\ c_2(b)g_2(b)+c_3(b)g_3(b)>\frac{\mu(b)}{\delta}\}.\nonumber\\
=&\inf\left\{b\ge a_d:\
\frac{g_1({a_d})g_2(b)+g_1^\prime({a_d})g_3(b)}{g_1({a_d})g_2^{\prime}(b)
+g_1^{\prime}({a_d})g_3^{\prime}(b)}\ge \frac{\mu(b)}{\delta} \right\}.
\end{align}


As in the general framework, the optimal barrier is finite. The following result is the reduced-case analogue of Lemma \ref{finiteness}.

\begin{lemma}
It holds: $a_d\le b^{\ast}<+\infty$.
\end{lemma}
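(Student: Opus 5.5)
The lower bound $b^\ast\ge a_d$ holds by definition, since the infimum in \eqref{b*-1} is taken over $b\ge a_d$. The content of the lemma is therefore $b^\ast<\infty$. I would argue by contradiction, mirroring the argument for Lemma~\ref{finiteness} in the reduced setting. Suppose $b^\ast=+\infty$. Then for every $b>a_d$ we have $V_b''(b-)<0$, i.e.
\[
h(b):=c_2(b)g_2(b)+c_3(b)g_3(b)=f_b(b)<\frac{\mu(b)}{\delta}.
\]
On $(a_d,b)$ the ODE gives $\tfrac12\sigma^2(b)f_b''(b-)=\delta f_b(b)-\mu(b)$, because $f_b'(b)=1$.

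The key step is to study the single function
\[
G(x):=g_1(a_d)g_2(x)+g_1'(a_d)g_3(x),\qquad x\ge a_d.
\]
This $G$ solves \eqref{1111-3}, and we have $f_b=G/G'(b)$ on $[a_d,b]$ and $h(b)=G(b)/G'(b)$. Since $g_1(a_0)=0$, $g_1'(a_0)=1$ and $\omega\ge 0$, the function $g_1$ is positive and increasing on $(a_0,a_d]$: at a first critical point we would have $g_1''=2(\delta+\omega)g_1/\sigma_d^2>0$, which is impossible. So $G(a_d)>0$ and $G'(a_d)>0$, and the same maximum-principle argument applied to \eqref{1111-3} shows that $G$ and $G'$ stay positive on $[a_d,\infty)$. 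Setting $\psi:=G'$ and differentiating the ODE gives
\[
\tfrac12\sigma^2\psi''+(\mu+\sigma\sigma')\psi'-(\delta-\mu')\psi=0.
\]
The assumption $b^\ast=\infty$ says $\delta G(b)<\mu(b)G'(b)$ for all $b$, which is equivalent to $\psi'(b)=G''(b)<0$ for all $b>a_d$. So $G'$ would be positive and strictly decreasing.

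I would then derive a contradiction from the transversality condition \eqref{assumption}, $\sup\mu'<\delta$. If $\psi$ is positive and decreasing, it has a limit $L\ge 0$, and $\psi'\to0$ along a sequence. From the $\psi$-equation, $\tfrac12\sigma^2\psi''=(\delta-\mu')\psi-(\mu+\sigma\sigma')\psi'$. Both terms on the right are positive wherever $\mu+\sigma\sigma'\ge0$, so $\psi''>0$ and $\psi$ is convex and decreasing. This lets me control $\psi'$. Alternatively, and more cleanly, I would use $\delta G<\mu G'$ directly. Since $\mu(b)\le\mu(a_d)+k(b-a_d)$ with $k<\delta$, while $G$ is increasing and $G'$ is decreasing, we get $G(b)\ge G(a_d)+G'(b)(b-a_d)$. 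Then
\[
\delta\big(G(a_d)+G'(b)(b-a_d)\big)<\big(\mu(a_d)+k(b-a_d)\big)G'(b),
\]
which gives $(\delta-k)(b-a_d)G'(b)+\delta G(a_d)<\mu(a_d)G'(b)$. Dividing by $G'(b)>0$ yields $(\delta-k)(b-a_d)<\mu(a_d)$ for all large $b$, which is a contradiction as $b\to\infty$. Hence $b^\ast<\infty$.

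I expect the main obstacle to be establishing positivity of $G$ and $G'$ on all of $[a_d,\infty)$, especially when $\mu$ may be negative somewhere. The argument via a first zero of $G'$ needs $G>0$ there, so I would run the two positivity claims jointly using the first critical point. I also need care that $\mu(b)>0$ is not assumed. If $\mu(b)\le0$ then $h(b)=G/G'>0\ge\mu(b)/\delta$ immediately, so $b^\ast<\infty$ trivially in that case, and only the regime $\mu>0$ requires the linear-growth argument above.
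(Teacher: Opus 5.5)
Your proof is correct and is essentially the paper's argument; in the reduced case the paper simply transfers the proof of Lemma~\ref{finiteness}. Assuming $b^\ast=+\infty$, the paper likewise plays the bound $V_b(b)<\mu(b)/\delta$, whose asymptotic growth rate is at most $\sup\mu'/\delta<1$ by transversality, against the lower bound $V_b(b)\ge b-a_d$ that comes from $V_b'>1$ on $(a_d,b)$. Your normalisation $V_b=G/G'(b)$, together with the first-critical-point proof that $G,G'>0$, usefully makes explicit why $V_b''(b-)<0$ for every $b$ forces $V_b''<0$ throughout $(a_d,b)$, a step the paper uses without comment.
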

The proof follows directly from Lemma \ref{finiteness} after identifying $a_s=a_d$.


The next result is the analogue of Lemma \ref{V-concavity-ext}.

\begin{lemma} \label{V-concavity}
If $b^{*}>a_d$, then 
\begin{align}
  &V_{b^{*}}^{\prime\prime}(x)\le 0, \quad a_d\le x<b^{*},\quad V_{b^{*}}^{\prime}(x)>1, \quad a_d\le x<b^{\ast}, \quad V_{b^{*}}^{\prime}(x)=1, \quad  x\ge b^{\ast}\label{con2}.
   \end{align}
\end{lemma}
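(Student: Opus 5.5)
The plan mirrors the proof of Lemma~\ref{V-concavity-ext}, with $a_s=a_d$ and the unregulated ODE \eqref{diff2} holding on all of $(a_d,b^\ast)$. The final claim is immediate: $V_{b^\ast}'(x)=1$ for $x\ge b^\ast$ follows from the linear form \eqref{vb-x+} above $b^\ast$, and from $f_{b^\ast}'(b^\ast)=1$ together with $C^1$ continuity at $b^\ast$ (Theorem~\ref{Vb}). Since $b^\ast>a_d$, Remark~\ref{11123-1-ext} (applied to the reduced case) gives $V_{b^\ast}''(b^\ast-)=0$. Moreover, by the definition \eqref{b*}, $V_b''(b-)<0$ for every $b\in(a_d,b^\ast)$.

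\textbf{Concavity.} On $(a_d,b^\ast)$ write $h=V_{b^\ast}$, which solves $\tfrac12\sigma^2h''+\mu h'-\delta h=0$. The coefficients are Lipschitz, so I would differentiate formally, or more carefully work with difference quotients or mollified coefficients. This shows that $u=h'$ solves
\[
\tfrac12\sigma^2 u''+(\mu+\sigma\sigma')u'-(\delta-\mu')u=0 .
\]
By \eqref{assumption}, $\delta-\mu'>0$, so $u$ obeys the maximum principle: it has no interior positive maximum and no interior negative minimum. Applying the same reasoning to $w=h''$ would need third derivatives, so I would use a comparison argument with the family $\{V_b\}$ instead.

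Scaling does the work here. For $b<b^\ast$, both $f_b$ and $f_{b^\ast}$ are multiples of the single solution $F:=g_1(a_d)g_2+g_1'(a_d)g_3$ on $[a_d,\cdot]$, because $c_2/c_3$ does not depend on $b$. Hence $f_b=F/F'(b)$ and $f_{b^\ast}=F/F'(b^\ast)$, so
\[
V_b''(b-)=\frac{F''(b)}{F'(b)},\qquad \operatorname{sign}V_b''(b-)=\operatorname{sign}\bigl(F''(b)F'(b)\bigr).
\]
The denominator $F'(b)=1/c_1(b)$ must be positive, since $V_b$ is a nonnegative performance function with $V_b'(b)=1$. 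Thus $F''(b)<0$ for all $b\in(a_d,b^\ast)$, which gives $h''=F''/F'(b^\ast)\le0$ on $(a_d,b^\ast)$. At $x=a_d$ the one-sided limit is obtained by continuity. This also shows that $F'$ is decreasing on $[a_d,b^\ast]$ with $F'(b^\ast)>0$.

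\textbf{Strict inequality $h'>1$.} From the above, $h'(x)=F'(x)/F'(b^\ast)\ge1$ on $[a_d,b^\ast)$. For strictness, suppose $h'(x_0)=1$ for some $x_0<b^\ast$. Monotonicity would then force $h'\equiv1$ and $h''\equiv0$ on $[x_0,b^\ast]$. Substituting into the ODE gives $\mu(x)=\delta h(x)$ on that interval, so $\mu'=\delta h'=\delta$ there, contradicting \eqref{assumption}.

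\textbf{Main obstacle.} The delicate step is justifying $F'(b)>0$ for every $b\in(a_d,b^\ast]$, which is needed both for the sign identification and for $c_1(b)$ to be well defined. I would argue by positivity: $g_1>0$ and $g_1'>0$ on $(a_0,a_d]$, since $g_1$ cannot reach a positive maximum because $\delta+\omega>0$. This makes $F(a_d)>0$ and $F'(a_d)>0$. If $F'$ vanished first at some point, $F''\le0$ would hold there while $F>0$, contradicting $\tfrac12\sigma^2F''=\delta F-\mu F'=\delta F>0$.
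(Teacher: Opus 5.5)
Your argument is correct, but it takes a genuinely different route from the paper.

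The paper (which points back to the proof of Lemma~\ref{V-concavity-ext}) argues by contradiction on $V_{b^\ast}$ itself. Assuming $V_{b^\ast}''(x_0)>0$ for some $x_0<b^\ast$, it sets $x_2=\sup\{x\in(x_0,b^\ast):V_{b^\ast}''(x)>0\}$, so that $V_{b^\ast}''(x_2-)=0$ and $V_{b^\ast}''\le 0$ on $(x_2,b^\ast)$. It then writes $V_{b^\ast}''=2(\delta V_{b^\ast}-\mu V_{b^\ast}')/\sigma^2$ and lets $x\uparrow x_2$ in a difference quotient to get $(\delta-\mu'(x_2))V_{b^\ast}'(x_2)\le 0$. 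This contradicts $V_{b^\ast}'(x_2)\ge V_{b^\ast}'(b^\ast)=1$ via \eqref{assumption}.

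You instead note that on $[a_d,b]$ every $f_b$ is the fixed profile $F=g_1(a_d)g_2+g_1'(a_d)g_3$ divided by $F'(b)$. Hence $V_b''(b-)=F''(b)/F'(b)$, and the infimum definition \eqref{b*} gives $F''<0$ on $(a_d,b^\ast)$ at once.

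Your route buys several things:
\begin{itemize}
\item Concavity needs no transversality condition; that condition enters only through the finiteness of $b^\ast$, which you use implicitly.
\item It needs no differentiability of the merely Lipschitz drift at the particular point $x_2$, which the paper's limit step tacitly uses.
\item It makes explicit the positivity $F'>0$ on $[a_d,\infty)$, which the paper uses silently when writing $c_1(b)$ as a reciprocal.
\item It delivers the strict inequality $V_{b^\ast}'>1$ claimed here. The paper's contradiction argument, as written, yields only $V_{b^\ast}''\le 0$ and hence $V_{b^\ast}'\ge 1$.
\end{itemize}

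The paper's mechanism, in turn, uses only the behaviour of $V_{b^\ast}''$ at the barrier, not the sign of $V_b''(b-)$ for every $b<b^\ast$. It therefore really shows that $F''$ cannot pass from positive values back to zero, so $F'$ is decreasing then increasing. That is a structural fact your argument neither needs nor provides.

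Three clean-ups:
\begin{itemize}
\item Your separate strictness paragraph is redundant. Strict $F''<0$ on $(a_d,b^\ast)$ already makes $F'$ strictly decreasing, so $V_{b^\ast}'(x)=F'(x)/F'(b^\ast)>1$ for $a_d\le x<b^\ast$.
\item Justifying $F'(b)>0$ by ``$V_b$ is a nonnegative performance function'' is circular, since it presupposes Lemma~\ref{fb-sol-reduced}, which needs $F'(b)\neq 0$. Rely only on your first-zero ODE argument.
\item The maximum-principle detour for $h'$ can simply be dropped.
\end{itemize}
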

The proof follows the same contradiction argument as in Lemma \ref{V-concavity-ext}, after replacing the three-region decomposition by the present two-region structure.


We can now characterize the optimal strategy. 
\begin{theorem}\label{maintheorem}
The value function satisfies  $V(x)=V_{b^{\ast}}(x)$ for $x \ge  {a_0}$ and the barrier strategy $D^{b^{\ast}}$ is optimal.
\end{theorem}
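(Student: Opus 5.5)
We prove Theorem~\ref{maintheorem} by checking that $f:=V_{b^\ast}$ satisfies the hypotheses of the reduced-case verification theorem, the analogue of Theorem~\ref{verification-ext}. That theorem gives $V_{b^\ast}\ge V$. The reverse inequality $V_{b^\ast}=\mathcal{P}_{D^{b^\ast}}\le V$ holds because $D^{b^\ast}$ is admissible, since $b^\ast\ge a_d=a_s$. Together these give $V=V_{b^\ast}$ and optimality of $D^{b^\ast}$. By Lemma~\ref{finiteness} in its reduced form, $b^\ast<\infty$, so the candidate is well defined. We treat the cases $b^\ast>a_d$ and $b^\ast=a_d$ separately.

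\textbf{Regularity and the continuation region.} By Theorem~\ref{Vb} (and by Lemma~\ref{fxbar-sol} when $b^\ast=a_d$), $V_{b^\ast}$ is $C^1$ on $[a_0,\infty)$, piecewise $C^2$, and vanishes on $(-\infty,a_0]$. On $(a_0,a_d)$ it solves \eqref{reduced-hjb-1} by construction. On $(a_d,b^\ast)$ it satisfies $\mathcal{L}V_{b^\ast}-\delta V_{b^\ast}=0$, where $\mathcal{L}$ denotes the generator of the unregulated dynamics. Lemma~\ref{V-concavity} gives $V_{b^\ast}'\ge 1$ there, so $1-V'\le 0$, and the maximum in \eqref{reduced-hjb-2} equals $0$ on $[a_d,b^\ast)$.

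\textbf{Payout region $x>b^\ast$.} Here $V_{b^\ast}'=1$, so it remains to show $h(x):=\mu(x)-\delta\big(V_{b^\ast}(b^\ast)+x-b^\ast\big)\le 0$. The smooth-fit property will come from Remark~\ref{11123-1-ext} in reduced form. If $b^\ast>a_d$, letting $x\uparrow b^\ast$ in the ODE with $V''(b^\ast-)=0$ and $V'(b^\ast)=1$ gives $h(b^\ast)=0$. If $b^\ast=a_d$, then $V''_{b^\ast}(a_d-)\ge 0$ at the limit barrier. Combined with the ODE limit this gives $\mu(a_d)\le\delta V(a_d)$, i.e.\ $h(a_d)\le 0$ (this is exactly the defining inequality \eqref{b*-1} at $b=a_d$). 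In both cases $h'(x)=\mu_u'(x)-\delta<0$ by the transversality condition \eqref{assumption}, so $h\le 0$ on $(b^\ast,\infty)$ and the max in \eqref{reduced-hjb-2} is again $0$.

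\textbf{The boundary case $b^\ast=a_d$ at the point $a_d$.} We also need $V'\ge1$ just below the payout region, which holds trivially because $[a_d,b^\ast)$ is empty. What remains is to justify the $C^1$ fit at $a_d$, which is built into Lemma~\ref{fxbar-sol} via $f'(a_d)=1$.

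I expect the main obstacle to be the payout-region inequality in the boundary case $b^\ast=a_d$. Translating $V_{b}''(b-)\ge0$ into $\mu(a_d)\le\delta V(a_d)$ requires care: $V''(a_d-)$ is a limit from the distress equation, which carries the extra $\omega(a_d-)V$ term, rather than from the unregulated one. I would handle this directly through the characterization \eqref{b*-1}, using that $\omega=0$ from $a_d$ onward and continuity of $\omega$ gives $\omega(a_d-)=0$. If this fails, I would instead use continuity of $V_b''(b-)$ as $b\downarrow a_d$.
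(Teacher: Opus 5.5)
Your proposal is correct and follows the paper's route, namely the argument of Theorem~\ref{maintheorem-ext}: verify the HJB for $V_{b^\ast}$ using Lemma~\ref{V-concavity} on $[a_d,b^\ast)$ and smooth fit plus the transversality condition on $(b^\ast,\infty)$, apply the verification theorem, and use admissibility of $D^{b^\ast}$ for the reverse inequality; your separate treatment of $b^\ast=a_d$ via \eqref{b*-1} is more careful than the paper's, which lets $x\uparrow b^\ast$ in the ODE on $(a_s,b^\ast)$ even when that interval is empty. One correction: drop the appeal to $\omega(a_d-)=0$, which the standing assumptions do not give (a constant intensity on $(a_0,a_d)$ is allowed) and which you do not need, since the initial values $g_2(a_d)=1$, $g_3(a_d)=0$, $g_2'(a_d)=0$, $g_3'(a_d)=1$ reduce \eqref{b*-1} at $b=a_d$ to $g_1(a_d)/g_1'(a_d)\ge\mu(a_d)/\delta$, i.e.\ $h(a_d)\le 0$, with only the unregulated drift appearing (closedness of the set at $a_d$ follows from right-continuity of the ratio in $b$, as in your fallback).
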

The proof follows the same verification argument as in Theorem \ref{maintheorem-ext}, using the reduced HJB system \eqref{reduced-hjb-1}--\eqref{reduced-hjb-2} together with the structural properties established in Lemma \ref{V-concavity}.


Compared with the general framework, the reduced model admits a substantially simpler representation since only two continuation regions remain. In particular, the value function is constructed by matching solutions across a single threshold $a_d$,  rather than across the two interfaces $a_d$ and $a_s$ appearing in the general model. Accordingly, the computational procedure simplifies as 

\medskip

 \noindent Step 1: Determine the fundamental solutions   $g_1(\cdot)$, $g_2(\cdot)$ and $g_3(\cdot)$ from \eqref{1111-1} and \eqref{1111-3}, equivalently from the following initial value problems:
\begin{align}
&\frac{1}{2}\sigma_d^2(x)  g_1^{\prime\prime}(x)+\mu_d(x) g_1^{\prime}(x)-(\delta+\omega(x)) {g_1}(x)=0,\ \ x>{a_0},\label{1111-1-00} \quad g_1({a_0})=0,\ \ g_1^\prime({a_0})=1.
\\
&\frac{1}{2}\sigma^2(x)  g_2^{\prime\prime}(x)+\mu(x) g_2^{\prime}(x)-\delta {g_2}(x)=0,\ \ x>{a_0}, \label{1111-3-00}
\quad g_2({a_d})=1,\ g_2^{\prime}({a_d})=0,
\\
&\frac{1}{2}\sigma^2(x)  g_3^{\prime\prime}(x)+\mu(x) g_3^{\prime}(x)-\delta {g_3}(x)=0,\ \ x>{a_0}, \label{1111-3-001}
\quad g_3({a_d})=0,\ g_3^{\prime}({a_d})=1.
\end{align}

\medskip
 
\noindent  Step 2:  Compute
$c_1(b)=\left(g_1({a_d})g_2^{\prime}(b)
    +g_1^{\prime}({a_d})g_3^{\prime}(b)\right)^{-1}$,
    $ c_2(b)=g_1({a_d})c_1(b)$,
     $ c_3(b)=g_1^{\prime}({a_d})c_1(b).
 $

\medskip

\noindent Step 3: Determine the optimal barrier 
 $b^\ast=\inf\{b>a_d: V_b^{\prime\prime}(b-)\ge 0\}$. Equivalently, first evaluate  $\lim_{b\downarrow a_d}V_b^{\prime\prime}(b-)=\frac{g_1({a_d})g_2(a_d)+g_1^\prime({a_d})g_3(a_d)}{g_1({a_d})g_2^{\prime}(a_d)
+g_1^{\prime}({a_d})g_3^{\prime}(a_d)}-\frac{\mu(a_d)}{\delta}$. If this quantity is non-negative, then    $b^{\ast}=a_d$. 
Otherwise, determine $b^\ast$ as the smallest  root on $(a_d,\infty)$ of $\frac{g_1({a_d})g_2(b)+g_1^\prime({a_d})g_3(b)}{g_1({a_d})g_2^{\prime}(b)
+g_1^{\prime}({a_d})g_3^{\prime}(b)}-\frac{\mu(b)}{\delta}$.

\noindent Step 4: The value function is given by 
\begin{align}
    V_{b^{\ast}}(x)=\begin{cases}
        c_1({b^{\ast}})  g_1(x), & x\in[{a_0},{a_d}),\\
        c_2({b^{\ast}}) g_2(x)+c_3({b^{\ast}}) g_3(x), & x\in[{a_d},{b^{\ast}}),  \\
        =x-{b^{\ast}}+V_{b^{\ast}}({b^{\ast}}),& x>b^\ast.\end{cases}\label{fb-00}
\end{align}

The above procedure provides a complete characterization of the optimal strategy and the value function in the reduced model. We conclude this section by highlighting an important structural implication of the liquidation-intensity feature concerning the shape of the value function.

\begin{remark}\label{nonconcavity}
By Theorem \ref{maintheorem},  the value function satisfies $V(x)=V_{b^\ast}(x)$. Hence, by Lemma \ref{V-concavity}, the value function is concave on  $[a_s,\infty)$. However,global concavity of the value function is not guaranteed. In particular, on the distress region $(a_0,a_d)$, the value function may exhibit convexity or mixed convex-concave behavior, depending on the model parameters and the bankruptcy intensity function $\omega(\cdot)$. 
This phenomenon contrasts with many classical diffusion dividend models without a liquidation-intensity feature or solvency constraints, where the value function is typically globally concave. The loss of global concavity arises from the interaction between the distress  region and the restriction that dividend payments are suspended below the solvency threshold. 
\end{remark}

\subsection{Expected survival time}

Recall that under the Omega ruin mechanism, bankruptcy can occur either upon reaching the ultimate bankruptcy level $a_0$ or through the state-dependent hazard rate $\omega(\cdot)$ in the distress region $(a_0,a_d)$. In the reduced case $a_s=a_d$, the continuation region consists of only two pieces: $(a_0,a_d)$ and $(a_d,b^\ast)$.
Since the barrier strategy $D^{b^\ast}$  remains  optimal,  the expected survival time is
$$
m(x)=\mathbb{E}_x\!\left[T^{D^{b^\ast}}\right], \qquad x\ge a_0.
$$
The same argument as in  Proposition \ref{survival-representation-prop} 
yields  
\begin{align}
m(x)
=\mathbb{E}_x\bigg[\int_0^{\tau_{a_0}^{D^{b^{\ast}}}}
    e^{-\int_0^s\omega(R_u^{D^{b^{\ast}}})du}ds\bigg], \quad x\ge a_0.\label{expectedsurvival-reduced}
\end{align}
The derivation follows the same general ideas as in the general case. However, since the reduced model involves matching only two regions instead three, the algebraic details are different. We therefore state only the final boundary-value problem and the explicit representation and omit the proof. 

\begin{theorem}\label{thm:survival-reduced}
Assume that $b^\ast<\infty$.

If $b^\ast>a_d$, then $m(\cdot)$ is continuously differentiable on $[a_0,b^\ast]$ and piecewise twice continuously differentiable on $(a_0,a_d)$ and $(a_d,b^\ast)$, and it solves
\begin{align}
&\frac12 \sigma_d^2(x)f''(x)+\mu_d(x)f'(x)-\omega(x)f(x)+1=0,
\qquad a_0<x<a_d, \label{eq:red1}\\
&\frac12 \sigma^2(x)f''(x)+\mu(x)f'(x)+1=0,
\qquad a_d<x<b^\ast, \label{eq:red2}\\
&f(a_0)=0,\qquad f'(b^\ast)=0. \label{eq:red3}
\end{align}
Moreover,
$
m(x)=m(b^\ast)$ for $x>b^\ast.
$

If $b^\ast=a_d$, then $m(x)$ is continuously differentiable on $[a_0,\infty)$ and piecewise twice continuously differentiable on $(a_0,a_d)$, and it solves
\begin{align}
&\frac12 \sigma_d^2(x)f''(x)+\mu_d(x)f'(x)-\omega(x)f(x)+1=0,
\qquad a_0<x<a_d, \label{eq:red4}\\
&f(a_0)=0,\qquad f'(a_d)=0. \label{eq:red5}
\end{align}
Furthermore,
$
m(x)=m(a_d)$ for $ x>a_d.
$
\end{theorem}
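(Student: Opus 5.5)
We prove Theorem~\ref{thm:survival-reduced} by construction and verification, in parallel with Theorem~\ref{verification-ext-new}. Take the case $b^\ast>a_d$. First, build a candidate $f$ on $[a_0,b^\ast]$ from the reduced analogues of the fundamental solutions of Definition~\ref{survival-fundamental-ext}. On $(a_0,a_d)$ use $\tilde g_4$ together with the particular solution $B_1$ from variation of parameters. On $(a_d,b^\ast)$ use the pair solving $\frac12\sigma^2 f''+\mu f'=0$ normalised at $a_d$, plus a particular solution $B_2$. The condition $f(a_0)=0$ removes the $\tilde g_5$ component. $C^1$-matching at $a_d$ gives two linear equations, and $f'(b^\ast)=0$ gives a third, so there are three unknown constants.

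The system is solvable by an explicit recursion. On $(a_d,b^\ast)$ the homogeneous equation is first order in $f'$: it gives $f'(x)=e^{-\int_{a_d}^x 2\mu/\sigma^2}\big(f'(a_d)-\int_{a_d}^x \tfrac{2}{\sigma^2(y)}e^{\int_{a_d}^y 2\mu/\sigma^2}dy\big)$. So $f'(b^\ast)=0$ fixes $f'(a_d)>0$ uniquely. Then $\tilde C_4$ is fixed by $\tilde C_4\tilde g_4'(a_d)+B_1'(a_d)=f'(a_d)$, provided $\tilde g_4'(a_d)\neq 0$. Finally $f(a_d)$ sets the additive constant above $a_d$.

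For $\tilde g_4'(a_d)\neq0$, use $\omega\ge0$ and $\tilde g_4(a_0)=0$, $\tilde g_4'(a_0)=1$. A maximum-principle argument shows $\tilde g_4$ stays positive and increasing on $(a_0,a_d]$. Indeed, at a first point where $\tilde g_4'=0$ with $\tilde g_4>0$, the equation gives $\tilde g_4''=2\omega\tilde g_4/\sigma_d^2\ge0$. A standard comparison then rules out a turning point. The resulting $f$ is $C^1$ on $[a_0,b^\ast]$ and $C^2$ on each open piece. Extend it by $f(x)=f(b^\ast)$ for $x>b^\ast$.

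Second, verify $f=m$. Apply It\^o's formula (Meyer--It\^o, which is fine since $f\in C^1$ and piecewise $C^2$, so the kink at $a_d$ gives no local-time term) to $e^{-\int_0^t\omega(R_u)du}f(R_t)$. Here $R=R^{D^{b^\ast}}$ is reflected at $b^\ast$ and stopped at $\tau_{a_0}\wedge t$. The $dt$ terms give $-e^{-\int_0^t\omega}\,dt$ by the ODEs. The reflection term is $-f'(b^\ast)\,dD_t=0$ by smooth fit. The stochastic integral is a true martingale because $f'$ and $\sigma$ are bounded on the compact interval $[a_0,b^\ast]$. Taking expectations gives
\[
f(x)=\mathbb{E}_x\Big[\int_0^{\tau_{a_0}\wedge t}e^{-\int_0^s\omega}\,ds\Big]+\mathbb{E}_x\Big[e^{-\int_0^{\tau_{a_0}\wedge t}\omega}f(R_{\tau_{a_0}\wedge t})\Big].
\]

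Third, let $t\to\infty$. The first term converges by monotone convergence to the representation \eqref{expectedsurvival-reduced}. For the second term, note $f$ is bounded and $f(a_0)=0$, so it suffices that on $\{\tau_{a_0}=\infty\}$ the factor $e^{-\int_0^t\omega}\to0$. This holds when $\omega>0$ on a subinterval of the distress zone, since the reflected regular diffusion on the compact $[a_0,b^\ast]$ is recurrent and visits it infinitely often with positive occupation time. Alternatively, $\tau_{a_0}<\infty$ a.s. by recurrence of the reflected diffusion, which covers the case $\omega\equiv0$. I expect this limit to be the main obstacle. I would handle it by bounding $\mathbb{E}_x[\tau_{a_0}]$ with the scale/speed measure of the reflected diffusion on $[a_0,b^\ast]$, which is finite since $\sigma>0$ is continuous there.

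For $x>b^\ast$, the initial lump payment moves the process to $b^\ast$ with no time elapsed, so $m(x)=m(b^\ast)$. The case $b^\ast=a_d$ is identical with the middle region removed: reflection happens at $a_d$, and $f'(a_d)=0$ determines $\tilde C_4$ directly.
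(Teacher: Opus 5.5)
Your proposal is correct and follows essentially the paper's route. The paper omits this proof and defers to the general-case argument: It\^o's formula for $e^{-\int_0^t\omega(R_u)\,du}f(R_t)$, with the reflection term removed by $f'(b^\ast)=0$; passage to the limit into \eqref{expectedsurvival-reduced}; and an explicit construction of $f$ from $\tilde g_4$, $B_1$ and the fundamental solutions above $a_d$.

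Your extra steps usefully fill in what the paper leaves implicit. First, you establish the nondegeneracy $\tilde g_4'(a_d)>0$, which the paper's formula for $C_4$ silently needs. Prove it via $(p\tilde g_4')'=2\omega p\tilde g_4/\sigma_d^2\ge0$ with $p=e^{\int 2\mu_d/\sigma_d^2}$, because the pointwise first-touch computation alone does not exclude a zero of $\tilde g_4'$ where $\omega=0$. Second, you use $\tau_{a_0}<\infty$ a.s.\ instead of the paper's assertion that $e^{-\int_0^t\omega}\to0$, which fails when $\omega\equiv0$.
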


Theorem \ref{thm:survival-reduced} characterizes the expected survival time through a boundary-value problem. We now derive an explicit representation of the solution in terms of fundamental solutions of the associated homogeneous equations and particular solutions obtained via variation of parameters.

Let $g_4,g_5$ solve
\begin{align}
&\frac12 \sigma_d^2(x)g''(x)+\mu_d(x)g'(x)-\omega(x)g(x)=0,
\qquad x>a_0, \\
&g_4(a_0)=0,\quad g_4'(a_0)=1, \qquad
g_5(a_0)=1,\quad g_5'(a_0)=0,
\end{align}
and let $g_6,g_7$ solve
\begin{align}
\frac12 &\sigma^2(x)g''(x)+\mu(x)g'(x)=0,
\qquad x>a_d, \\
&g_6(a_d)=0,\quad g_6'(a_d)=1, \qquad
g_7(a_d)=1,\quad g_7'(a_d)=0.
\end{align}
Define
\begin{align*}
B_1(x)
&=g_4(x)\int_{a_0}^x \frac{g_5(y)}{W(g_4,g_5)(y)}\frac{2}{\sigma_d^2(y)}\,dy
-g_5(x)\int_{a_0}^x \frac{g_4(y)}{W(g_4,g_5)(y)}\frac{2}{\sigma_d^2(y)}\,dy,\\
B_2(x)
&=g_6(x)\int_{a_d}^x \frac{g_7(y)}{W(g_6,g_7)(y)}\frac{2}{\sigma^2(y)}\,dy
-g_7(x)\int_{a_d}^x \frac{g_6(y)}{W(g_6,g_7)(y)}\frac{2}{\sigma^2(y)}\,dy.
\end{align*}

Using the above fundamental solutions and particular solutions, the expected survival time admits the following explicit representation.
\begin{theorem}
If $b^\ast>a_d$,
\begin{align}
m(x)=
\begin{cases}
C_4\,g_4(x)+B_1(x), & a_0\le x<a_d,\\
C_6\,g_6(x)+C_7\,g_7(x)+B_2(x), & a_d\le x\le b^\ast,\\
m(b^\ast;\theta,b^\ast), & x>b^\ast,
\end{cases}
\label{expsurvial-expression1}
\end{align}
 where the constants are given by
\begin{align*}
C_4
&=
-\frac{B_1'(a_d)\,g_6'(b^\ast)+B_1(a_d)\,g_7'(b^\ast)+B_2'(b^\ast)}
{g_4'(a_d)\,g_6'(b^\ast)+g_4(a_d)\,g_7'(b^\ast)}, 
\quad 
C_6=C_4 g_4'(a_d)+B_1'(a_d), 
\quad C_7=C_4 g_4(a_d)+B_1(a_d). 
\end{align*}


If $b^\ast=a_d$, then
\begin{align}
m(x)=
\begin{cases}
C_8\,g_4(x)+B_1(x), & a_0\le x\le a_d,\\
m(a_d;\theta,b^\ast), & x>a_d,
\end{cases}
\label{expsurvial-expression2}
\end{align}
where 
$
C_8=-\frac{B_1'(a_d)}{g_4'(a_d)}.$ 

\end{theorem}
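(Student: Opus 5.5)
The plan is to take the boundary-value characterisation in Theorem~\ref{thm:survival-reduced} as given and solve that linear problem explicitly by variation of parameters, region by region, then fix the free constants from the boundary and matching conditions. On $(a_0,a_d)$ the general solution of \eqref{eq:red1} is $\alpha g_4+\beta g_5+B_1$. First I check that $B_1$ is a particular solution: differentiating the variation-of-parameters formula, the terms from differentiating the integrals cancel in $B_1'$. In $B_1''$ they leave $\big(g_4 g_5'-g_5g_4'\big)\cdot\frac{2}{\sigma_d^2 W}\cdot(-1)$ up to sign conventions. With $W=W(g_4,g_5)$ this produces exactly the forcing $-2/\sigma_d^2$, so $\tfrac12\sigma_d^2B_1''+\mu_dB_1'-\omega B_1=-1$. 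The same check applies to $B_2$ on $(a_d,b^\ast)$ with \eqref{eq:red2}. The Wronskians never vanish: $W(a_0)=-1$, and Abel's formula gives $W(x)=-\exp(-\int 2\mu/\sigma^2)$. Hence the integrals are well defined. Also $B_1(a_0)=B_1'(a_0)=0$ and $B_2(a_d)=B_2'(a_d)=0$.

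Next I impose the conditions. Since $g_5(a_0)=1$, $g_4(a_0)=0$ and $B_1(a_0)=0$, the condition $f(a_0)=0$ kills the $g_5$ coefficient, leaving $m=C_4g_4+B_1$ on $[a_0,a_d]$. On $[a_d,b^\ast]$ write $m=C_6g_6+C_7g_7+B_2$. The $C^1$ matching at $a_d$ is read off directly using the initial data $g_6(a_d)=0$, $g_6'(a_d)=1$, $g_7(a_d)=1$, $g_7'(a_d)=0$ and the vanishing of $B_2,B_2'$ at $a_d$. Continuity gives $C_7=C_4g_4(a_d)+B_1(a_d)$, and continuity of the derivative gives $C_6=C_4g_4'(a_d)+B_1'(a_d)$. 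Substituting into $f'(b^\ast)=0$ gives the linear equation
\[
C_4\big(g_4'(a_d)g_6'(b^\ast)+g_4(a_d)g_7'(b^\ast)\big)+B_1'(a_d)g_6'(b^\ast)+B_1(a_d)g_7'(b^\ast)+B_2'(b^\ast)=0,
\]
and solving it yields the stated $C_4$. For $x>b^\ast$ the surplus is immediately reflected to $b^\ast$, so $m(x)=m(b^\ast)$, as already asserted in Theorem~\ref{thm:survival-reduced}.

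When $b^\ast=a_d$ only the first region remains. The condition $f'(a_d)=0$ reads $C_8g_4'(a_d)+B_1'(a_d)=0$, which gives $C_8=-B_1'(a_d)/g_4'(a_d)$.

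The main obstacle is justifying that the denominators are nonzero: $g_4'(a_d)g_6'(b^\ast)+g_4(a_d)g_7'(b^\ast)$ in general, and $g_4'(a_d)$ in the degenerate case. This also gives uniqueness, so that the constructed function coincides with $m$ from Theorem~\ref{thm:survival-reduced}. The plan is a maximum-principle argument, shown here for $g_4$; the same argument handles the combined denominator after noting that it equals the derivative at $b^\ast$ of the $C^1$-glued solution of the homogeneous problem started with value $0$ and slope $1$ at $a_0$.
\begin{itemize}
\item Since $\omega\ge0$, $g_4$ is positive and increasing on $(a_0,a_d]$. Suppose $g_4'$ first vanished at some point; there $g_4>0$, so the ODE gives $\tfrac12\sigma_d^2g_4''=\omega g_4\ge0$.
\item If $\omega>0$ at that point, this is a contradiction. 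Otherwise uniqueness for the ODE rules out the touching.
\item Hence $g_4'(a_d)>0$.
\item On $(a_d,b^\ast)$, $g_6'$ and $g_7'$ solve first-order equations, so $g_6'>0$ and $g_7'\equiv0$. The denominator then reduces to $g_4'(a_d)g_6'(b^\ast)>0$.
\end{itemize}
With the constants solved, uniqueness follows: the difference of two solutions solves the homogeneous problem with zero data, so it vanishes.
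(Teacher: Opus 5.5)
Your argument follows the paper's route. The paper omits the reduced-case proof, but its proof of the three-zone analogue (Theorem~\ref{survival-representation}) proceeds exactly as you do: homogeneous plus particular solution on each piece, $f(a_0)=0$ killing the $g_5$ coefficient, $C^1$ matching at $a_d$ read off from the initial data and $B_2(a_d)=B_2'(a_d)=0$, and $f'(b^\ast)=0$ fixing $C_4$. Your extra check that the denominators do not vanish, via $g_6'>0$ and $g_7\equiv 1$, is something the paper only asserts.

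One step needs repair: ``otherwise uniqueness for the ODE rules out the touching.'' A first zero of $g_4'$ with $g_4''=0$ and $g_4>0$ does not contradict ODE uniqueness. This is exactly the case your sign argument misses at the endpoint $a_d$ for the paper's intensity \eqref{eq:omega-decr}, where $\omega(a_d-)=0$ but $\omega>0$ to the left. Instead, set $p(x)=\exp\bigl(\int_{a_0}^x 2\mu_d(y)/\sigma_d^2(y)\,dy\bigr)$ and note that $(p\,g_4')'=2p\,\omega\,g_4/\sigma_d^2\ge 0$ as long as $g_4\ge 0$. Since $p\,g_4'=1$ at $a_0$, this gives $p\,g_4'\ge 1$ on $[a_0,a_d]$, so $g_4'(a_d)>0$.
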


%
%
%

\section{Numerical illustrations}\label{sec:numerical}

This section illustrates the framework and its main messages with five examples. Example~\ref{num:general} exercises the full generality of Section~\ref{results}: a model in which the drift and volatility genuinely
differ across the three zones. Example~\ref{num:nonconcavity} exhibits the non-concavity phenomenon. Example~\ref{num:design} is the central one: it compares the traditional model with three reduced-form redesigns and documents when the liquidation-intensity feature improves shareholder value, firm survival, or both. Example~\ref{num:sensitivity} maps the sensitivity to the design levers and contrasts the non-increasing intensity used throughout with the constant intensity of the earlier literature. Example~\ref{num:simulation}
goes beyond the closed-form expected survival time of Section~\ref{sec:expectation} and studies, by simulation, the full joint distribution of the time to liquidation and the realised shareholder value.

Throughout we take the discount rate \(\delta = 0.10\). Unless stated otherwise the liquidation intensity is the non-increasing function 
\begin{equation}\label{eq:omega-decr}
  \omega(x) \;=\; d\left(\frac{a_d - x}{a_d - a_0}\right)^{\gamma} \text{ for }a_0 < x < a_d, \qquad  \text{ and } \qquad \omega(x) = 0 \text{ for } x \ge a_d,
\end{equation}
with peak intensity \(d>0\) attained at the lower threshold \(a_0\) and decaying to zero at \(a_d\) (we use the linear choice \(\gamma = 1\) unless noted otherwise). The form \eqref{eq:omega-decr} is natural: the liquidation pressure is strongest when the firm is deepest in distress and relaxes as the surplus recovers, and it joins the regulated zone continuously since \(\delta + \omega(a_d-) = \delta\). The constant intensity \(\omega(\cdot) \equiv d\) of \cite{AlbrecherGerberShiu2011} and subsequent work is the special case obtained by replacing \eqref{eq:omega-decr} with a constant; we return to this contrast in Example~\ref{num:sensitivity}. All optimal barriers, value functions and survival times below are computed with the constructive procedure of Section~\ref{results}; the accompanying code reproduces every figure and table.

\subsection{Example 1: a general three-zone model}\label{num:general}

We first consider an instance with genuinely zone-dependent dynamics, to emphasise that the framework is not confined to a stationary process. Let \(a_0 = 0\), \(a_d = 4\) and \(a_s = 7\), and take
\begin{equation*}
  \mathrm{d}R_t^D =
  \begin{cases}
    \mu_d\,\mathrm{d}t + \sigma_d\,\mathrm{d}W_t, & a_0 < R_{t-}^D < a_d,\\[2pt]
    \mu_s(R_{t-}^D)\,\mathrm{d}t + \sigma_s(R_{t-}^D)\,\mathrm{d}W_t, & a_d \le R_{t-}^D < a_s,\\[2pt]
    \mu_u(R_{t-}^D)\,\mathrm{d}t + \sigma_u(R_{t-}^D)\,\mathrm{d}W_t, & R_{t-}^D \ge a_s,
  \end{cases}
\end{equation*}
with a low, flat drift in distress (\(\mu_d = 0.6\), \(\sigma_d = 1.8\)), and piecewise-linear coefficients that steepen as the firm stabilises and then operates freely: \(\mu_s(x) = \mu_d + 0.05\,(x-a_d)\), \(\sigma_s(x) = \sigma_d + 0.18\,(x-a_d)\) on the regulated zone $[a_d,a_s)$, and \(\mu_u(x) = \mu_s(a_s) + 0.09\,(x-a_s)\), \(\sigma_u(x) = \sigma_s(a_s) + 0.40\,(x-a_s)\) on the unregulated zone $[a_s,\infty)$. The intensity is \eqref{eq:omega-decr} with \(d = 0.8\). The growth condition \(\sup \mu_u'(x) = 0.09 < \delta = 0.10\) for $x\ge a_s$ holds.

Solving the model gives an optimal barrier \(b^{\ast} = 7.844\), which lies above \(a_s\) so that all three zones and an interior dividend barrier are present. Table~\ref{tab:ex1} reports the value function and expected survival time at selected surplus levels, and Figure~\ref{fig:ex1} displays \(V\) over the three (shaded) zones together with the zone-dependent drift and volatility profiles. The value function is smooth across the thresholds and linear with unit slope beyond \(b^{\ast}\), as it must be; the expected survival time saturates at
\(b^{\ast}\) (any surplus above the barrier is immediately distributed).

\begin{table}[H]
\centering
\begin{tabular}{r r r r r r r}
\hline
\(x\)   & 0.5 & 2 & 4 & 6 & 8 & 10 \\
\hline
\(V(x)\) & 0.474 & 1.881 & 4.245 & 6.408 & 8.416 & 10.416 \\
\(m(x)\) & 2.63  & 9.28  & 17.37 & 20.99 & \cellcolor{gray!25}{21.66} & \cellcolor{gray!25}{21.66} \\
\hline
\end{tabular}
\caption{Example 1: shareholder value \(V(x)\) and expected survival time
\(m(x)\) for the general three-zone model (\(b^{\ast}=7.844\)).}
\label{tab:ex1}
\end{table}

\begin{figure}[H]
\centering
\includegraphics[width=0.49\linewidth]{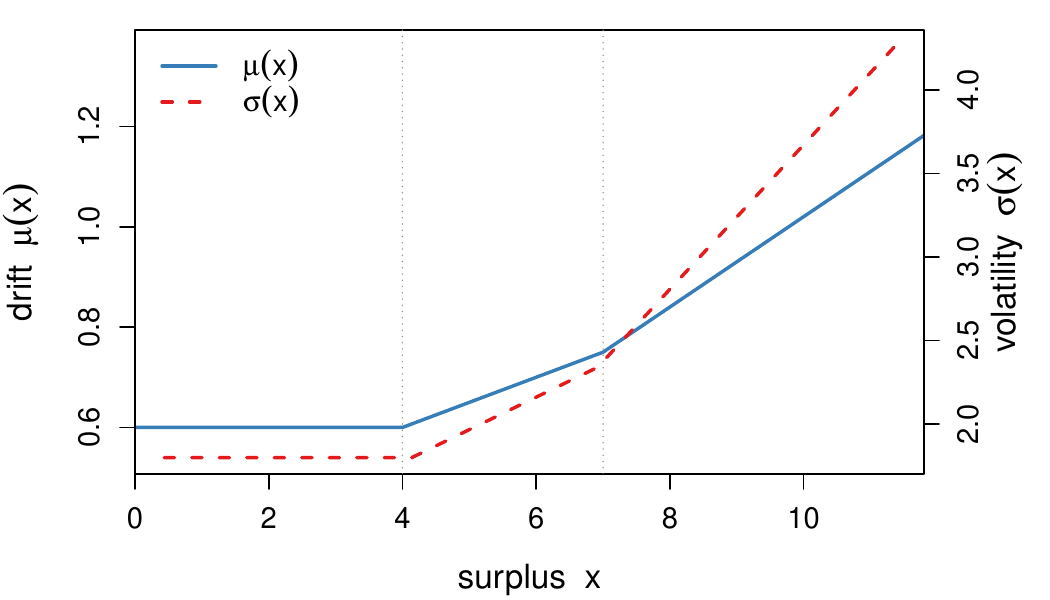}
\includegraphics[width=0.49\linewidth]{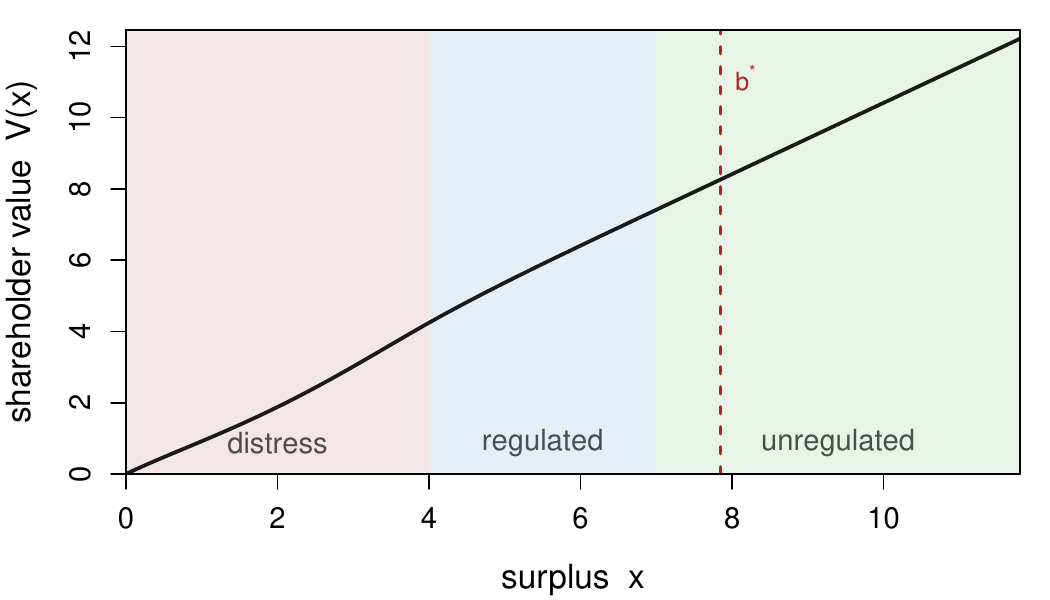}
\caption{Example 1. \textit{Left}: the zone-dependent drift \(\mu(x)\) and volatility \(\sigma(x)\). \textit{Right}: the value function \(V(x)\) across the distress, regulated and unregulated zones (shaded), with the optimal barrier \(b^{\ast}\) marked.}
\label{fig:ex1}
\end{figure}

\subsection{Example 2: non-concavity of the value function}\label{num:nonconcavity}

In the classical diffusion dividend problem the value function is concave everywhere. This is not observed here: the interaction of the liquidation intensity with the solvency constraint can make \(V\) convex over part of the low-surplus region. We take \(a_0 = 0\), \(a_d = 5\), \(a_s = 11\), a moderate set of dynamics (\(\mu_d = 0.8\), \(\sigma_d = 2.0\), with slopes \(0.04\) and \(0.10\)/\(0.15\) on the regulated/unregulated zones), and intensity
\eqref{eq:omega-decr} with \(d = 2\).

\begin{figure}[htb]
\centering
\begin{subfigure}{0.49\linewidth}
  \centering
  \includegraphics[width=\linewidth]{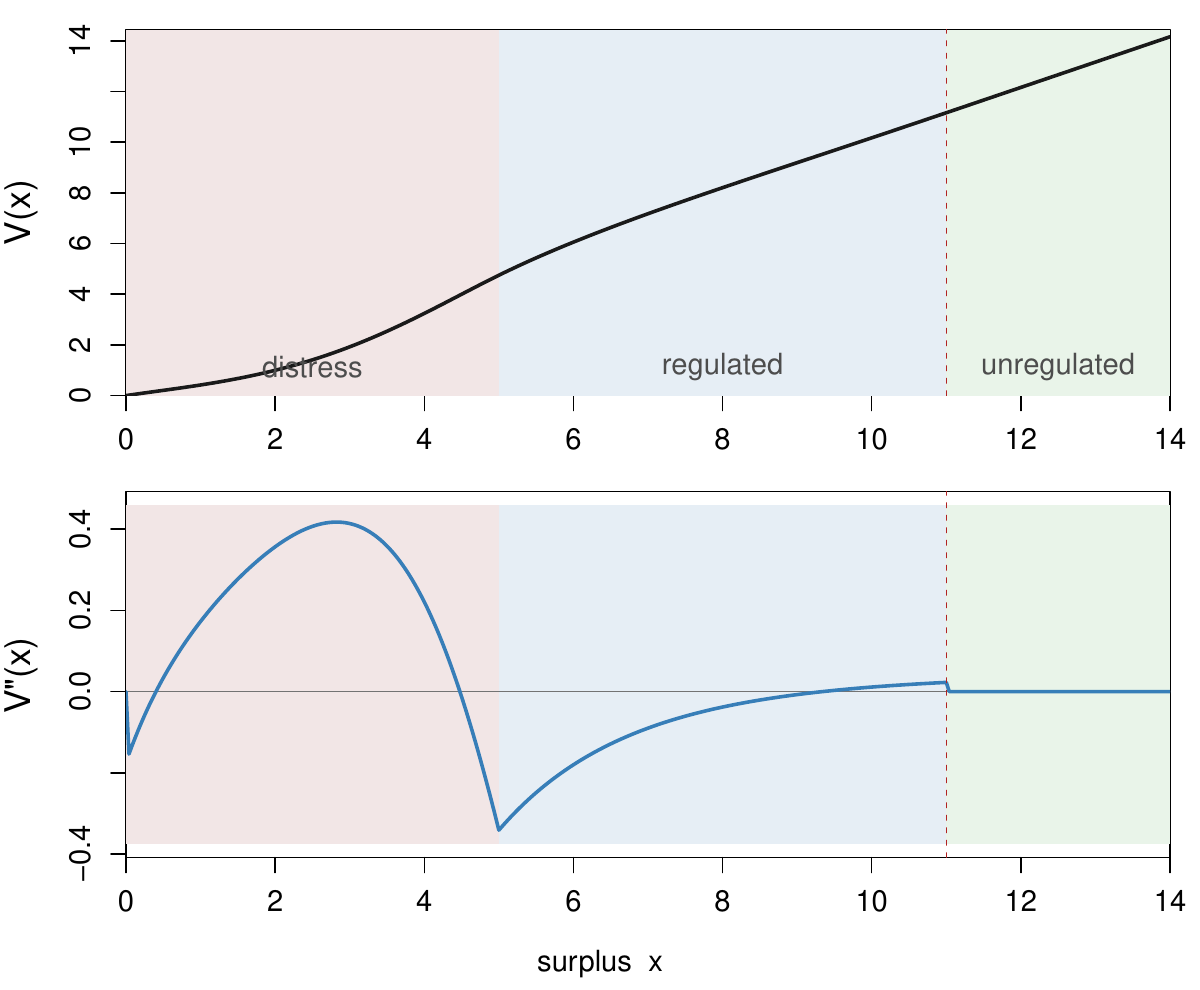}
  \caption{Binding constraint (\(a_s=11\)), so \(b^{\ast}=a_s\).}
  \label{fig:ex2nc}
\end{subfigure}\hfill
\begin{subfigure}{0.49\linewidth}
  \centering
  \includegraphics[width=\linewidth]{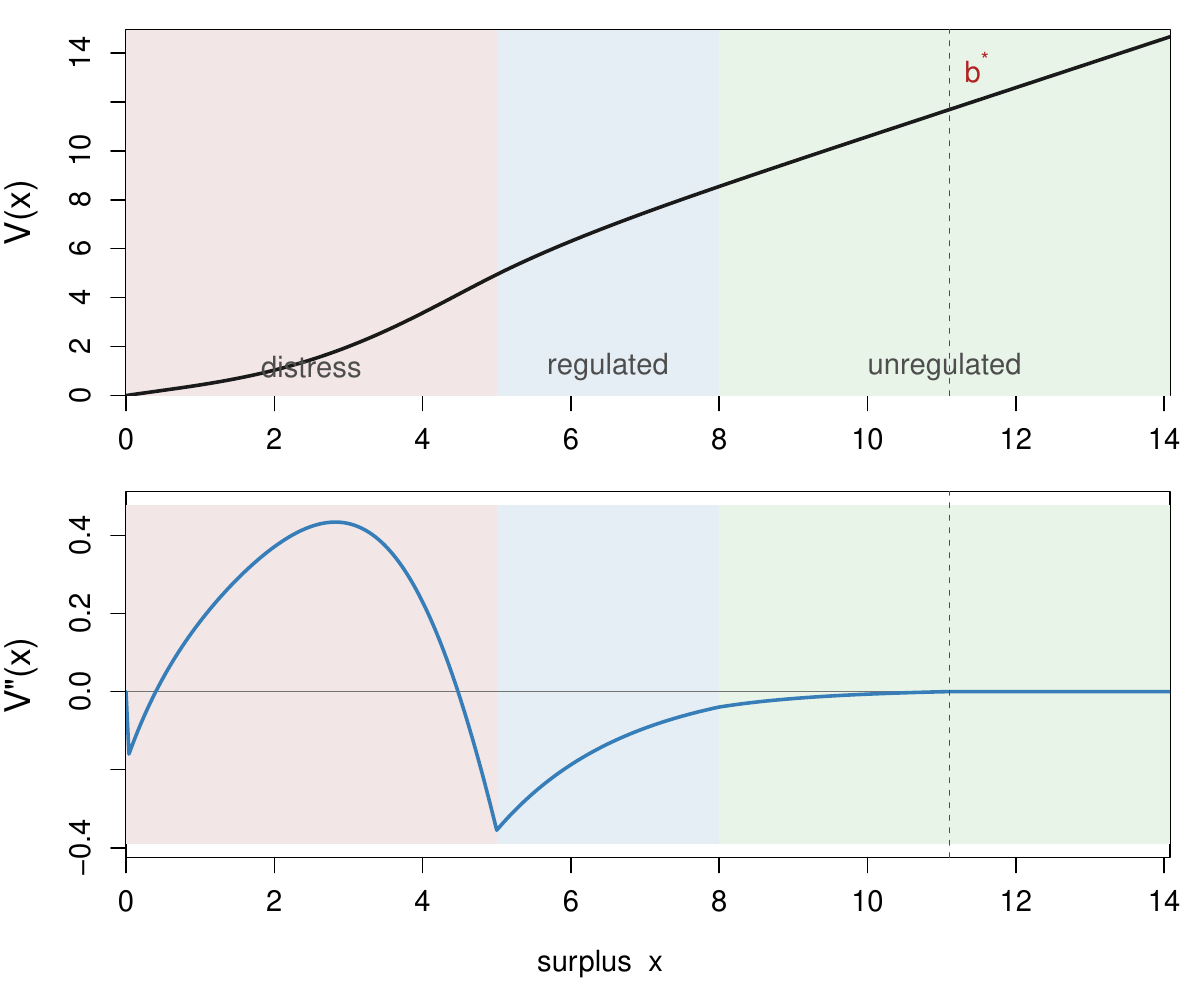}
  \caption{Slack constraint (\(a_s=8\)), so \(b^{\ast}=11.1>a_s\).}
  \label{fig:ex2ncslack}
\end{subfigure}
\caption{Example 2: the value function \(V\) (top of each panel) and its second
derivative \(V''\) (bottom); zones shaded as in Figure~\ref{fig:ex1}. In both
cases \(V''>0\) over part of the distress zone (convex) and \(V''<0\) in the
regulated zone (concave). \textit{(a)} When the solvency constraint binds,
\(b^{\ast}=a_s=11\) is fixed by the constraint rather than by smooth fit, so
\(V''\) is positive just below \(b^{\ast}\) before vanishing beyond it.
\textit{(b)} When the constraint is slack, \(b^{\ast}=11.1>a_s\) is set by smooth
fit, so \(V''\to0\) smoothly at \(b^{\ast}\) and the unregulated zone is concave.}
\label{fig:ex2nc-combined}
\end{figure}

The optimal barrier is \(b^{\ast} = 11.000\). Figure~\ref{fig:ex2nc} plots \(V\) and its second derivative. The curvature changes sign within \([0,a_s)\): \(V''\) is \textit{positive} (the value function is convex) over a sub-interval of the distress zone and \textit{negative} (concave) in the regulated zone, before becoming zero in the unregulated zone where the firm controls the surplus optimally. Numerically, \(V''(2) = +0.356\) while \(V''(5) = -0.341\) and \(V''(9) = -0.007\); the second derivative changes sign near \(x \approx 0.40\), \(4.47\) and \(9.34\). The value function is therefore a genuine mix of convexity and concavity on \([0,a_s)\) while remaining concave on \([a_s,\infty)\), consistent with the statement in Section~\ref{intro:contributions}: below \(a_s\) the firm cannot exercise optimal dividend control, and the prospect of intensity-driven liquidation makes an extra unit of surplus increasingly valuable over a range, breaking concavity.

The positive values of \(V''\) just below \(b^{\ast}\) in Figure~\ref{fig:ex2nc} are not a numerical artefact. There the solvency constraint binds (\(b^{\ast}=a_s=11\), whereas the unconstrained barrier for these dynamics would be \(13.6\)), so the barrier is fixed by the constraint rather than by the smooth-fit condition \(V''(b^{\ast})=0\); the firm is forced to retain capital it would rather distribute, so \(V'<1\) rises to \(1\) at \(a_s\) and \(V''>0\) just below it. Figure~\ref{fig:ex2ncslack} shows the contrasting unconstrained case, with \(V''\) returning smoothly to \(0\) at \(b^{\ast}\).

\begin{figure}[htb]
\centering
\includegraphics[width=0.6\linewidth]{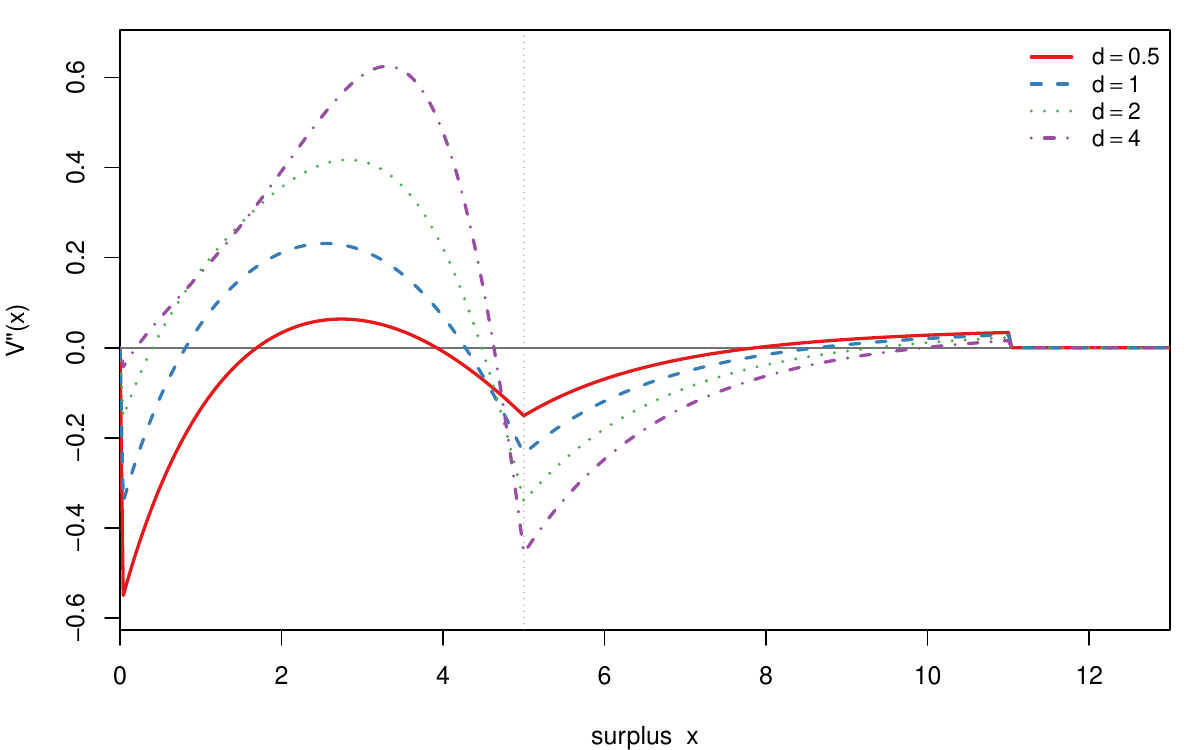}
\caption{Example 2. The second derivative \(V''(x)\) for peak intensities
\(d \in \{0.5,1,2,4\}\). Higher liquidation intensity enlarges and deepens the
convex region in the distress zone.}
\label{fig:ex2sweep}
\end{figure}

Figure~\ref{fig:ex2sweep} shows how the effect strengthens with the intensity. As the peak \(d\) rises from \(0.5\) to \(4\), the second derivative in the distress zone shifts upward---\(V''(0.5)\) moves from \(-0.308\) through \(-0.109\), \(+0.034\) to \(+0.066\)---so the convex region emerges and deepens as liquidation pressure grows.

\subsection{Example 3: regulatory design and the value--stability trade-off}\label{num:design}

This is the central example. We ask whether, and how, the liquidation-intensity feature can improve on the traditional model. To isolate the effect of the \textit{design} from any effect of the dynamics, all models in this example share the \textit{same} diffusion coefficients in every zone, \(\mu(x) = 2 + 0.04\,x\) and \(\sigma(x) = 4\), and differ only in how the zones are placed relative to the classical ruin threshold \(a_r = 0\):
\begin{itemize}
\item \textbf{Traditional}: liquidation at first passage to \(a_0 = 0\), with no intensity zone: $a_0=a_d=a_s$. This is the benchmark.
\item \textbf{Buffer-only}: an intensity zone placed entirely \textit{below} the classical threshold, \(a_0 = -2\), \(a_d = a_s = 0\), with intensity \eqref{eq:omega-decr} (\(d=1\)). The firm may continue operating in distress on \((-2,0)\) instead of being liquidated at \(0\); no dividend restriction is added above \(0\).
\item \textbf{Regulation-only}: a no-dividend solvency band placed entirely \textit{above} the classical threshold, \(a_0 = 0\), \(a_s = 16\), with \(\omega \equiv 0\) (no liquidation intensity). Dividends are prohibited on \((0,16)\), an otherwise solvent low-reserves zone.
\item \textbf{Combined}: both features together, \(a_0 = -2\), \(a_d = 0\), \(a_s = 13\), with intensity \eqref{eq:omega-decr} (\(d=1\)) on the buffer \((-2,0)\) and a no-dividend band on \((0,13)\).
\end{itemize}

The corresponding optimal barriers are \(b^{\ast} = 13.384\) (traditional), \(11.438\) (buffer-only), \(16.000\) (regulation-only, where the binding solvency band fixes the barrier at \(a_s\)) and \(13.000\) (combined).

Two observations explain the results that follow. A no-dividend band only bites if it sits at or above the unconstrained barrier: with the natural barrier near \(13.4\), a band reaching only to a low \(a_s\) is slack and regulation-only then collapses onto the traditional model. We therefore set \(a_s = 16\) for regulation-only and \(a_s = 13\) for combined, so that the restriction is binding in both.

\begin{table}[htb]
\centering
\begin{tabular}{l c c c c c}
\hline
Design & \(b^{\ast}\) & \(V(5)\) & \(m(5)\) &
  \(\min_x \Delta V\) & \(\min_x \Delta m\) \\
\hline
Traditional     & 13.384 & 15.341 & 49.64  & ---    & ---    \\
Buffer-only     & 11.438 & 17.550 & 51.59  & \(+1.17\) & \(-4.78\) \\
Regulation-only & 16.000 & 15.043 & 113.71 & \(-0.52\) & \(+10.71\) \\
Combined        & 13.000 & 17.414 & 83.62  & \(+0.97\) & \(+32.28\) \\
\hline
\end{tabular}
\caption{Example 3: optimal barrier, value and expected survival at \(x_0=5\), and the minimum gains over the traditional model across \(x\in[0.5,15]\) (\(\Delta V = V_{\text{design}}-V_{\text{trad}}\), likewise \(\Delta m\)). A design improves on the traditional model in the Pareto sense iff both minima are non-negative: here, only the combined design does.}
\label{tab:ex3}
\end{table}

Table~\ref{tab:ex3} summarises the outcome at a representative starting surplus \(x_0 = 5\), together with the worst-case gains over the traditional model across \(x \in [0.5, 15]\) (from just above the lowest threshold to a well-capitalised firm). The single-lever designs each improve exactly one objective at the expense of the other. The buffer-only design raises shareholder value everywhere (\(\min_x \Delta V = +1.17\)): the option to keep operating below \(0\) is worth having, and it lowers the optimal barrier to \(11.44\). But survival is \textit{not} uniformly improved (\(\min_x \Delta m = -4.78\)): because the firm now operates leaner and distributes sooner, a well-capitalised firm is liquidated \textit{earlier} than under the traditional rule. Conversely the regulation-only design lengthens survival everywhere (\(\min_x \Delta m = +10.71\), and far more at higher surplus) by forcing the firm to retain capital, but it \textit{reduces} value everywhere (\(\min_x \Delta V = -0.52\)) because dividends are delayed and discounted more heavily.

Remarkably, the combined design improves \textit{both}: \(\min_x \Delta V = +0.97 > 0\) and \(\min_x \Delta m = +32.28 > 0\) over the whole range, a Pareto improvement over the traditional model. The mechanism is complementary. The buffer thus substitutes for part of the regulation: the combined design attains regulation-only's stability gains at a strictly lower solvency threshold (\(a_s=13\) vs.\ \(16\)), which is what preserves its value  advantage. In other words, the buffer protects the downside and lifts value; the binding solvency band rebuilds the survival that the buffer alone would erode at higher surplus. Strikingly, the combined design delivers essentially the \textit{same} shareholder value as buffer-only (\(V(5) = 17.41\) vs.\ \(17.55\)) while \textit{also} delivering far longer
expected survival (\(m(5) = 83.6\) vs.\ \(51.6\)). 

\begin{figure}[htb]
\centering
\includegraphics[width=0.49\linewidth]{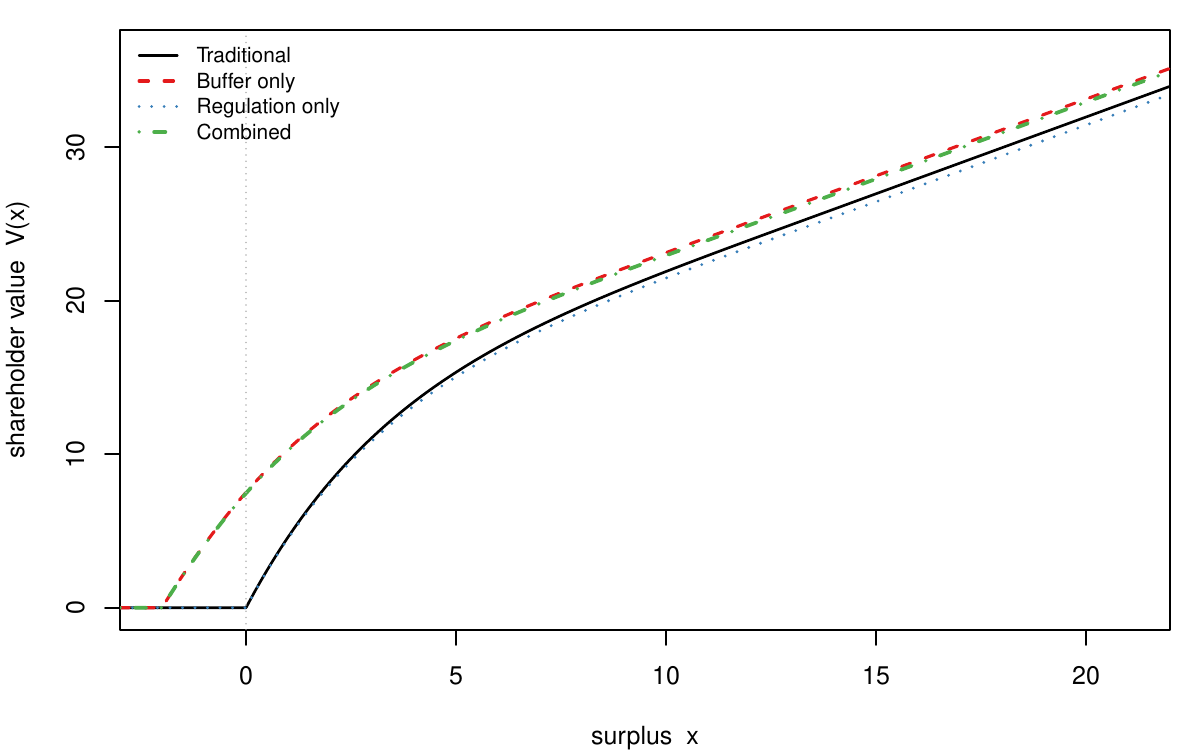}\hfill
\includegraphics[width=0.49\linewidth]{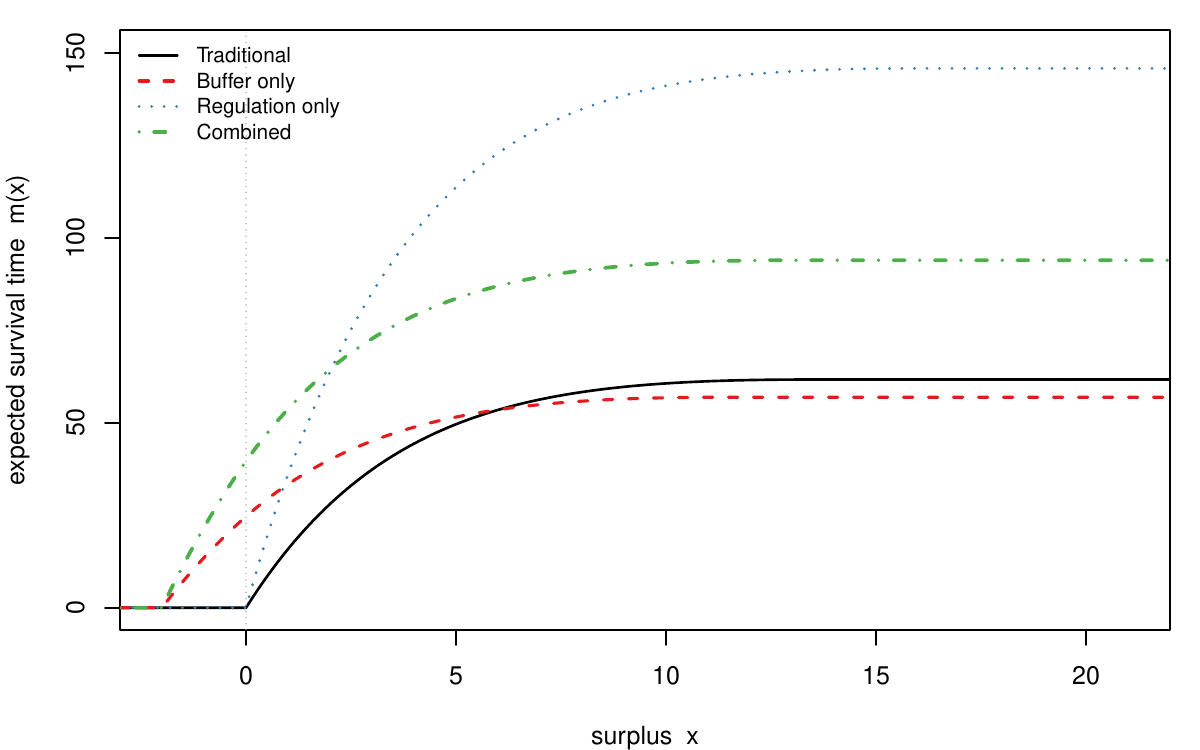}
\caption{Example 3. Shareholder value \(V(x)\) (left) and expected survival
\(m(x)\) (right) for the four designs.}
\label{fig:ex3curves}
\end{figure}

\begin{figure}[htb]
\centering
\includegraphics[width=0.6\linewidth]{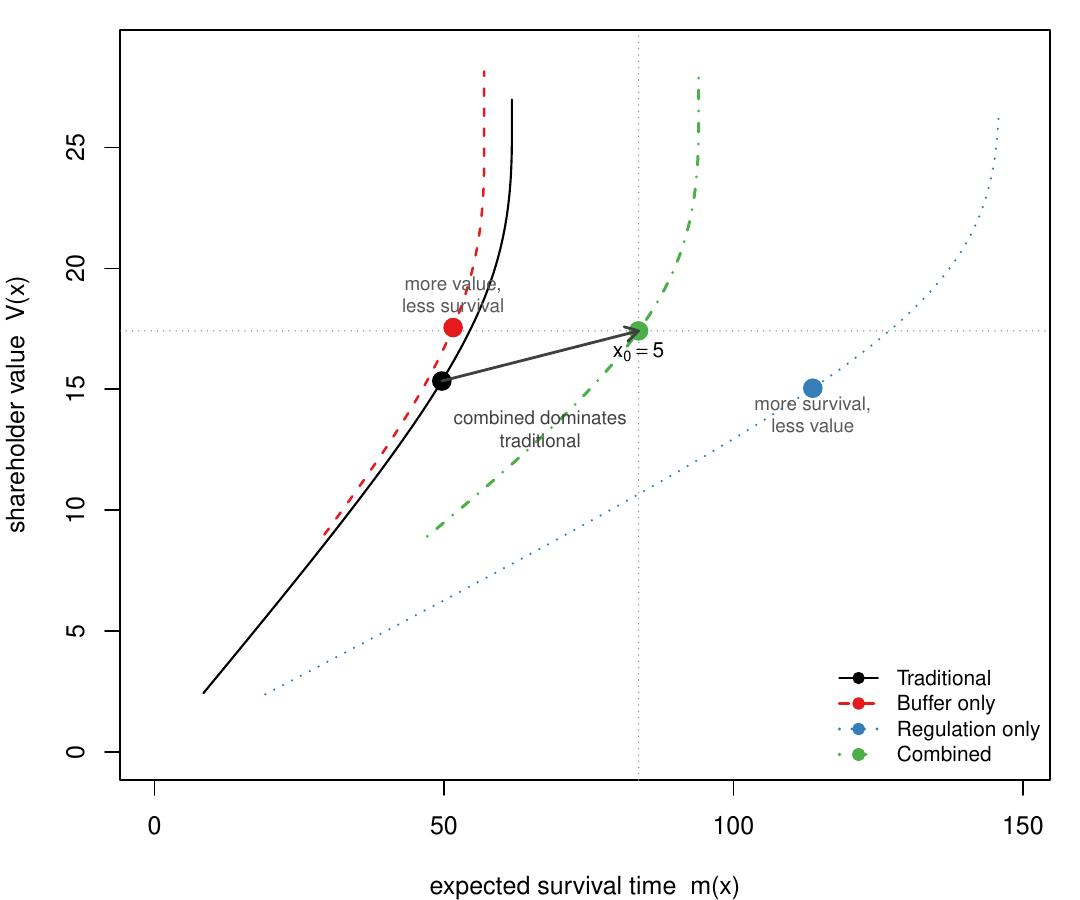}
\caption{Example 3. The value--stability trade-off, traced parametrically in the starting surplus \(x\) (curves), with the four designs marked at the common \(x_0=5\) (dots). Dotted crosshairs pass through the combined design's point: its upper-right quadrant is the set of outcomes that would Pareto-dominate it, and no rival design's \(x_0=5\) marker lies there. The combined design dominates the traditional model (arrow); buffer-only gives more value but less survival (upper-left), regulation-only more survival but less value (lower-right).}
\label{fig:ex3tradeoff}
\end{figure}

Figures~\ref{fig:ex3curves} and~\ref{fig:ex3tradeoff} display this: in the value--survival plane (Figure \ref{fig:ex3tradeoff}) the combined curve lies up and to the right of the traditional curve, dominating it, whereas buffer-only sits above but bulges to the left at high survival (more value, less survival) and regulation-only lies to the right but below (more survival, less value). Figure~\ref{fig:ex3gains} shows the gains \(\Delta V\) and \(\Delta m\) as functions of the surplus.

\begin{figure}[H]
\centering
\includegraphics[width=0.92\linewidth]{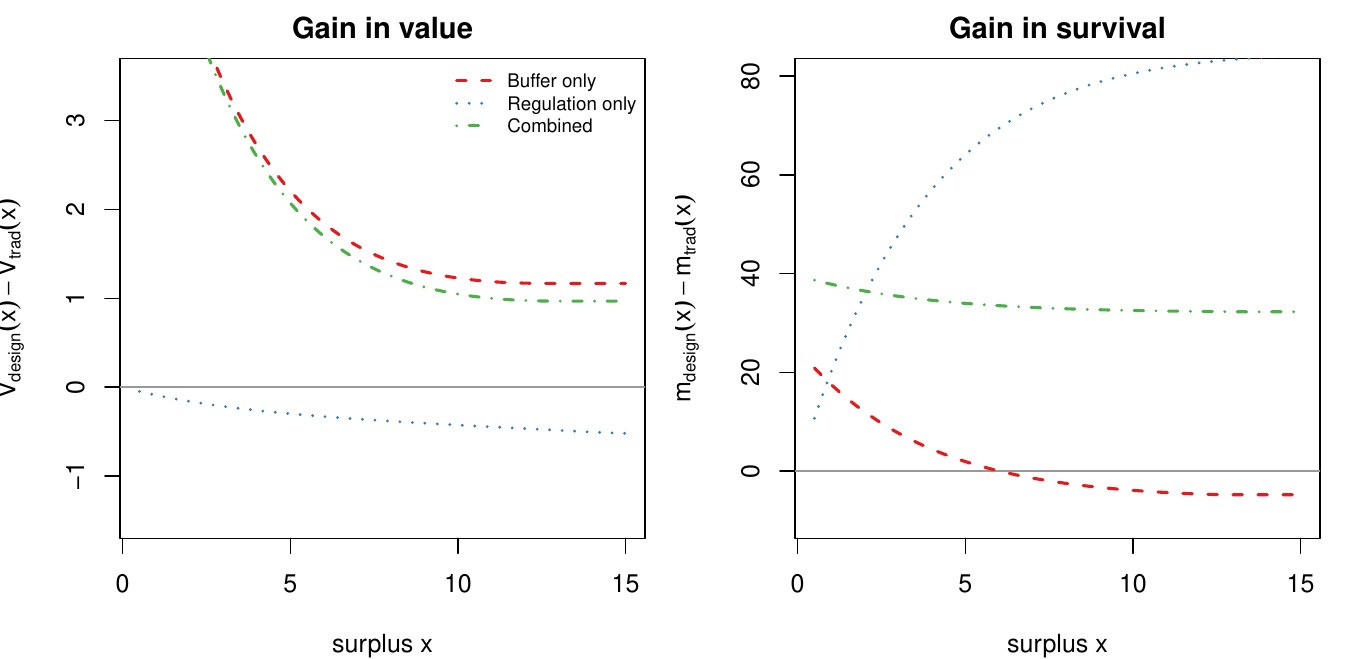}
\caption{Example 3. Gains over the traditional model: \(V_{\text{design}}(x) -
V_{\text{trad}}(x)\) (left) and \(m_{\text{design}}(x) - m_{\text{trad}}(x)\)
(right). The combined design is the only one that stays non-negative on both
panels.}
\label{fig:ex3gains}
\end{figure}

\subsection{Example 4: sensitivity and the role of the intensity shape}\label{num:sensitivity}

We now vary the design levers around the combined configuration of Example~\ref{num:design} (buffer \(a_0 = -2\), band up to \(a_s\), intensity \eqref{eq:omega-decr}), evaluating at \(x_0 = 5\), where the traditional model has \(V = 15.34\) and \(m = 49.64\).

\subsubsection{Strength of the solvency band.} 
Table~\ref{tab:ex4as} varies the solvency threshold \(a_s\). Tightening regulation trades value for stability monotonically: as \(a_s\) rises from \(11\) to \(18\), \(V(5)\) falls from \(17.55\) to \(16.09\) while \(m(5)\) climbs from \(51.6\) to \(408.2\). Over this whole range value stays above the traditional level, so the combined design remains a value improvement; the band simply buys progressively more survival at a declining value cost. Figure~\ref{fig:ex4as} plots both curves.

\begin{table}[H]
\centering
\begin{tabular}{r r r r r r r r r}
\hline
\(a_s\)     & 11 & 12 & 13 & 14 & 15 & 16 & 17 & 18 \\
\hline
\(b^{\ast}\) & 11.44 & 12.00 & 13.00 & 14.00 & 15.00 & 16.00 & 17.00 & 18.00 \\
\(V(5)\)     & 17.55 & 17.53 & 17.41 & 17.22 & 16.97 & 16.69 & 16.39 & 16.09 \\
\(m(5)\)     & 51.6  & 61.4  & 83.6  & 114.1 & 156.1 & 214.3 & 295.1 & 408.2 \\
\hline
\end{tabular}
\caption{Example 4: Effect of the solvency threshold \(a_s\) on the combined design (buffer \(a_0=-2\), peak intensity \(d=1\)).}
\label{tab:ex4as}
\end{table}

\subsubsection{Strength of the liquidation intensity.} 
Holding \(a_s = 13\) fixed and varying the peak intensity \(d\) (Table~\ref{tab:ex4d}) has a milder, also monotone, effect: a stronger hazard in the buffer slightly reduces both value and survival (as \(d\) goes from \(0.25\) to \(4\), \(V(5)\) falls from \(17.46\) to \(17.26\) and \(m(5)\) from \(85.1\) to \(78.8\)), while the binding band keeps \(b^{\ast}\) at \(13\) throughout. For every \(d\) the unconstrained barrier (\(\approx 11.4\)–\(11.6\), itself increasing in \(d\)) lies below \(a_s=13\), so the solvency constraint binds and the solver returns \(b^{\ast}=a_s=13\) throughout; \(d\) therefore moves \(V\) and \(m\) but not the barrier.
Figure~\ref{fig:ex4d} plots these.

\begin{table}[H]
\centering
\begin{tabular}{r r r r r r}
\hline
\(d\)        & 0.25 & 0.5 & 1 & 2 & 4 \\
\hline
\(V(5)\)     & 17.459 & 17.444 & 17.414 & 17.358 & 17.260 \\
\(m(5)\)     & 85.14  & 84.62  & 83.62  & 81.82  & 78.81  \\
\hline
\end{tabular}
\caption{Example 4: Effect of the peak liquidation intensity \(d\) on the combined design (buffer \(a_0=-2\), \(a_s=13\), so \(b^{\ast}=13\) throughout).}
\label{tab:ex4d}
\end{table}

\begin{figure}[htb]
\centering
\includegraphics[width=0.62\linewidth]{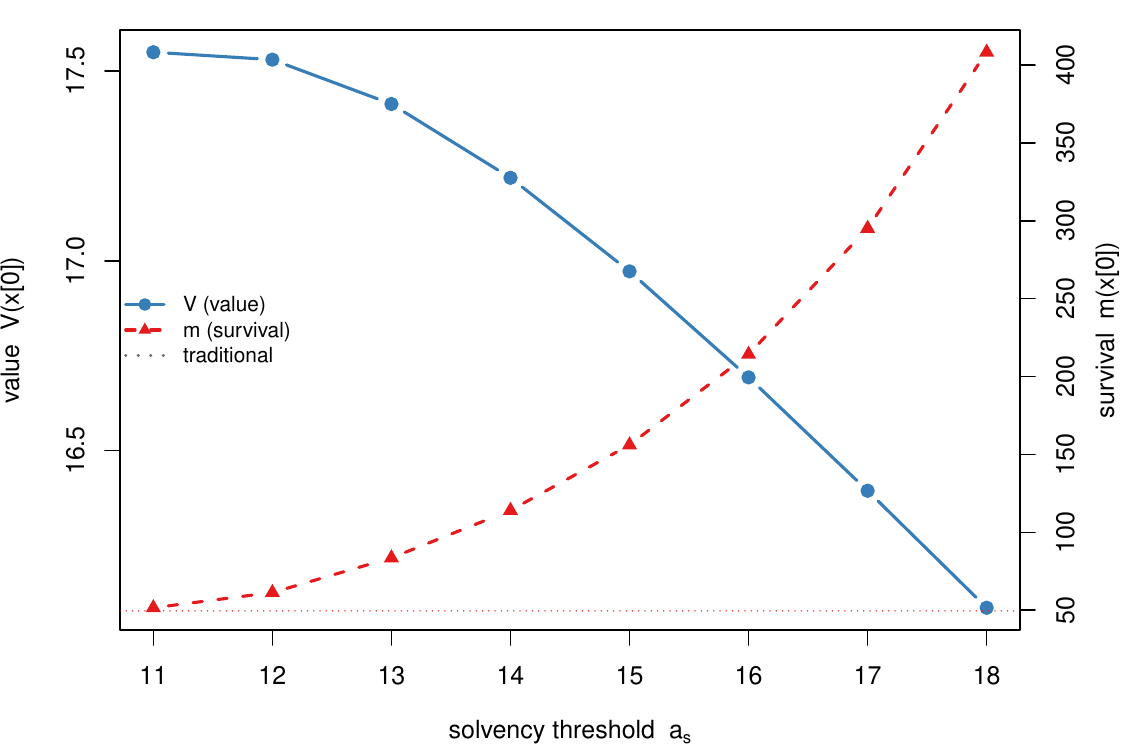}
\caption{Example 4. Value \(V(x_0)\) (left axis) and expected survival \(m(x_0)\) (right axis) at \(x_0=5\) as the solvency threshold \(a_s\) varies; dotted lines mark the traditional benchmark. Tighter regulation trades value for stability.}
\label{fig:ex4as}
\end{figure}

\begin{figure}[htb]
\centering
\includegraphics[width=0.62\linewidth]{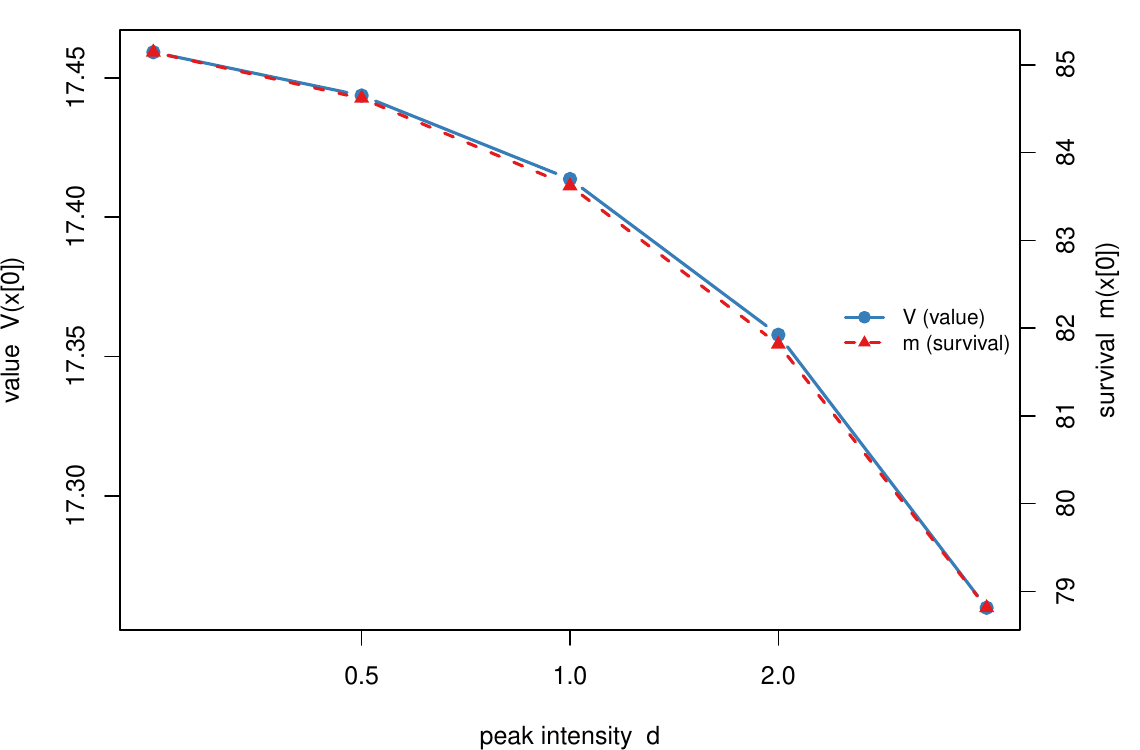}
\caption{Example 4. Value and expected survival at \(x_0=5\) as the peak intensity \(d\) varies (log scale).}
\label{fig:ex4d}
\end{figure}

\subsubsection{Non-increasing versus constant intensity.} 
Finally we compare the non-increasing intensity \eqref{eq:omega-decr} with the constant intensity \(\omega \equiv d\) of the earlier literature, at the same peak \(d = 1\) and the same thresholds. Because the constant form applies the full intensity across the
entire buffer---whereas \eqref{eq:omega-decr} relaxes it as the firm recovers---it imposes a heavier average liquidation pressure and yields both a lower value and shorter survival: \(V(5) = 17.25\) and \(m(5) = 78.54\) under the constant intensity, against \(V(5) = 17.41\) and \(m(5) = 83.62\) under the non-increasing one (both with \(b^{\ast}=13\)). The non-increasing specification is thus not merely a generalisation but a materially more
favourable one, and the constant-intensity results of prior work are recovered as a special case. Figure~\ref{fig:ex4omega} shows the two intensity profiles and the resulting value functions.

\begin{figure}[H]
\centering
\includegraphics[width=0.92\linewidth]{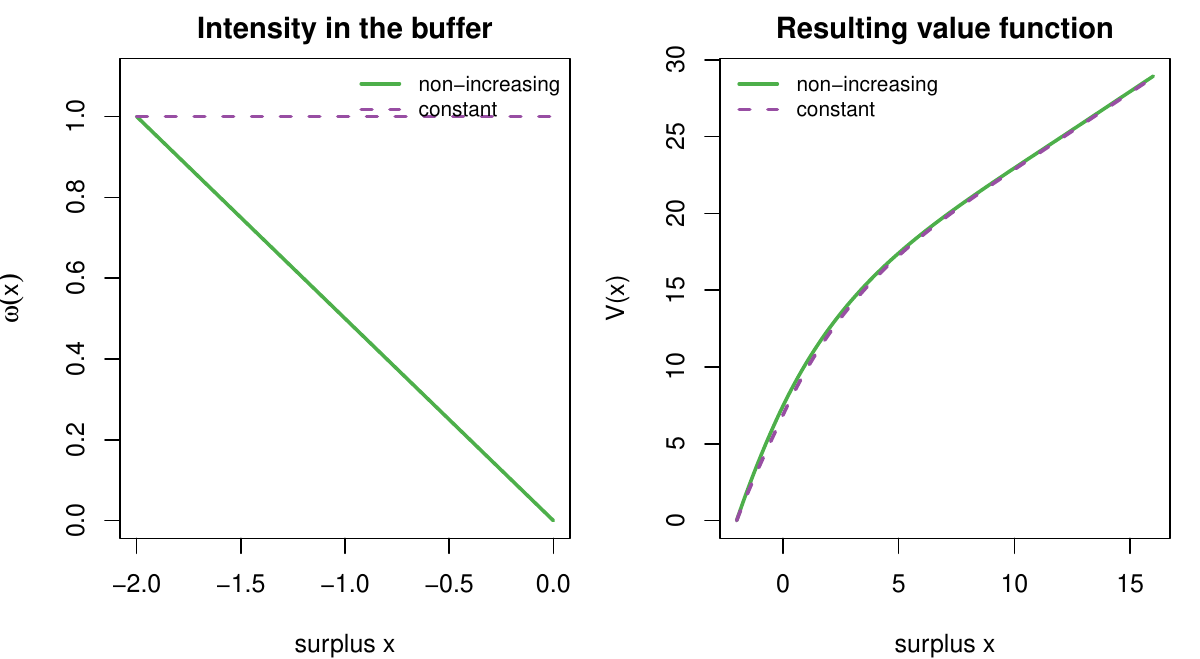}
\caption{Example 4. \textit{Left}: the non-increasing intensity \eqref{eq:omega-decr} versus a constant intensity of the same peak \(d=1\) on the buffer. \textit{Right}: the resulting value functions.}
\label{fig:ex4omega}
\end{figure}

\subsection{Example 5: a simulation study of the liquidation-time and dividend distributions} \label{num:simulation}

The expected survival time \(m(x)\) is only the first moment of the time to liquidation under the optimal strategy. To examine the full distribution---and the \textit{joint} distribution with the dividends actually paid---we simulate the controlled
surplus under the optimal barrier strategy of each of the four designs of Example~\ref{num:design}. For every sample path we record both the liquidation time \(T^{\ast}\) (the first time the controlled surplus reaches \(a_0\) under the optimal barrier strategy \(D^{b^{\ast}}\)) and the realised discounted dividends $\mathrm{PV} = \int_0^{T^{\ast}} e^{-\delta }\,\mathrm{d}D_s^{b^{\ast}}.$

The surplus is discretised by an Euler--Maruyama scheme (step \(\Delta t = 0.025\), horizon \(T_{\max}=650\), \(N = 10{,}000\) paths), reflecting at the barrier \(b^{\ast}\) and absorbing at \(a_0\). Liquidation in the distress zone is generated by the usual inverse-hazard construction: each path draws an independent \(\mathrm{Exp}(1)\) threshold and is liquidated when its accumulated hazard \(\int_0^t \omega(R_s)\,\mathrm{d}s\) (with \(\omega\) the
intensity \eqref{eq:omega-decr}) first reaches that threshold. A Brownian-bridge correction accounts for crossings of the absorbing boundary \(a_0\) between grid points; see Appendix \ref{app:bridge}. Crucially, the four designs are driven by \\textit{common random numbers}: each design sees the same Brownian increments and the same hazard budgets, so sample path \(i\) yields a \textit{matched} pair of outcomes under every design. This makes the design-versus-design scatterplots below genuine paired comparisons rather than comparisons of unrelated marginals. 

Table~\ref{tab:ex5} summarises the simulated distributions. As a validation, the mean discounted dividends reproduce the analytic value function almost exactly (e.g.\ \(17.41\) versus \(V(5)=17.41\) for the combined design), and the mean liquidation time tracks the closed-form \(m(5)\) up to a small, uniform discretisation bias (about \(12\%\) across all designs at this step size, shrinking roughly linearly as \(\Delta t\) is reduced; being common to all designs we believe it does not affect the comparisons below in a material way).

\begin{table}[H]
\centering
\begin{tabular}{l c c c c c c c}
\hline
\\[-1.5ex]
Design & \(\overline{T^*}\) & \(m(5)\) & median \(T^*\) &
  \(q_{0.05}\) & \(q_{0.95}\) & \(\mathbb{P}(T^*<10)\) & \(\overline{\mathrm{PV}}\) \\
\hline
Traditional     & 55.7  & 49.6  & 34.0 & 0.6 & 186.4 & 0.291 & 15.34 \\
Buffer-only     & 58.1  & 51.6  & 39.1 & 1.4 & 180.4 & 0.205 & 17.59 \\
Regulation-only & 126.7 & 113.7 & 75.0 & 0.6 & 450.0 & 0.257 & 15.00 \\
Combined        & 93.8  & 83.6  & 62.9 & 1.4 & 292.2 & 0.174 & 17.41 \\
\hline
\end{tabular}
\caption{Example 5. Simulated liquidation-time distribution and mean discounted dividends for an initial surplus \(x_0=5\) (\(N=10{,}000\) paths, common random numbers). Here \(\overline{T^*}\) denotes the sample mean liquidation time and \(\overline{\mathrm{PV}}\) the sample mean discounted dividends, while \(m(5)\) is the analytic mean survival time; the median, the quantiles \(q_{0.05},q_{0.95}\) and \(\mathbb{P}(T^*<10)\) are likewise sample estimates from the \(N\) paths. Censoring at \(T_{\max}=650\) was below \(2\%\) for every design.}
\label{tab:ex5}
\end{table}

Three features stand out. First, the distributions are strongly right-skewed: the median liquidation time is well below the mean for every design (e.g.\ \(62.9\) versus \(93.8\) for the combined design), so the long expected survival times are driven by a minority of very long-lived paths. Second, the combined design has the \textit{lowest} probability of early failure: the simulated frequency \(\mathbb{P}(T^*<10) = 0.174\), against \(0.291\) for the traditional model. Thus, the combined design reduces the chance of an early liquidation by roughly two fifths. Third, the ranking of the survival curves (Figure~\ref{fig:ex5surv}) matches the analytic results, with regulation-only and combined designs stochastically dominating the traditional and buffer-only designs.

The paired scatterplots make the design comparison concrete at the level of individual scenarios. Figure~\ref{fig:ex5pair} compares the combined design with the traditional model under identical trajectories. On survival {time} (left), almost every point lies above the \(45^{\circ}\) line: scenario by scenario, the combined design survives at least as long as the traditional model and very often much longer, with a dense cluster on the diagonal at small \(T^{{*}}\) where a common early adverse trajectory liquidates both. On discounted dividends (right), the two are tightly aligned along the diagonal---the combined design's stability gains come at \\textit{no significant} systematic dividend cost relative to the traditional model. Figure~\ref{fig:ex5pairbc} repeats the paired comparison for the two designs that carry almost identical shareholder value but very different stability: buffer-only and combined. On dividends (right) the points sit just under the diagonal, confirming that the two pay almost the same amount scenario by scenario (but with a cost to additional safety); on survival (left) the combined points lie overwhelmingly above it. The combined design thus buys a large amount of additional survival for a negligible give-up in dividends---exactly the message of Example~\ref{num:design}, now at the level of individual scenarios.

\begin{figure}[h]
\centering
\includegraphics[width=0.72\linewidth]{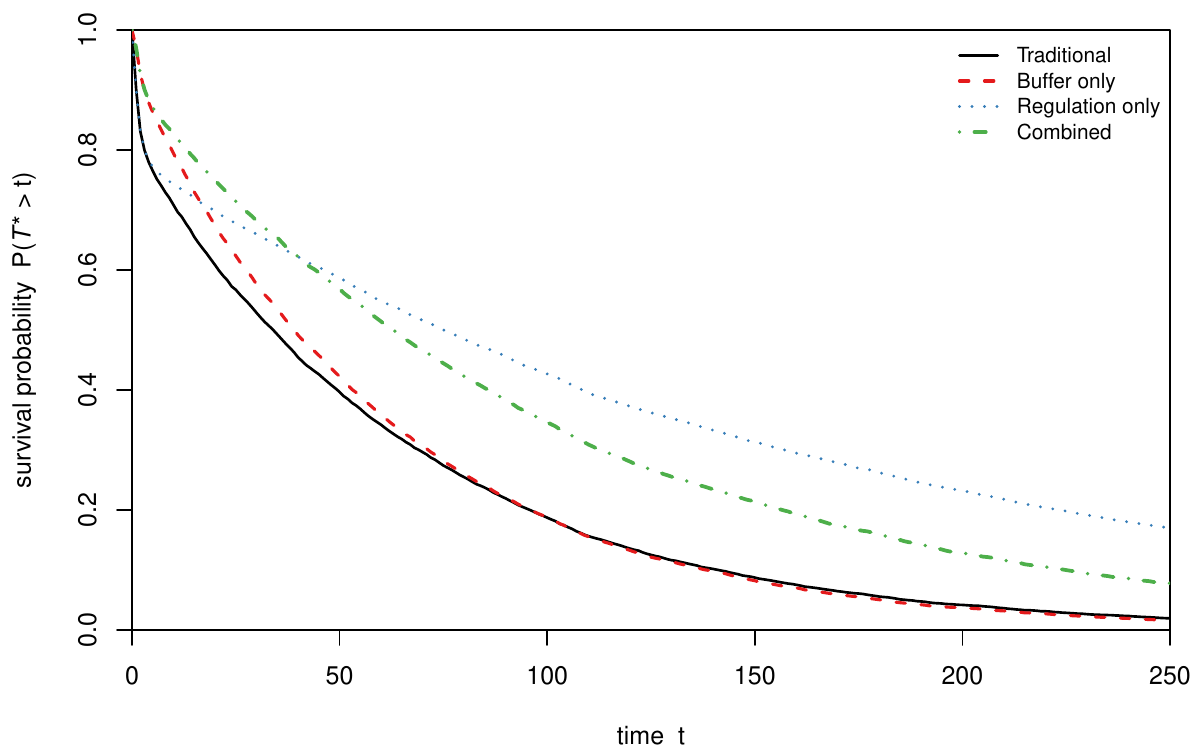}
\caption{Example 5. Empirical survival functions \(\mathbb{P}(T^*>t)\) of the time to liquidation under each design, based on \(N=10{,}000\) simulated paths of the surplus process with an initial surplus of  \(x_0=5\).}
\label{fig:ex5surv}
\end{figure}

\begin{figure}[H]
\centering
\includegraphics[width=0.49\linewidth]{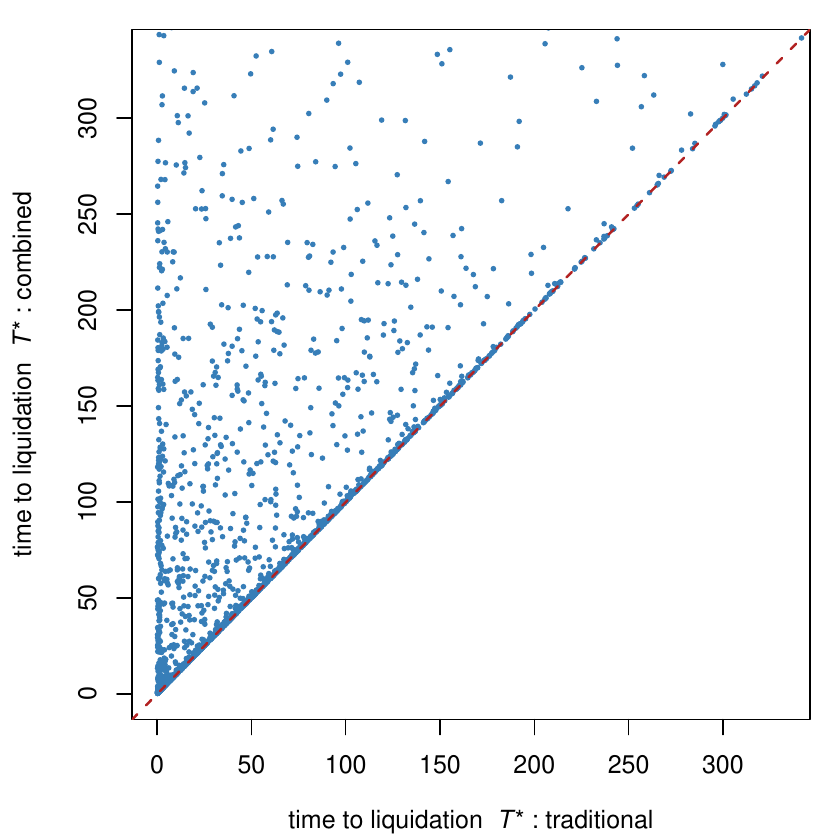}\hfill
\includegraphics[width=0.49\linewidth]{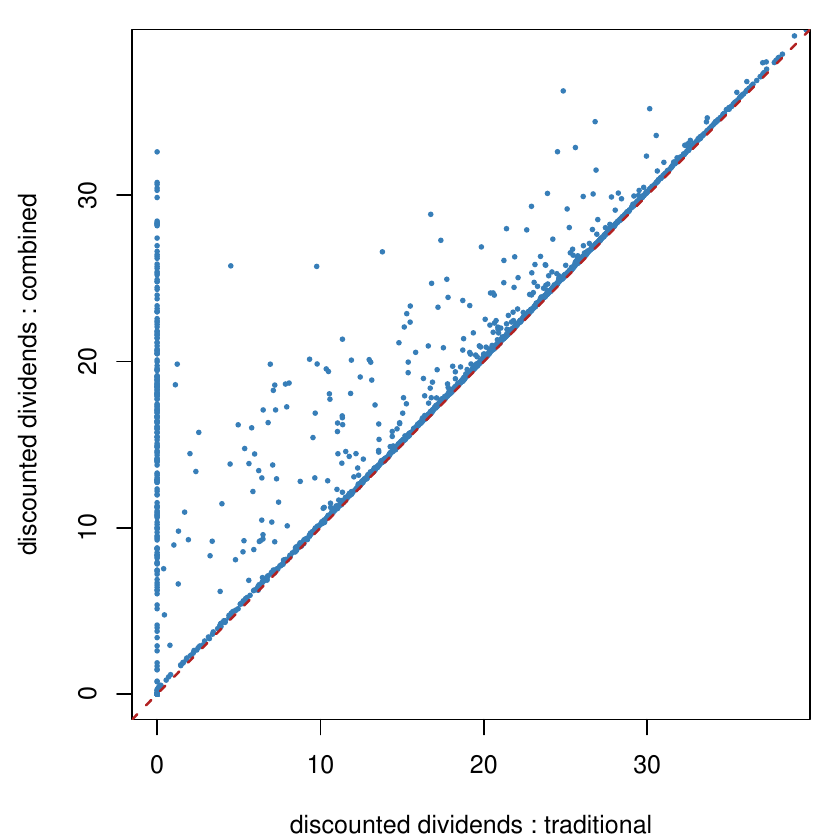}
\caption{Example 5. Paired outcomes for the combined design versus the traditional model under common random numbers, with one point representing  per scenario (subsampled for legibility). The dashed line denotes the \(45^{\circ}\) diagonal line.
\textit{Left}: liquidation time  \(T^*\). \textit{Right}: discounted dividends \(\mathrm{PV}\).}
\label{fig:ex5pair}
\end{figure}

\begin{figure}[H]
\centering
\includegraphics[width=0.49\linewidth]{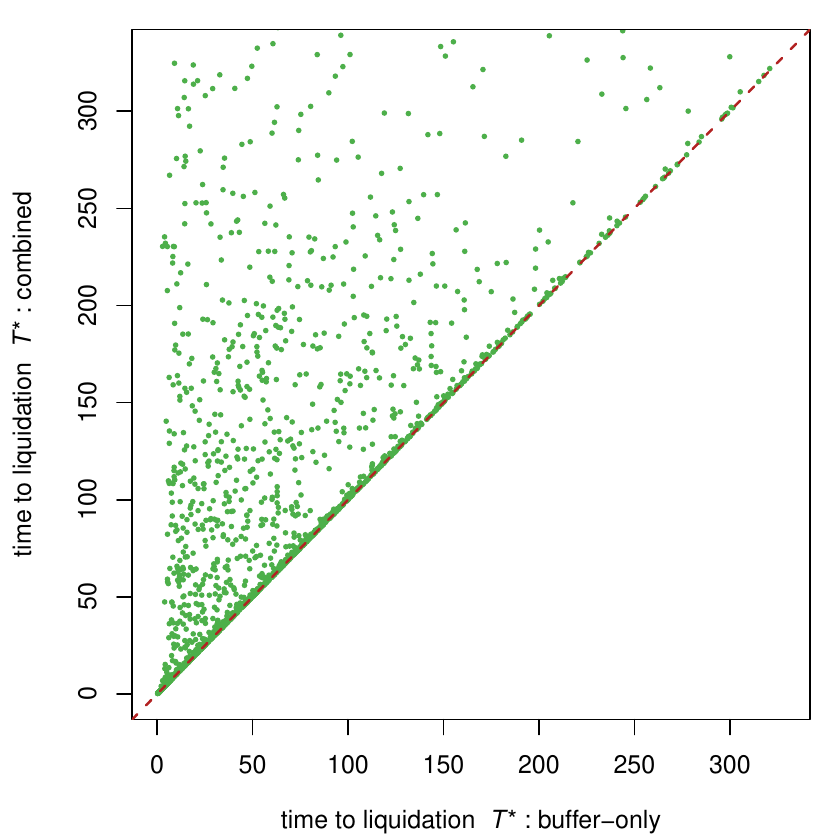}\hfill
\includegraphics[width=0.49\linewidth]{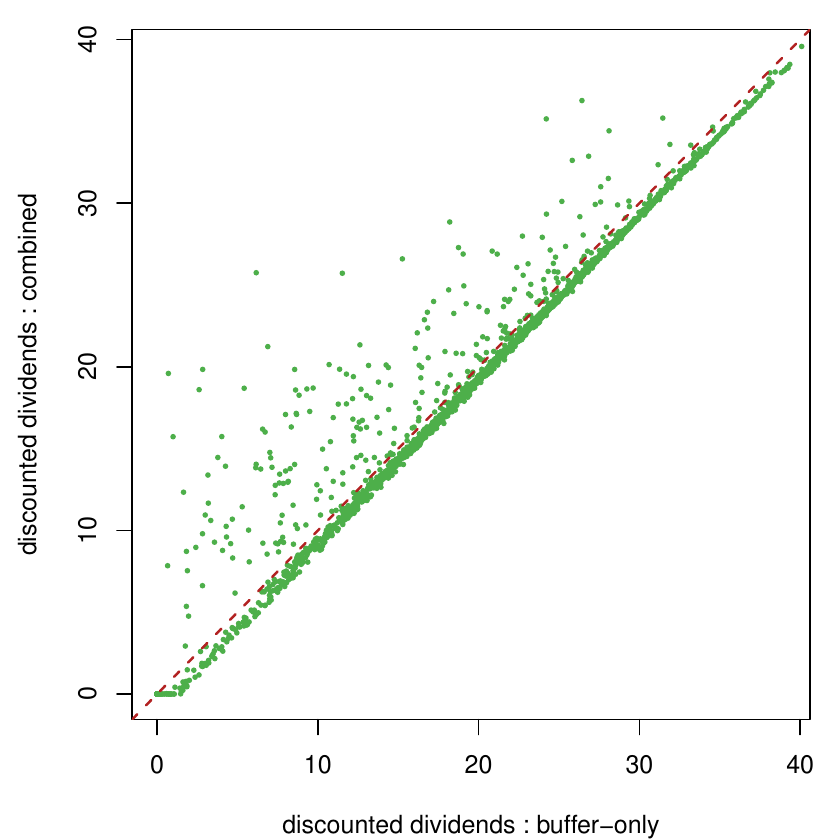}
\caption{Example 5. Paired outcomes for the combined design versus the buffer-only
design under common random numbers, with one point representing each scenario (subsampled for
legibility).  The dashed line represents the \(45^{\circ}\) diagonal line. \textit{Left}:
liquidation time \(T^*\). \textit{Right}: discounted dividends \(\mathrm{PV}\). The two
designs pay almost identically (right), yet the combined design survives
substantially longer (left).}
\label{fig:ex5pairbc}
\end{figure}

\section{Conclusion}\label{conclusion}

This paper studies the trade-off between shareholder value and financial stability when a financially distressed firm is wound up not at a single hard threshold but through a graduated, state-dependent process. Motivated by the resolution and prudential regimes that let distressed firms keep operating before liquidation (e.g., court-supervised reorganisation, supervisory forbearance, and distribution restrictions on weakly capitalised firms) we model forced liquidation through a state-dependent intensity and embed it in a general diffusion with three regions (distress, regulated, and unregulated) and region-dependent dynamics. Shareholder value is measured by the maximal expected discounted dividends and financial stability by the expected survival time under the corresponding optimal policy, which together define a singular stochastic control problem.

Our central finding is that the design of such a regime is decisive. Allowing a distressed firm to keep operating below the point of insolvency (a recovery buffer below the classical threshold) raises shareholder value but need not improve survival, and can in fact shorten the life of a well-capitalised firm; restricting distributions while reserves are low but still adequate (a regulatory constraint above the threshold) lengthens survival only by depressing value. Neither intervention, on its own, improves both objectives. Combining the two, however, can raise shareholder value and expected survival simultaneously: a Pareto improvement over immediate liquidation at insolvency. What matters is therefore not whether such a feature exists, but how its components are placed relative to the insolvency point and combined.

Along the way we uncover a qualitative feature absent from the classical problem. The shareholder-value function, although concave in the unregulated region, may be convex (or a mix of convex and concave) below the solvency threshold. This non-concavity arises from the interaction between liquidation risk and the suspension of dividends in distress, and has no counterpart in the standard diffusion dividend problem. A Monte-Carlo study, comparing designs under common random numbers, confirms the analytic comparisons scenario by scenario, and shows that the combined design markedly lowers the probability of early failure while paying dividends comparable to a pure recovery buffer.

Methodologically, the paper extends the optimal-dividend literature to a class of singular stochastic control problems with state-dependent liquidation intensity, region-dependent surplus dynamics, and solvency constraints. We establish a verification theorem, prove optimality of a barrier strategy, characterise the
optimal barrier, and give a constructive procedure (a small system of second-order linear ODEs) for computing both the shareholder-value function and a closed-form expected survival time, so that alternative regulatory designs can be compared directly and at low cost.

Several extensions remain open. The liquidation intensity could be endogenised by linking it to observable firm characteristics, financing decisions, or the strategic interaction of shareholders, creditors, and regulators; the model could accommodate capital injections, transaction costs, debt financing, or alternative payout mechanisms; and the distress-resolution mechanism could be enriched with reorganisation features and stochastic grace periods, or alternative notions of delayed liquidation such as Parisian ruin. We hope the framework developed here offers a useful foundation for further study of how financial stability, regulatory design, and shareholder value interact.

\section*{Code}
The \textsf{R} code reproducing all numerical results and figures in
Section~\ref{sec:numerical} is publicly available at
\url{https://github.com/agi-lab/balancing-value-stability}.

\section*{Use of AI tools}
The authors used large language models (ChatGPT, OpenAI; Claude, Anthropic) to assist with the drafting and refining of portions of the exposition and the numerical illustrations (and their associated simulation appendix). All results, interpretations, and final wording are the authors’ own, and the authors take full responsibility for the content of the paper.

\section*{Acknowledgements}

This research was supported under Australian Research Council's Linkage (LP130100723, with funding partners Allianz Australia Insurance Ltd, Insurance Australia Group Ltd, and Suncorp Metway Ltd) and Discovery Project (DP200101859) funding schemes. Furthermore, Avanzi acknowledges support from a grant of the Natural Science and Engineering Research Council of Canada (project number RGPIN-2015-04975). The views expressed herein are those of the authors and are not necessarily those of the supporting organisations. 

The authors declare that they have no conflicts of interest.

\section*{References}

\begingroup
\small
\bibliographystyle{elsarticle-harv}
\bibliography{BibPool,libraries}
\endgroup

\newpage

\appendix

\section{Proofs}
\numberwithin{equation}{section}
\counterwithin{figure}{section}
\counterwithin{table}{section}

\noindent \textbf{Proof of Proposition \ref{alternative-value}}
Applying the tower property and conditioning on the  trajectory of the controlled surplus process $\{R_u^D\}_{u\ge 0}$,  we obtain,
\begin{align}
\mathcal{P}_D(x)
=&\mathbb{E}_x\bigg[\int_0^{T^D}e^{-\delta t}dD_t\bigg]\nonumber\\
=&\mathbb{E}_x\bigg[\Big(\int_0^{T^D}e^{-\delta t}dD_t\Big) \mathbbm{1}_ {\{T^D<\tau^D_{{a_0}\}}}\bigg]
+\mathbb{E}_x\bigg[\Big(\int_0^{T^D}e^{-\delta t}dD_t\Big)
\mathbbm{1}_{\{ T^D\ge \tau^D_{{a_0}}\}}\bigg] \nonumber\\
=&\mathbb{E}_x\bigg[\mathbb{E}_x\bigg[\left.\Big(\int_0^{T^D}e^{-\delta t}dD_t\Big)
   \mathbbm{1}_{\{ T^D<\tau_{{a_0}}^D\}}\right|\{R_u^D\}_{u\ge 0}\bigg]\bigg]
   +\mathbb{E}_x\bigg[\mathbb{E}_x\bigg[\left.\Big(\int_0^{T^D}e^{-\delta t}dD_t\Big)
   \mathbbm{1}_{\{ T^D\ge \tau_{{a_0}}^D\}}\right|\{R_u^D\}_{u\ge 0}\bigg]\bigg]\nonumber\\
=&\mathbb{E}_x\bigg[
\int_0^{\tau_{{a_0}}^D}\Big(\int_0^{s}e^{-\delta t}dD_t\Big)
P(\left.T^D\in ds\right|\{R_u^D\}_{u\ge 0})\bigg] 
+\mathbb{E}_x\bigg[\Big(\int_0^{\tau_{{a_0}}^D}e^{-\delta t}dD_t\Big)P(\left. T^D\ge \tau_{{a_0}}^D\right|\{R_u^D\}_{u\ge 0})\bigg],\label{alt-p1}
    \end{align}
where  the last equality follows by noticing $\tau_{{a_0}}^D\in \sigma(R_u^D: u\ge 0)$ and $T^D\le \tau^D_{{a_0}}$.
Since the  bankruptcy intensity at time $s$ is
$\omega(R_s^D)$ conditional on the surplus path,
\begin{align}
&P\left(
T^D\in ds
\,\middle|\,
\{R_u^D\}_{u\ge0}
\right)
=
\omega(R_s^D)
e^{-\int_0^s\omega(R_u^D)\,du}\,ds.
\\
&P\left(
T^D>t
\,\middle|\,
\{R_u^D\}_{u\ge0}
\right)
=
\exp\!\left(
-\int_0^t\omega(R_u^D)\,du
\right),
\qquad
t<\tau_{a_0}^D.
\end{align}
Substituting this expression  gives
\begin{align}
&\mathbb{E}_x\bigg[
\int_0^{\tau_{{a_0}}^D}\Big(\int_0^{s}e^{-\delta t}dD_t\Big)
P(\left.T^D\in ds\right|\{R_u^D\}_{u\ge 0})\bigg]\nonumber\\
=&\mathbb{E}_x\bigg[
\int_0^{\tau_{{a_0}}^D}\Big(\int_0^{s}e^{-\delta t}dD_t\Big)
\omega(R_s^D)
    e^{-\int_0^s\omega(R_u^D)du}ds\bigg]=\mathbb{E}_x\bigg[
\int_0^{\tau_{{a_0}}^D}
e^{-\delta t}
\Big(\int_{t}^{\tau_{{a_0}}^D}\omega(R_s^D)
    e^{-\int_0^s\omega(R_u^D)du}ds\Big)dD_t\bigg]\nonumber\\
   =&\mathbb{E}_x\bigg[
\int_0^{\tau_{{a_0}}^D}
e^{-\delta t} \bigg(e^{-\int_0^t\omega(R_u^D)du}
    -e^{-\int_0^{\tau^D_{{a_0}}}\omega(R_u^D)
    du}\bigg)dD_t\bigg],
   \\    &\mathbb{E}_x\bigg[\Big(\int_0^{\tau_{{a_0}}^D}e^{-\delta t}dD_t\Big)P(\left. T^D\ge \tau_{{a_0}}^D\right|\{R_u^D\}_{u\ge 0})\bigg]=\mathbb{E}_x\bigg[\Big(\int_0^{\tau_{{a_0}}^D}e^{-\delta t}dD_t\Big)e^{-\int_0^{\tau^D_{{a_0}}}\omega(R_u^D)
    du}\bigg].\label{alt-p3}
    \end{align}
    Combining \eqref{alt-p1}-\eqref{alt-p3} yields the following  equivalent representation of the  performance function:
    $
    \mathcal{P}_D(x)
   =\mathbb{E}_x\Big[
\int_0^{\tau_{{a_0}}^D}
e^{-(\delta t+\int_0^t\omega(R_u^D)du)}dD_t\Big]$ for $x> {a_0}$.
     Accordingly, the value function admits the representation
    $
    V(x)
   =\sup_{D\in\Pi}\mathbb{E}_x\Big[
\int_0^{\tau_{{a_0}}^D}
e^{-(\delta t+\int_0^t\omega(R_u^D)du)}dD_t\Big]$ for $x> {a_0}$. \hfill $\square$.\\

\noindent \textbf{Proof of Lemma \ref{lemma6.1}}
Note
\begin{align*}    \mathbb{E}_{x}\Big[e^{-\delta T^D}{f}(R_{T^D}^D); T^D<\tau_{{a_0}}^D\Big]
   &=\mathbb{E}_x\Big[\mathbb{E}_x\Big[e^{-\delta T^D}{f}(R_{T^D}^D)
   \mathbbm{1}_{\{ T^D<\tau_{{a_0}}^D\}}|\{R_u^D\}_{u\ge 0}\Big]\Big]\\
    &=\mathbb{E}_x\bigg[\int_0^{\tau_{{a_0}}^D}
    e^{-\delta s}{f}(R_s^D)\mathbb{P}_x(T^D\in ds|\{R_u^D\}_{u\ge 0})\bigg]\\
    &=\mathbb{E}_x\bigg[\int_0^{\tau_{{a_0}}^D}
    e^{-\delta s}{f}(R_s^D)\omega(R_s^D)
    e^{-\int_0^s\omega(R_t^D)dt}ds\bigg],
\end{align*}
where the second to the last equality follows by noticing $\tau_{{a_0}}^D\in \sigma(R_u^D: u\ge 0)$, and the last equality follows by noticing that the hazard rate of the bankruptcy at time $s$ is $\omega(y)$ given $R_s^D=y$. \hfill $\square$\\

\noindent \textbf{Proof of  Theorem \ref{verification-ext}} 
Define the operator $\mathcal{A}$ by
\begin{equation}\label{operator-ext}
\mathcal{A}(f)(x)=
\begin{cases}
\frac{\sigma_d^2(x)}2f^{\prime\prime}(x)+\mu_d(x) f^{\prime}(x)-(\delta+\omega(x))f(x), &{a_0}<x<{a_d},\\
\frac{\sigma_s^2(x)}2f^{\prime\prime}(x)+\mu_s(x) f^{\prime}(x)-\delta f(x), &{a_d}<x<{a_s},\\
\frac{\sigma^2(x)}2f^{\prime\prime}(x)+\mu(x) f^{\prime}(x)-\delta f(x), &x\ge{a_s}.
\end{cases}
\end{equation}
It follows by \eqref{eq1-ext}-\eqref{eq4-ext} that
\begin{align}
\mathcal{A}(f)(x)\le 0\ \mbox{ for $x\in ({a_0},{a_d})\cup({a_d},{a_s})\cup({a_s}, +\infty)$.}\label{Af-ext}
\end{align}
Recall that $D$ is a right continuous stochastic process with left limits and so we can decompose $D$ as: $dD_t=dD^c_t+\Delta D_t$ where $\Delta D_t:=D_t-D_{t-}$ and $\{D^c_t:t\ge 0\}$ represents the continuous part of $D$. By applying the generalized It\^o formula we can obtain
\begin{align}
&e^{-(\int_0^{t\wedge \tau^D_{{a_0}}}(\delta+\omega(R_u^D))du)}{f}(R_{t\wedge \tau^D_{{a_0}}}^D) \nonumber\\
= &{f}(R_{0-})+\int_0^{t\wedge \tau^D_{{a_0}}}e^{-(\int_0^{s}(\delta+\omega(R_u^D))du)}\mathcal{A}({f})(R_{s-}^D)ds +\int_0^{t\wedge \tau^D_{{a_0}}}e^{-(\int_0^{s}(\delta+\omega(R_u^D))du)}\sigma(R_{s-}^D) {f}^{\prime}(R_{s-}^D)dW_s\nonumber\\
&-\int_0^{t\wedge \tau^D_{{a_0}}}e^{-(\int_0^{s}(\delta+\omega(R_u^D))du)}{f}^{\prime}(R_{s-}^D)dD^c_s +\sum_{0<s\le t\wedge \tau^D_{{a_0}}}e^{-(\int_0^{s}(\delta+\omega(R_u^D))du)}\left({f}(R_s^D)-{f}(R_{s-}^D)\right) \nonumber\\
\leq&{f}(R_{0-})+\int_0^{t\wedge \tau^D_{{a_0}}}e^{-(\int_0^{s}(\delta+\omega(R_u^D))du)}\sigma(R_{s-}^D) {f}^{\prime}(R_{s-}^D)dW_s-\int_0^{t\wedge \tau^D_{{a_0}}}e^{-(\int_0^{s}(\delta+\omega(R_u^D))du)}{f}^{\prime}(R_{s-}^D)dD^c_s \nonumber\\
&+\sum_{0<s\le t\wedge \tau^D_{{a_0}}}e^{-(\int_0^{s}(\delta+\omega(R_u^D))du)}\left({f}(R_s^D)-{f}(R_{s-}^D)\right).  \label{int form 2-ext}
\end{align}
where the  last inequality follows by \eqref{Af-ext}.

Taking expectations on both sides of \eqref{int form 2-ext} yields
\begin{align}
&\mathbb{E}_{x}\left[e^{-(\int_0^{t\wedge \tau^D_{{a_0}}}(\delta+\omega(R_u^D))du)}{f}(R_{t\wedge \tau^D_{{a_0}}}^D)\right] \nonumber\\
\le&{f}(x)+\mathbb{E}_{x}\left[\int_0^{t\wedge \tau^D_{{a_0}}}e^{-(\int_0^{s}(\delta+\omega(R_u^D))du)}\sigma {f}^{\prime}(R_{s-}^D)dW_s\right] 
-\mathbb{E}_{x}\left[\int_0^{t\wedge \tau^D_{{a_0}}}e^{-(\int_0^{s}(\delta+\omega(R_u^D))du)}{f}^{\prime}(R_{s-}^D)dD^c_s\right]\nonumber\\
&+\mathbb{E}_{x}\bigg[\sum_{0<s\le t\wedge \tau^D_{{a_0}}} e^{-(\int_0^{s}(\delta+\omega(R_u^D))du)}\left({f}(R_s^D)-{f}(R_{s-}^D)\right)\bigg]. \label{expectation-ext}
\end{align}
Note that $\int_0^t e^{-(\int_0^{s}(\delta+\omega(R_u^D))du)} \sigma {f}^{\prime}(R_{s-}^D)dW_s$ is a local martingale. So there exists a positive sequence $\{t_n\}$ with $\lim\limits_{n\rightarrow +\infty} t_n=+\infty$ such that
\begin{gather} \label{martingale-ext}
    \mathbb{E}_{x}\bigg[\int_0^{t_n\wedge t\wedge \tau^D_{{a_0}}}e^{-(\int_0^{s}(\delta+\omega(R_u^D))du)}\sigma{f}^{\prime}(R_{s-}^D)dW_s\bigg]=0.
\end{gather}
Notice that under any admissible strategy $D$, any jump of the controlled process $R^D$ is caused by a singular dividend payment only and the jump  (if any) is downward only and thus  $R_s^D\le R^D_{s-}$, $R^D_s\mathbbm{1}_{\{R^D_{s-}\in({a_0},{a_d})\}}
=R^D_{s-}\mathbbm{1}_{\{R^D_{s-}\in({a_0},{a_d})\}}$. Furthermore, there are no dividend payments whenever the controlled process is below $a_s$ and so $  \mathbbm{1}_{\{R^D_{s-}\in({a_0},{a_d})\}}dD^c_s=0$. Hence,
\begin{align}
&\sum_{0<s\le t_{n}\wedge t\wedge \tau^D_{{a_0}}}e^{-(\int_0^{s}(\delta+\omega(R_u^D))du)}\left({f}(R_s^{D})-{f}(R_{s-}^{D})\right) \nonumber\\
=&\sum_{0<s\le t_{n}\wedge t\wedge \tau^D_{{a_0}}} e^{-(\int_0^{s}(\delta+\omega(R_u^D))du)}\left({f}(R_s^{D})-{f}(R_{s-}^{D})\right)\mathbbm{1}_{\{R_{s-}^D\ge {a_d}\}} \nonumber\\
\leq &\sum_{0<s\le t_{n}\wedge t\wedge \tau^D_{{a_0}}} e^{-(\int_0^{s}(\delta+\omega(R_u^D))du)}\left(R_s^{D}-R_{s-}^{D}\right)\mathbbm{1}_{\{R_{s-}^D\ge {a_d}\}}\le 0, \label{inequality 1-ext}
\end{align}
where the last inequality follows by noting ${f}^{\prime}(x)\ge 1$ for any $x\ge {a_d}$ (by \eqref{eq3-ext}). Furthermore,
\begin{equation} \label{inequality 2-ext}
\int_0^{t_n\wedge t\wedge \tau^D_{{a_0}}} e^{-(\int_0^{s}(\delta+\omega(R_u^D))du)}{f}^{\prime}(R_{s-}^D)\,dD_s^c\ge \int_0^{t_n\wedge t\wedge \tau^D_{{a_0}}} e^{-(\int_0^{s}(\delta+\omega(R_u^D))du)}\,dD_s^c.
\end{equation}
Combining \eqref{expectation-ext}, \eqref{martingale-ext}, \eqref{inequality 1-ext}, and \eqref{inequality 2-ext}, we obtain
\begin{equation}
\begin{aligned}
{f}(x)\ge& \mathbb{E}_{x}\left[e^{-(\int_0^{t_n\wedge t\wedge \tau^D_{{a_0}}}(\delta+\omega(R_u^D))du)}{f}(R^D_{t_n\wedge t\wedge \tau^D_{{a_0}}})\right] +\mathbb{E}_{x}\bigg[\int_0^{t_n\wedge t\wedge \tau^D_{{a_0}}}e^{-(\int_0^{s}(\delta+\omega(R_u^D))du)}dD_s\bigg].
\end{aligned}\label{2110-0-ext}
\end{equation}
Note $f^\prime(x)\ge 1$ for all $x>{a_0}$ and so $f(x)>f({a_0})=0$. Thus, by  Fatou's lemma we can obtain
\begin{align}
&\liminf_{n\rightarrow +\infty} \liminf_{t\rightarrow +\infty} \mathbb{E}_{x}\left[e^{-(\int_0^{t_n\wedge t\wedge \tau^D_{{a_0}}}(\delta+\omega(R_u^D))du)}{f}(R^D_{t_n\wedge t\wedge \tau^D_{{a_0}}})\right] 
\ge  \mathbb{E}_{x}\left[e^{-(\int_0^{ \tau^D_{{a_0}}}(\delta+\omega(R_u^D))du)}{f}(R^D_{ \tau^D_{{a_0}}})\right]=0,\label{2110-1-ext}
\end{align}
where the last equality follows by noticing $R^D_{ \tau^D_{{a_0}}}={a_0}$ and $f({a_0})=0$.
Note that $D$ is non-decreasing and so by  using the monotone convergence we can obtain,
\begin{equation} \label{2110-2-ext}
   \lim \limits_{\substack{ n\rightarrow +\infty \\ t\rightarrow +\infty}}\mathbb{E}_x\Big[\int_0^{t_n\wedge t\wedge \tau^D_{{a_0}}}e^{-(\int_0^{s}(\delta+\omega(R_u^D))du)}dD_s\Big]
   =\mathbb{E}_x\Big[\int_0^{\tau^D_{{a_0}}}e^{-(\int_0^{s}(\delta+\omega(R_u^D))du)}dD_s\Big].
\end{equation}
Combining \eqref{2110-0-ext}, \eqref{2110-1-ext} and \eqref{2110-2-ext} yields
$f(x)\ge \mathbb{E}_x\Big[\int_0^{\tau^D_{{a_0}}}e^{-(\int_0^{s}(\delta+\omega(R_u^D))du)}dD_s\Big],\quad x\ge {a_0}.$
It then follows by the arbitrariness of $D$ in $\Pi$ that
$f(x)\ge\sup_{D\in\Pi} \mathbb{E}_x\Big[\int_0^{\tau^D_{{a_0}}}e^{-(\int_0^{s}(\delta+\omega(R_u^D))du)}dD_s\Big]=V(x)$ for $x\ge {a_0},$
where the last equality follows by the definition of the value function $V$ (\eqref{alt-valueF}).
This completes the proof. \hfill $\square$\\

\noindent \textbf{Proof of  Lemma \ref{fb-sol-ext}}
From Definition \ref{fundamentalsol-ext} we know that $g_1$ is a solution to \eqref{diff1-ext}. Then for any constant $c_1$, $c_1g_1$ is also a solution to the equation \eqref{diff1-ext}.
From Definition \ref{fundamentalsol-ext} we know that $g_2$ and $g_3$ are solutions to \eqref{diff2-ext}, and $g_4$ and $g_5$ are solutions to \eqref{diff3-ext}. Computing the Wronskian of $g_2$ and $g_3$ we have
$W(g_2,g_3)({a_d})=g_2({a_d})\ g_3^{\prime}({a_d})-\ g_2^{\prime}({a_d})\ g_3({a_d})=1\neq 0,$ and thus
we conclude that $g_2(x)$ and $g_3(x)$ are two fundamental solutions to the equation \eqref{diff2-ext}. Hence, any solution to \eqref{diff2-ext} has the following general form:
$f(x)=c_2 g_2(x)+c_3 g_3(x), \quad x\in[{a_d},{a_s}),$
where $c_2$ and $c_3$ are constants. Similarly, $g_4(x)$ and $g_5(x)$ form a fundamental pair of solutions on $[a_s,b)$  to \eqref{diff3-ext} and any solution to this equation  is of the form $c_4 g_4(x)+c_5 g_5(x)$ for some constant $c_4$ and $c_5$.

Now define $c_1(b)$, $c_2(b)$, $c_3(b)$, $c_4(b)$ and $c_5(b)$  to be the quantities such that
\begin{align}
c_2(b) g_2({a_d})+c_3(b) g_3({a_d})&=c_1(b) g_1({a_d}),\label{cond1-ext} \quad 
c_2(b) g_2^{\prime}({a_d})+c_3(b) g_3^{\prime}({a_d})=c_1(b) g_1^{\prime}({a_d}),
\\
c_4(b)g_4(a_s)+c_5(b)g_5(a_s)&=c_2(b)g_2(a_s)+c_3(b)g_3(a_s),\label{cond3-ext}\\
c_4(b)g_4^{\prime}(a_s)+c_5(b)g_5^{\prime}(a_s)&=c_2(b)g_2^{\prime}(a_s)+c_3(b)g_3^{\prime}(a_s),\label{cond4-ext}\\
c_4(b) g_4^{\prime}(b)+c_5(b) g_5^{\prime}(b)&=1.\label{cond5-ext}
\end{align}
Define a function
\begin{align}
f_b(x)=\begin{cases}
c_1(b)g_1(x), & {a_0}\le x<{a_d},\\
c_2(b)g_2(x)+c_3(b)g_3(x), &{a_d}\le x<{a_s},\\
c_4(b)g_4(x)+c_5(b)g_5(x), &{a_s}\le x<{b}
\end{cases}
\end{align}
The conditions \eqref{cond1-ext}-\eqref{cond4-ext} ensure the continuity and differentiability  of $f_b$ at ${a_d}$ and ${a_s}$, respectively, and the condition \eqref{cond5-ext} guarantees the terminal condition  \eqref{diff4-ext}.\\
By substituting the initial values of $g_1,\ldots, g_5$ listed in Definition \ref{fundamentalsol-ext} in \eqref{cond1-ext}-\eqref{cond5-ext} we obtain
\begin{align} 
&c_2(b)=c_1(b) g_1({a_d}),
\qquad 
c_3(b)=c_1(b) g_1^{\prime}({a_d}),
\label{simult3-ext}
\qquad
c_4(b)=c_2(b)g_2(a_s)+c_3(b)g_3(a_s), \\
&c_5(b)=c_2(b)g_2^{\prime}(a_s)+c_3(b)g_3^{\prime}(a_s),
\label{simult5-ext} 
\qquad c_4(b) g_4^{\prime}(b)+c_5(b) g_5^{\prime}(b)=1,
\end{align}
which yields
\begin{align*}
    c_1(b)&=\frac{1}{(g_1(a_d)g_2(a_s)+g_1^{\prime}(a_d)g_3(a_s))g_4^{\prime}(b)
    +(g_1(a_d)g_2^{\prime}(a_s)+g_1^{\prime}(a_d)g_3^{\prime}(a_s))g_5^{\prime}(b)},\\
 c_2(b)&=g_1({a_d})c_1(b),\qquad
    c_3(b)=g_1^{\prime}({a_d})c_1(b),\\
    c_4(b)&=g_2(a_s)c_2(b)+g_3(a_s)c_3(b)=\left(g_2(a_s)g_1({a_d})+g_3(a_s)g_1^{\prime}({a_d})\right)c_1(b),\\
    c_5(b)&=g_2^{\prime}(a_s)c_2(b)+g_3^{\prime}(a_s)c_3(b)=\left(g_2^{\prime}(a_s)g_1({a_d})+g_3^{\prime}(a_s)g_1^{\prime}({a_d})\right)c_1(b).
    \end{align*}
We can verify that the  function $f_b$ defined above is a continuously differentiable solution to \eqref{diff1-ext}-\eqref{diff4-ext}.
 Note that there is a unique set of values for $c_1(b)$- $c_5(b)$ so that all \eqref{simult3-ext}-\eqref{simult5-ext} are satisfied. Thus, there exists a unique continuously differentiable solution to \eqref{diff1-ext}-\eqref{diff4-ext}.

It is clear that $f_b$ is twice continuously differentiable on $({a_0},{a_d})$, $({a_d},{a_s})$ and $({a_s},b)$.
From the way that $c_1(b)- c_5(b)$ were determined we can observe $f_{b}({a_d}-)=f_{b}({a_d}+)$, $f_{b}^\prime({a_d}-)=f_b^\prime({a_d}+)$, $f_{b}({a_s}-)=f_{b}({a_s}+)$ and $f_{b}^\prime({a_s}-)=f_b^\prime({a_s}+)$. This implies that $f_b$ is continuously differentiable at ${a_d}$ and ${a_s}$.
\hfill $\square$\\

\noindent \textbf{Proof of  Lemma \ref{fxbar-sol-ext}}
The derivations are similar to Lemma \ref{fb-sol-ext}.  We know that a solution to \eqref{diff2-0-ext} has the following general form: 
$f(x)=c_6g_1(x)$ for $x\in(a_0,a_d)$, and  any solution to \eqref{diff2-0-ext-1} has the following general form:
$f(x)=c_7 g_2(x)+c_8 g_3(x)$ for $x\in[{a_d},{a_s})$, 
where $c_6 - c_8$ are constants. 
Now define $c_6(b)$, $c_7(b)$, and $c_8(b)$  to be the quantities such that
\begin{align}
c_7({a_s}) g_2({a_d})+c_8({a_s}) g_3({a_d})&=c_6(b) g_1({a_d}),\label{cond1-ext-03}
\qquad
c_7({a_s}) g_2^{\prime}({a_d})+c_8({a_s}) g_3^{\prime}({a_d})=c_6({a_s}) g_1^{\prime}({a_d}),
\\
c_7({a_s}) g_2^{\prime}(a_s)+c_8({a_s}) g_3^{\prime}(a_s)&=1.\label{cond5-ext-03}
\end{align}
Define 
\begin{align}
f_{a_s}(x):=\begin{cases}
c_6({a_s})g_1(x), & {a_0}\le x<{a_d},\\
c_7({a_s})g_2(x)+c_8({a_s})g_3(x), &{a_d}\le x<{a_s},\\
\end{cases}
\end{align}
 By solving \eqref{cond1-ext-03}-\eqref{cond5-ext-03} we obtain \eqref{c7-ext}-\eqref{c8-ext}. 
We can see that $f_{a_s}$ indeed is a solution to \eqref{diff2-0-ext}-\eqref{diff3-0-ext}. 
The conditions \eqref{cond1-ext-03}-\eqref{cond5-ext-03} ensure the continuity and differentiability  of $f_{a_s}$ at ${a_d}$, and the condition \eqref{cond5-ext-03} guarantees the terminal condition  \eqref{diff3-0-ext}.
We can easily verify that $f_{{a_d}}$ defined above is the desired unique solution. \hfill $\square$\\

\noindent \textbf{Proof of  Theorem \ref{Vb-ext}}
From the definition of $V_b$ we can see immediately $V_b(x)=0$ for $x\le a_0$. 
Under the barrier strategy $D^b$, when the initial surplus is greater than $b$, a lump sum with an amount of $R_{0-}-b$ is paid out as dividends immediately, which reduces the surplus to $b$. This implies, $V_b(x)=V_b(b)+x-b$ for $x>b$. Hence it is sufficient to show that $V_b(x)=f_b(x)$ for ${a_0}\le x\le b$. From Lemma \ref{fb-sol-ext} we know $f_b({a_0})=0$. Noting that $V_b({a_0})=0$, we conclude $V_b({a_0})=f_b({a_0})$.

Now we proceed to show that $V_b(x)=f_b(x)$ for ${a_0}<x\le b$.
From Lemma \ref{fb-sol-ext} and  Lemma \ref{fxbar-sol-ext} we know that $f_b$ is a continuously  differentiable  and piecewise twice differentiable (more precisely, twice  differentiable in $({a_0},{a_d})$ and $({a_d}, b)$). From the special structure of the barrier strategy $D^{b}$ we can observe that given the initial value $R_{0-}\in ({a_0},b]$, the surplus process will always remain in $[{a_0},b]$ and so we can apply the generalized Ito's lemma to $e^{-(\int_0^{t\wedge \tau^{D^b}_{{a_0}}} (\delta+\omega(R_u^D))\,du)}f_b(R_{t\wedge \tau^{D^b}_{{a_0}}}^{D^{b}})$ given $R_{0-}=x\in ({a_0},b]$.

Noting that $dD^{b}_s=R_{s-}^{D^{b}}-R_s^{D^{b}}+dD^{b,c}_s$. Define the operator $\mathcal{A}$ in the same way as in \eqref{operator-ext}:
\begin{equation*}
\mathcal{A}(f)(x)=
\begin{cases}
\frac{\sigma_d^2(x)}2f^{\prime\prime}(x)+\mu_d(x) f^{\prime}(x)-(\delta+\omega(x))f(x), &{a_0}<x<{a_d},\\
\frac{\sigma_s^2(x)}2f^{\prime\prime}(x)+\mu_s(x) f^{\prime}(x)-\delta f(x), &{a_d}<x<{a_s},\\
\frac{\sigma_u^2(x)}2f^{\prime\prime}(x)+\mu_u(x) f^{\prime}(x)-\delta f(x), &x\ge{a_s}.
\end{cases}
\end{equation*}
From Lemma \ref{fb-sol-ext} we know that $\mathcal{A}(f_b)(x)=0$ for any $x \in({a_0},b)$. By applying the generalized Ito's lemma, we obtain
\begin{align}     \label{derive perf-ext}
  & \mathbb{E}_{x}\bigg[e^{-(\int_0^{t\wedge \tau^{D^b}_{{a_0}}}(\delta+\omega(R_u^{D^b}))du)}f_b(R_{t\wedge \tau^{D^b}_{{a_0}}}^{D^{b}})\bigg] \nonumber\\
   =&f_b(x)+\mathbb{E}_{x}\bigg[\int_0^{t\wedge \tau^{D^b}_{{a_0}}}e^{-(\delta s+\int_0^{s}\omega(R_u^{D^b})du)}\mathcal{A}(f_b)(R_{s-}^{D^{b}})ds\bigg]\nonumber\\
+& \mathbb{E}_{x}\bigg[\int_0^{t\wedge \tau^{D^b}_{{a_0}}}e^{-(\delta s+\int_0^{s}\omega(R_u^{D^b})du)}\sigma(R_{s-}^{D^b}){f_b}^{\prime}(R_{s-}^{D^b})dW_s\bigg]
-\mathbb{E}_{x}\bigg[\int_0^{t\wedge \tau^{D^b}_{{a_0}}}e^{-(\delta s+\int_0^{s}\omega(R_u^{D^b})du)}f_b^{\prime}(R_{s-}^{D^{b}})dD^{b}_s\bigg]\nonumber\\
   ={}&f_b(x)+\mathbb{E}_{x}\bigg[\int_0^{t\wedge \tau^{D^b}_{{a_0}}}e^{-(\delta s+\int_0^{s}\omega(R_u^{D^b})du)}\sigma(R_{s-}^{D^b}) {f_b}^{\prime}(R_{s-}^{D^b})dW_s\bigg]
    -\mathbb{E}_{x}\bigg[\int_0^{t\wedge \tau^{D^b}_{{a_0}}}e^{-(\delta s+\int_0^{s}\omega(R_u^{D^b})du)}f_b^{\prime}(R_{s-}^{D^{b}})dD^{b}_s\bigg].
\end{align}
Note that $\int_0^{t\wedge \tau^{D^b}_{{a_0}}}e^{-(\delta s+\int_0^{s}\omega(R_u^{D^b})du)}\sigma(R_{s-}^{D^b}) {f_b}^{\prime}(R_{s-}^{D^b})dW_s$ is a local martingale, so there exists a positive sequence $\{t_n\}$ with $\lim\limits_{n\rightarrow +\infty} t_n=+\infty$ such that
\begin{equation}\label{m2-ext}
\mathbb{E}_{x}\bigg[\int_0^{t_n\wedge t\wedge \tau^{D^b}_{{a_0}}}e^{-(\delta s+\int_0^{s}\omega(R_u^{D^b})du)}\sigma(R_{s-}^{D^b}) {f_b}^{\prime}(R_{s-}^{D^b})dW_s\bigg]=0.
\end{equation}
Under the barrier strategy $D^b$, the dividends are paid when the surplus reaches the predefined level $b$, such that $dD^{b}_s=\mathbbm{1}_{\{R_{s-}^{D^{b}}=b\}}dD^{b}_s$. By \eqref{diff3-ext}, we know that $f^{\prime}(R_{s-}^{D^{b}})\mathbbm{1}_{\{R_{s-}^{D^{b}}=b\}}=f^{\prime}(b)\mathbbm{1}_{\{R_{s-}^{D^{b}}=b\}}=\mathbbm{1}_{\{R_{s-}^{D^{b}}=b\}}$. Hence,
\begin{align}\label{e1-ext}
&\mathbb{E}_{x}\bigg[\int_0^{t_n\wedge t\wedge \tau^{D^b}_{{a_0}}}e^{-(\delta s
    +\int_0^{s}\omega(R_u^{D^b})du)}f^{\prime}(R_{s-}^{D^{b}})dD^{b}_s\bigg]
=\mathbb{E}_{x}\bigg[\int_0^{t_n\wedge t\wedge \tau^{D^b}_{{a_0}}}e^{-(\delta s+\int_0^{s}\omega(R_u^{D^b})du)}dD^{b}_s\bigg].
\end{align}
Combining \eqref{derive perf-ext}, \eqref{m2-ext} and \eqref{e1-ext}, we have
\begin{align}
f_b(x)&=\mathbb{E}_{x}\bigg[e^{-(\int_0^{t_n\wedge t\wedge \tau^{D^b}_{{a_0}}}(\delta+\omega(R_u^{D^b}))du)}f_b(R_{t_n\wedge t\wedge \tau^{D^b}_{{a_0}}}^{D^{b}})\bigg] +\mathbb{E}_{x}\bigg[\int_0^{t_n\wedge t\wedge \tau^{D^b}_{{a_0}}}e^{-(\delta s+\int_0^{s}\omega(R_u^{D^b})du)}dD^{b}_s\bigg].\label{23922-1-ext}
\end{align}
Note that $D^b$ is non-decreasing. So, by using the monotone convergence we can obtain
$$ \lim \limits_{\substack{ n\rightarrow +\infty \\ t\rightarrow +\infty}}\mathbb{E}_{x}\bigg[\int_0^{t_n\wedge t\wedge \tau^{D^b}_{{a_0}}}e^{-(\delta s+\int_0^{s}\omega(R_u^{D^b})du)}dD^{b}_s\bigg]
   =\mathbb{E}_{x}\bigg[\int_0^{\tau^{D^b}_{{a_0}}}e^{-(\delta s+\int_0^{s}\omega(R_u^{D^b})du)}dD^{b}_s\bigg].$$
Note that $R_{t_n\wedge t\wedge\tau^{D^b}_{{a_0}}}^{D^b}$ is always bounded above by $b$ and below by ${a_0}$ given that the initial reserve is greater than ${a_0}$ and that $\omega(\cdot)$ is non-negative, and so by applying the dominated convergence twice we can obtain
\begin{align*}
&\lim \limits_{\substack{ n\rightarrow +\infty \\ t\rightarrow +\infty}} \mathbb{E}_{x}\bigg[e^{-(\int_0^{t_n\wedge t}(\delta+\omega(R_u^{D^b}))du)}f_b(R_{t_n\wedge t}^{D^{b}})I\{t_n\wedge t< \tau^{D^b}_{{a_0}}\}\bigg]
=0, \ \ x\ge {a_0}.
\end{align*}
Note $R_{ \tau^{D^b}_{{a_0}}}^{D^{b}}={a_0}$ and hence, $f_b(R_{ \tau^{D^b}_{{a_0}}}^{D^{b}})=f_b({a_0})=0$. Consequently,
\begin{align}
&\lim \limits_{\substack{ n\rightarrow +\infty \\ t\rightarrow +\infty}} \mathbb{E}_{x}\bigg[e^{-(\int_0^{t_n\wedge t\wedge \tau^{D^b}_{{a_0}}}(\delta+\omega(R_u^{D^b}))du)}f_b(R_{t_n\wedge t\wedge \tau^{D^b}_{{a_0}}}^{D^{b}})\bigg]\nonumber\\
=&\lim \limits_{\substack{ n\rightarrow +\infty \\ t\rightarrow +\infty}} \bigg(\mathbb{E}_{x}\bigg[e^{-(\int_0^{t_n\wedge t}(\delta+\omega(R_u^{D^b}))du)}f_b(R_{t_n\wedge t}^{D^{b}})I\{t_n\wedge t< \tau^{D^b}_{{a_0}}\}\bigg]
\nonumber\\
&+\mathbb{E}_{x}\bigg[e^{-(\int_0^{\tau^{D^b}_{{a_0}}}(\delta+\omega(R_u^{D^b}))du)}
f_b(R_{\tau^{D^b}_{{a_0}}}^{D^{b}})I\{t_n\wedge t\ge  \tau^{D^b}_{{a_0}}\}\bigg]\bigg)=0.\label{23922-2-ext}
\end{align}
By combining \eqref{23922-1-ext} and \eqref{23922-2-ext} we arrive at
$
    f_b(x)=\mathbb{E}_{x}\bigg[\int_0^{\tau^{D^b}_{{a_0}}}e^{-(\delta s+\int_0^{s}\omega(R_u^{D^b})du)}dD^{b}_s\bigg]=\mathcal{P}_x(D^{b})=V_b(x),\ \ {a_0}<x\le b.
$

It follows by  Lemma \ref{fb-sol-ext} and Lemma \ref{fxbar-sol-ext} that
\begin{align}
f_b(b)=\begin{cases}
c_7(b)g_2(b)+ c_8(b)g_3(b)& b={a_s},\\
c_4(b)g_4(b)+ c_5(b)g_5(b)& b>{a_s},
\end{cases}\label{fb(b)-ext}
\end{align}
From the expressions for $c_1(b)-c_8(b)$ (see \eqref{c1b-ext}, \eqref{c5b-ext}, \eqref{c7-ext} and \eqref{c8-ext}),  we can observe that  all $C_i(b)$ $i=1,2,\ldots,8$ are continuous on $b\in [a_s,\infty)$ and 
\begin{align*}
\lim\limits_{b\downarrow {a_s}}c_1(b)&=\frac1{\left(g_1({a_d})g_2^{\prime}({a_s})+g_1^{\prime}({a_d})g_3^{\prime}({a_s})\right)}, \quad 
\lim\limits_{b\downarrow {a_s}}c_4(b)=\frac{g_2(a_s)g_1({a_d})+g_3(a_s)g_1^{\prime}({a_d})}{\left(g_1({a_d})g_2^{\prime}({a_s})+g_1^{\prime}({a_d})g_3^{\prime}({a_s})\right)}, \\
\lim\limits_{b\downarrow {a_s}}c_5(b)&=\frac{g_2^{\prime}(a_s)g_1({a_d})+g_3^{\prime}(a_s)g_1^{\prime}({a_d})}{\left(g_1({a_d})g_2^{\prime}({a_s})+g_1^{\prime}({a_d})g_3^{\prime}({a_s})\right)}.
\end{align*} 
Taking limits on  the second expression of \eqref{fb(b)-ext} we can obtain
\begin{align}
\lim\limits_{b\downarrow {a_s}}f_b(b)=\lim\limits_{b\downarrow {a_s}}c_4(b)g_4({a_s})+\lim\limits_{b\downarrow {a_s}}c_5(b)g_5({a_s})=\lim\limits_{b\downarrow {a_s}}c_4(b)=\frac{g_2(a_s)g_1({a_d})+g_3(a_s)g_1^{\prime}({a_d})}{\left(g_1({a_d})g_2^{\prime}({a_s})+g_1^{\prime}({a_d})g_3^{\prime}({a_s})\right)}=f_{a_s}(a_s),\label{21824-1-ext}
\end{align}
where the last equality follows by using the first expression 
of  \eqref{fb(b)-ext}.
Thus, $f_b(b)$ is  continuous in $b$ for $b> {a_s}$ and right continuous in $b$ for $b={a_s}$.
Notice by Theorem \ref{Vb-ext}, Lemma \ref{fb-sol-ext} and Lemma \ref{fxbar-sol-ext} we have
\begin{align}
V_b^{\prime\prime}(b-)&=f_b^{\prime\prime}(b)= \frac{2}{\sigma^2(b)}(\delta f_{b}(b)-\mu(b)f_b^{\prime}(b))=\frac{2}{\sigma^2(b)}(\delta f_{b}(b)-\mu(b)),\ \ b> {a_d},\label{Vbbdd-ext}
\end{align}
where the last equality follows by noting 
$f_b^\prime(b)=1$.
Hence, $V_b^{\prime\prime}(b-)$ is also continuous in $b$ for $b> {a_s}$.
This completes the proof.\hfill $\square$\\

\noindent \textbf{Proof of  Lemma \ref{finiteness}}
It is obvious from the definition of $b^{\ast}$  that $b^{\ast}\ge a_s$.
We use proof by contradiction to show $b^{\ast}<+\infty$. Suppose $b^{\ast}=+\infty$. Then by \eqref{b*-ext} it follows
\begin{align}
V_b^{\prime\prime}(b-)< 0 \ \mbox{for all $b>a_s$.}\label{Vbb--ext}
\end{align}
 Thus,
$
0> V_b^{\prime\prime}(b-)=\frac{2}{\sigma^2(b)}(\delta V_b(b)-\mu(b) )$ for $b> a_s$, 
where the last equality follows by  \eqref{Vbbdd-ext} and Theorem \ref{Vb-ext}. As a result,
$
\frac{V_b(b)}{b}< \frac{\mu(b)}{\delta b}<\frac{\mu(a_s)+ \sup_{y>a_s}\mu^\prime(y) b}{\delta b},\ \ b> a_s.
$
 Hence, by noting $\delta>\sup_{y>a_s}\mu^\prime(y) $, we can obtain
 \begin{align}
\limsup_{b\rightarrow +\infty}\frac{V_b(b)}{b}\le \frac{\sup_{y>a_s}\mu^\prime(y) }{\delta}<1 .\label{Vbb-bound-ext}
\end{align}
 On the other hand, by \eqref{Vbb--ext} it follows
$
 V_b^{\prime}(x)> V_b^{\prime}(b)=1,\ \ a_s<x<b, 
$ 
and hence,
$V_b(b)>V(a_s)+b-a_s\ge b-a_s$ for all $b\ge a_s$, which leads to
 $
\liminf_{b\rightarrow +\infty}\frac{V_b(b)}{b}\ge 1.
$
This is a contradiction to \eqref{Vbb-bound-ext}.\hfill $\square$\\

\noindent \textbf{Proof of  Lemma \ref{V-concavity-ext}}
From a previous result, we know $a_s<b^{*}<+\infty$. Thus, $V_{b^\ast}^{\prime\prime}(b^\ast-)=0$ (by Remark \ref{11123-1-ext}). 
Note from \eqref{vb-x+-ext} we can obtain $V_{b^\ast}(x)=1$ for $x\ge b^\ast$. It is, therefore, sufficient to show $V_{b^{*}}^{\prime\prime}(x)\le 0$ for $a_s\le x<b^{*}$. We use proof by contradiction.
Suppose there exists some $x_0$ on $(a_s,b^{*})$ such that
\begin{align}
V_{b^{*}}^{\prime\prime}(x_0)>0.\label{Vbx1-ext}
\end{align}
Define
\begin{align}
x_2&:=\sup\{x\in (x_0,b^{\ast}): V_{b^{*}}^{\prime\prime}(x)> 0\}.
\end{align}
Then,  obviously, 
$a_d\le x_0<x_2\le b^\ast$, and  by noting $V_{b^{*}}^{\prime\prime}(b^{*}-)= 0$, $ V_{b^{*}}^{\prime\prime}(x_0)>0$ and the continuity of $V_{b^\ast}^{\prime\prime}$ on $(a_s,b^{\ast})$
we can see
\begin{align}
&V_{b^{*}}^{\prime\prime}(x_{2}-)=0,
\quad 
V_{b^{*}}^{\prime\prime}(x)\le 0, \ x\in(x_{2}, b^{\ast}),\label{Vbx-3-ext}
\end{align}
Furthermore, since $V_{b^{*}}^{\prime\prime}(x_0)>0$ (\eqref{Vbx1-ext}), so the continuity  of $V_{b^{*}}^{\prime\prime}$ implies 
$V_{b^{*}}^{\prime\prime}(x)$ is positive in the left neighbourhood of $x_2$, that is, we can find an $x_1\in [x_0,x_2)$ such that
\begin{align}
&V_{b^{*}}^{\prime\prime}(x)>0, \ x\in(x_{1}, x_{2}).\label{Vbx-ext}\end{align}
Recall $V_{b^{\ast}}(x)=f_{b^{\ast}}(x)$ for $a_s< x< b^{\ast}$ (by Theorem \ref{Vb-ext}) and thus by Lemma \ref{fb-sol-ext} we know $V_{b^{\ast}}(x)$ satisfies
$
 \frac{1}{2}\sigma_u^2(x) {V}_{b^{\ast}}^{\prime\prime}(x)+\mu_u(x){V}_{b^{\ast}}^{\prime}(x)-\delta {V}_{b^{\ast}}(x)=0$ for $ x\in(a_s,b^{\ast}).
$ 
Thus,
\begin{align}
V_{b^{\ast}}^{\prime\prime}(x)&=\frac{2\left(\delta V_{b^{\ast}}(x)-\mu_u (x) V_{b^{\ast}}^{\prime}(x)\right)}{\sigma_u^2(x)}, \quad x\in(a_s,b^{\ast}).\label{diffV-2-ext}
\end{align}
By \eqref{Vbx3-ext} and \eqref{Vbx-ext} we know that for any $x\in (x_1,x_2)$,
\begin{align}
0<V_{b^{\ast}}^{\prime\prime}(x)-V_{b^{\ast}}^{\prime\prime}(x_2) =\frac{2}{\sigma_u^2(x)}\Big(\delta(V_{b^{\ast}}(x)-V_{b^{\ast}}(x_2))
-\mu_u(x)V_{b^{\ast}}^{\prime}(x)+\mu_u(x_2)V_{b^{\ast}}^{\prime}(x_2))\Big), \label{Vx-Vxk-ext}
\end{align}
where the last equality follows by \eqref{diffV-2-ext}.
By dividing both sides of \eqref{Vx-Vxk-ext} by $x-x_2$ and noticing $x-x_2<0$ for $x\in (x_1,x_2)$, we arrive at
\begin{align*}
&\delta\frac{V_{b^{\ast}}(x)-V_{b^{\ast}}(x_2)}{x-x_2}
-\frac{\mu_u(x)V_{b^{\ast}}^{\prime}(x)-\mu_u(x_2)V_{b^{\ast}}^{\prime}(x_2)}{x-x_2}<0.
\end{align*}
By letting  $x\uparrow x_2$  in the above inequality we can obtain
$
 (\delta-\mu_u^\prime(x_2)) V_{b^{\ast}}^{\prime}(x_2)-\mu_u(x_2)V_{b^{\ast}}^{\prime\prime}(x_2-)\le 0.
$ 
which along with $V_{b^{\ast}}^{\prime\prime}(x_2-)=0$ (see \eqref{Vbx3-ext}) and $\sup_x\mu_u^\prime(\cdot)<\delta$  implies
\begin{align}
V_{b^{\ast}}^{\prime}(x_2)\le 0.\label{negative-ext}
\end{align}
 However, by \eqref{Vbx-3-ext} we know $V_{b^{\ast}}^{\prime}(x_2)\ge V_{b^{\ast}}^{\prime}(b^{\ast})=1$, which contradicts \eqref{negative-ext}. This completes the proof. \hfill $\square$\\

\noindent \textbf{Proof of  Theorem \ref{maintheorem-ext}}
Note from Remark \ref{11123-1-ext} we know that
\begin{equation}\label{Verif-1-ext}
V_{b^{\ast}}^{\prime\prime}(b^{\ast}-)\ge 0.
\end{equation}
Recall
$\frac{1}{2}\sigma_u^2(x) {V}_{b^{\ast}}^{\prime\prime}(x)+\mu_u(x){V}_{b^{\ast}}^{\prime}(x)-\delta {V}_{b^{\ast}}(x)=0$ for $x\in({a_s},b^{\ast}).$ 
By letting $x\uparrow b^{\ast}$ and noting ${V}_{b^{\ast}}^{\prime}(b^{\ast})=1$,
\begin{equation}\label{Verif-2-ext}
\frac{1}{2}\sigma_u^2(x) {V}_{b^{\ast}}^{\prime\prime}(b^{\ast}-)+\mu_u(b^{\ast})-\delta {V}_{b^{\ast}}(b^{\ast}-)=0.
\end{equation}
Combining \eqref{Verif-1-ext} and \eqref{Verif-2-ext} yields
$\mu_u(b^{\ast})-\delta {V}_{b^{\ast}}(b^{\ast}-)\le 0,$
which implies
 $V_{b^{\ast}}(b^{\ast}-)\ge \frac{\mu_u(b^{\ast})}{\delta}.$
Recall $V_{b^{\ast}}(x)=V_{b^{\ast}}(b^{\ast})+x-b^{\ast}$, and so ${V}_{b^{\ast}}^{\prime}(x)=1$ and ${V}_{b^{\ast}}^{\prime\prime}(x)=0$ for $x>b^{\ast}$. Thus,
\begin{align}
&\frac{1}{2}\sigma_u^2(x) {V}_{b^{\ast}}^{\prime\prime}(x)+\mu_u(x){V}_{b^{\ast}}^{\prime}(x)-\delta {V}_{b^{\ast}}(x)=\mu_u(x)-\delta\left(V_{b^{\ast}}(b^{\ast})+x-b^{\ast}\right) \le \mu_u(x)-\delta\left(\frac{\mu_u(b^{\ast})}{\delta}+x-b^{\ast}\right) \nonumber\\
\le& (\sup_{y>a_s}\mu_u^\prime(y)-\delta)(x-b^{\ast})<0, \  x>b^{\ast},\label{Verif-3-ext}
\end{align}
where the last inequality is due to $\sup_{y> a_s}\mu_u^\prime(y)<\delta$. 
Combining \eqref{Verif-3-ext} with $V_{b^{\ast}}^\prime(x)=1$ for $x\ge b^{\ast}$ (by noting $V_{b^{\ast}}(x)=V_{b^{\ast}}+x-b^{\ast}$ for $x>b^{\ast}$) yields
\begin{align}
\max\left\{\frac{1}{2}\sigma_u^2(x) {V}_{b^{\ast}}^{\prime\prime}(x)+\mu_u(x){V}_{b^{\ast}}^{\prime}(x)-\delta {V}_{b^{\ast}}(x),1-V_{b^{\ast}}^\prime(x)\right\}=0. \label{23226-1}
    \end{align}
By Theorem \ref{Vb-ext} we have $V_{b^\ast}(x)=f_{b^\ast}(x)$ for ${a_0}\le x\le b^{\ast}$, and thus by Lemma \ref{fb-sol-ext} and Lemma \ref{fxbar-sol-ext} we can obtain
\begin{align}
&\frac{1}{2}\sigma_d^2(x) {V}_{b^{\ast}}^{\prime\prime}(x)+\mu_d(x){V}_{b^{\ast}}^{\prime}(x)-(\delta+\omega(x)) {V}_{b^{\ast}}(x)=0,\quad {a_0}<x<{a_d}, \label{a1-2-ext00000}\\
&\frac{1}{2}\sigma_s^2(x) {V}_{b^{\ast}}^{\prime\prime}(x)+\mu_s(x){V}_{b^{\ast}}^{\prime}(x)-\delta {V}_{b^{\ast}}(x)=0,\quad {a_d}<x<a_s, \label{a1-2-ext}\\
&\frac{1}{2}\sigma_u^2(x) {V}_{b^{\ast}}^{\prime\prime}(x)+\mu_u(x){V}_{b^{\ast}}^{\prime}(x)-\delta {V}_{b^{\ast}}(x)=0,\quad {a_s}<x<b^{\ast},\label{a1-3-ext}\\
&{V}_{b^{\ast}}({a_0})=0.\label{a1-4-ext}
\end{align}

Let us distinguish two cases (a) $b^\ast>a_s$
and (b) $b^\ast=a_s$. 

\noindent (a) We consider the case $b^\ast>a_s
$ here.  From Lemma \ref{V-concavity-ext}, we have
\begin{align}
{V}_{b^{\ast}}^{\prime}(x)\ge 1 \mbox{ for $ {a_s}\le x<b^{\ast}$, and } 
{V}_{b^{\ast}}^{\prime}(x)=1 \mbox{ for } x\ge b^{\ast}.\label{Verif-4-ext}
\end{align}
By combining 
\eqref{a1-3-ext}
and \eqref{Verif-4-ext} we can obtain
\begin{align}
    &\max\left\{\frac{1}{2}\sigma_u^2(x) {V}_{b^{\ast}}^{\prime\prime}(x)+\mu_u(x){V}_{b^{\ast}}^{\prime}(x)-\delta{V}_{b^{\ast}}(x),1-{V}_{b^{\ast}}^{\prime}(x)\right\}
        =0,\quad {a_s}<x<b^{\ast}.\label{a1-7-ext}
\end{align}
By combining \eqref{a1-2-ext00000}-\eqref{a1-4-ext}, \eqref{a1-7-ext} and \eqref{23226-1} we can verify that  $V_{b^{\ast}}(x)$ is a function that satisfies all the conditions in Theorem \ref{verification-ext} and as a result,
$
V_{b^{\ast}}(x)\ge V(x).
$

\noindent
(b) We now consider the case $b^\ast=a_s$.
Note $
{V}_{b^{\ast}}^{\prime}(x)=1$ for $x\ge b^{\ast}$ (Lemma \ref{V-concavity-ext}). So,
\begin{align}
{V}_{b^{\ast}}^{\prime}(x)=1,\quad x\ge a_s,\label{a1-0-ext}
\end{align}
which together with \eqref{a1-3-ext} implies 
\begin{align}
&\max\left\{\frac{1}{2}\sigma_u^2(x) {V}_{b^{\ast}}^{\prime\prime}(x)+\mu_u(x){V}_{b^{\ast}}^{\prime}(x)-\delta{V}_{b^{\ast}}(x),1-{V}_{b^{\ast}}^{\prime}(x)\right\}=0,\quad x>{a_s},\label{23226-2}
\end{align}
By
 combining \eqref{a1-2-ext}-\eqref{a1-3-ext}, \eqref{a1-0-ext} and \eqref{23226-2}  we can conclude that $V_{b^{\ast}}(x)$ is a function that satisfies all the conditions in Theorem \ref{verification-ext} and as a result,
$
V_{b^{\ast}}(x)\ge V(x).
$

In conclusion, whether $b^\ast>a_s$ or $b^\ast=a_s$, we always observe $
V_{b^{\ast}}(x)\ge V(x),\quad x\ge {a_0}.
$ 
On the other hand,
$
V_{b^{\ast}}(x)\coloneqq \mathcal{P}_x(D^{b^{\ast}})\le \sup_{D\in\Pi} \mathcal{P}_x(D)=V(x),\quad x\ge {a_0}.
$ 
Therefore, 
$V_{b^{\ast}}(x)= \mathcal{P}_x(D^{b^{\ast}})=V(x)$ and so, the strategy $D^{b^{\ast}}$ is optimal.\hfill $\square$\\

\noindent \textbf{Proof of  Theorem \ref{verification-ext-new}}
The existence and uniqueness of the stated solution can be shown by finding a solution and show that it is unique. This will be done later when we proceed to solve the solution.

Let $f$ represents the solution. 
Now we proceed to show that $m(x)=f(x)$ for ${a_0}<x\le b^\ast$.
Note that the stochastic process
$R_t^{D^{b^\ast}}$ will always remain in $[{a_0},b^\ast]$ for $t\in[0,T^{D^{b^\ast}}]$, if it starts with a  initial value  $R_{0-}\in [{a_0},b^\ast]$. 
Noting that $dD^{b^\ast}_s=R_{s-}^{D^{b^\ast}}-R_s^{D^{b^\ast}}+dD^{{b^\ast},c}_s$ where $D^{{b^\ast},c}$ is the continuous part of the stochastic process $D^{{b^\ast}}$. Define an operator $\mathcal{A}_0$:
\begin{equation*}
\mathcal{A}_0(f)(x)=\frac{\sigma^2(x)}2f^{\prime\prime}(x)+\mu(x) f^{\prime}(x)-\omega(x)f(x)=
\begin{cases}
\frac{\sigma_d^2(x)}2f^{\prime\prime}(x)+\mu_d(x) f^{\prime}(x)-\omega(x)f(x), &{a_0}<x<{a_d},\\
\frac{\sigma_s^2(x)}2f^{\prime\prime}(x)+\mu_s(x) f^{\prime}(x), &{a_d}<x<{a_s},\\
\frac{\sigma_u^2(x)}2f^{\prime\prime}(x)+\mu_u(x) f^{\prime}(x), &x\ge{a_s}.
\end{cases}
\end{equation*}
Since $f$ satisfies \eqref{eq1-00-ext} and \eqref{eq2-00-ext}, we have $\mathcal{A}_0(f)(x)=-1$ for  $x \in({a_0},b^\ast]$. By applying the generalized It\^o's lemma, we derive that for $x\in[a_0,b^\ast]$,

\begin{align}     \label{derive perf-00-ext}
  & \mathbb{E}_{x}\bigg[e^{-(\int_0^{t\wedge \tau^{D^{b^\ast}}_{{a_0}}}\omega(R_u^{D^{b^\ast}})du)}f(R_{t\wedge \tau^{D^{b^\ast}}_{{a_0}}}^{D^{b}})\bigg] \nonumber\\
=&f(x)+\mathbb{E}_{x}\bigg[\int_0^{t\wedge \tau^{D^{b^\ast}}_{a_0}}e^{-\int_0^{s}\omega(R_u^{D^{b^\ast}})du}\mathcal{A}_0(f)(R_s^{D^{{b^\ast}}})ds\bigg]
+ \mathbb{E}_{x}\bigg[\int_0^{t\wedge \tau^{D^{b^\ast}}_{{a_0}}}e^{-\int_0^{s}\omega(R_u^{D^{b^\ast}})du}
\sigma(R_{s-}^{D^{b^\ast}}){f}^{\prime}(R_{s-}^{D^{b^\ast}}dW_s\bigg]\nonumber\\
&-\mathbb{E}_{x}\bigg[\int_0^{t\wedge \tau^{D^{b^\ast}}_{{a_0}}}e^{-\int_0^{s}\omega(R_u^{D^{b^\ast}})du}f^{\prime}(R_{s-}^{D^{{b^\ast}}})dD^{{b^\ast}}_s\bigg]\nonumber\\
=&f(x)-\mathbb{E}_{x}\bigg[\int_0^{t\wedge \tau^{D^{b^\ast}}_{{a_0}}}e^{-\int_0^{s}\omega(R_u^{D^b})du)}ds\bigg]+\mathbb{E}_{x}\bigg[\int_0^{t\wedge \tau^{D^{b^\ast}}_{{a_0}}}e^{-\int_0^{s}\omega(R_u^{D^{b^\ast}})du}
\sigma(R_{s-}^{D^{b^\ast}}){f}^{\prime}(R_{s-}^{D^{b^\ast}}dW_s\bigg]\nonumber\\
   &-\mathbb{E}_{x}\bigg[\int_0^{t\wedge \tau^{D^{b^\ast}}_{{a_0}}}e^{-\int_0^{s}\omega(R_u^{D^{b^\ast}})du}f^{\prime}(R_{s-}^{D^{{b^\ast}}})dD^{{b^\ast}}_s\bigg].
\end{align}

Note that $\int_0^{t\wedge \tau^{D^{b^\ast}}_{{a_0}}}e^{-\int_0^{s}\omega(R_u^{D^{b^\ast}})du}
\sigma(R_{s-}^{D^{b^\ast}}){f}^{\prime}(R_{s-}^{D^{b^\ast}})dW_s$ is a local martingale, so there exists a positive sequence $\{t_n\}$ with $\lim\limits_{n\rightarrow +\infty} t_n=+\infty$ such that
\begin{equation}\label{m2-00-ext}
\mathbb{E}_{x}\bigg[\int_0^{t_n\wedge t\wedge \tau^{D^{b^\ast}}_{{a_0}}}e^{-\int_0^{s}\omega(R_u^{D^{b^\ast}})du}\sigma(R_{s-}^{D^{b^\ast}}){f}^{\prime}(R_{s-}^{D^{b^\ast}})dW_s\bigg]=0.
\end{equation}
Under the barrier strategy $D^{b^{\ast}}$, the dividends are paid when the surplus reaches the predefined level ${b^\ast}$, such that $dD^{{b^\ast}}_s=\mathbbm{1}_{\{R_{s-}^{D^{{b^\ast}}}={b^\ast}\}}dD^{{b^\ast}}_s$. Since $f^\prime(b^\ast)=0$ and so $f^{\prime}(R_{s-}^{D^{{b^\ast}}})\mathbbm{1}_{\{R_{s-}^{D^{{b^\ast}}}={b^\ast}\}}=0$, we can show
\begin{align}\label{e1-00-ext}
&\mathbb{E}_{x}\bigg[\int_0^{t_n\wedge t\wedge \tau^{D^{b^\ast}}_{{a_0}}}e^{- \int_0^{s}\omega(R_u^{D^{b^\ast}})du}f^{\prime}(R_{s-}^{D^{{b^\ast}}})dD^{{b^\ast}}_s\bigg]
=0.
\end{align}
Combining \eqref{derive perf-00-ext}, \eqref{m2-00-ext} and \eqref{e1-00-ext}, we have
\begin{align}
f(x)&=\mathbb{E}_{x}\bigg[e^{-(\int_0^{t_n\wedge t\wedge \tau^{D^{b^\ast}}_{{a_0}}}\omega(R_u^{D^{b^\ast}}))du)}f(R_{t_n\wedge t\wedge \tau^{D^{b^\ast}}_{{a_0}}}^{D^{{b^\ast}}})\bigg]+\mathbb{E}_{x}\bigg[\int_0^{t_n\wedge t\wedge \tau^{D^{b^\ast}}_{{a_0}}}e^{-\int_0^{s}\omega(R_u^{D^{b^\ast}})du)}ds\bigg], \quad x \in({a_0},b^\ast].\label{23922-1-00-ext}
\end{align}
Note that $R_{t_n\wedge t\wedge\tau^{D^{b^\ast}}_{{a_0}}}^{D^{b^\ast}}$ is always bounded in the interval $[a_0,b^\ast]$ given that the initial reserve is in the same interval and that $\omega(\cdot)$ is non-negative, and so by applying the dominated convergence twice we can obtain
\begin{align}
&\lim \limits_{\substack{n\rightarrow +\infty \\ t\rightarrow +\infty}} \mathbb{E}_{x}\bigg[e^{-(\int_0^{t_n\wedge t}\omega(R_u^{D^{b^\ast}}))du}f(R_{t_n\wedge t}^{D^{{b^\ast}}})I\{t_n\wedge t< \tau^{D^{b^\ast}}_{{a_0}}\}\bigg]
=0, \quad x\in [ {a_0},b^\ast], \label{25226-1}
\end{align}
where the last equality follows by noting $\lim \limits_{\substack{ n\rightarrow +\infty \\ t\rightarrow +\infty}}e^{-\int_0^{t_n\wedge t}\omega(R_u^{D^{b^\ast}})du} =0$ and  $f(R_{t_n\wedge t}^{D^{{b^\ast}}})I\{t_n\wedge t< \tau^{D^{b^\ast}}_{{a_0}}\}$ is bounded. 
Note $R_{ \tau^{D^{b^\ast}}_{{a_0}}}^{D^{{b^\ast}}}={a_0}$ and hence, 
$
    f(R_{ \tau^{D^{b^\ast}}_{{a_0}}}^{D^{{b^\ast}}})=f({a_0})=0.
$
Consequently,
\begin{align}
&\lim \limits_{\substack{ n\rightarrow +\infty \\ t\rightarrow +\infty}} \mathbb{E}_{x}\bigg[e^{-(\int_0^{t_n\wedge t\wedge \tau^{D^{b^\ast}}_{{a_0}}}\omega(R_u^{D^{b^\ast}}))du}f(R_{t_n\wedge t\wedge \tau^{D^{b^\ast}}_{{a_0}}}^{D^{{b^\ast}}})\bigg]\nonumber\\
=&\lim \limits_{\substack{ n\rightarrow +\infty \\ t\rightarrow +\infty}} \bigg(\mathbb{E}_{x}\bigg[e^{-\int_0^{t_n\wedge t}\omega(R_u^{D^{b^\ast}})du}f(R_{t_n\wedge t}^{D^{{b^\ast}}})I\{t_n\wedge t< \tau^{D^{b^\ast}}_{{a_0}}\}\bigg]\nonumber\\
&+\mathbb{E}_{x}\bigg[e^{-\int_0^{\tau^{D^{b^\ast}}_{a_0}}\omega(R_u^{D^{b^\ast}})du}
f(R_{\tau^{D^{b^\ast}}_{{a_0}}}^{D^{{b^\ast}}})I\{t_n\wedge t\ge  \tau^{D^{b^\ast}}_{{a_0}}\}\bigg]\bigg)\nonumber\\
=&0, \quad x \in({a_0},b^\ast].\label{23922-2-00-ext}
\end{align}
By applying the monotone convergence twice we can obtain
\begin{align}
\lim \limits_{\substack{ n\rightarrow +\infty \\ t\rightarrow +\infty}} 
\mathbb{E}_{x}\bigg[\int_0^{t_n\wedge t\wedge \tau^{D^{b^\ast}}_{{a_0}}}e^{-\int_0^{s}\omega(R_u^{D^{b^\ast}})du}ds\bigg]=\mathbb{E}_{x}\bigg[\int_0^{\tau^{D^{b^\ast}}_{{a_0}}}e^{-\int_0^{s}\omega(R_u^{D^{b^\ast}})du}ds\bigg].\label{161123-1-00-ext}
\end{align}
By combining \eqref{23922-1-00-ext} and \eqref{23922-2-00-ext} and \eqref{161123-1-00-ext} we have 
$    f(x)=\mathbb{E}_{x}\bigg[\int_0^{\tau^{D^{b^\ast}}_{{a_0}}}e^{-\int_0^{s}\omega(R_u^{D^{b^\ast}})du}ds\bigg]=m(x)$ for $ {a_0}<x\le {b^\ast}.
$

For the stochastic process starting with an initial value greater than $b^\ast$, under the barrier strategy $D^{\ast}$, the value of the controlled surplus $R^{D^{b^\ast}}$ at time $0$ becomes $b^\ast$  immediately due to a dividend payment with an amount, $R^{D^{b^\ast}}_0-b^\ast$, and hence,
\begin{align*}
    m(x)=&\mathbb{E}_{x}\bigg[\int_0^{\tau^{D^{b^\ast}}_{{a_0}}}e^{-\int_0^{s}\omega(R_u^{D^{b^\ast}})du}ds\bigg]=\mathbb{E}_{b^\ast}\bigg[\int_0^{\tau^{D^{b^\ast}}_{{a_0}}}e^{-\int_0^{s}\omega(R_u^{D^{b^\ast}})du}ds\bigg]=m(b^\ast), \quad x>b. \qquad  \square
\end{align*}


\noindent \textbf{Proof of Proposition \ref{survival-representation-prop}}
Conditioning on the controlled surplus path
$\{R_u^{D^{b^\ast}}\}_{u\ge0}$,
and decomposing according to whether liquidation occurs before the process reaches $a_0$, we obtain
\begin{align}
m(x)
=&\mathbb{E}_x\bigg[
T^{D^{b^{\ast}}} \mathbbm{1}_{\{T^{D^{b^{\ast}}}<\tau^{D^{b^{\ast}}}_{a_0}\}}\bigg]
+\mathbb{E}_x\bigg[T^{D^{b^{\ast}}}
\mathbbm{1}_{ \{T^{D^{b^{\ast}}}\ge \tau^{D^{b^{\ast}}}_{a_0}\}}\bigg]\nonumber\\
=&\mathbb{E}_x\bigg[\mathbb{E}_x\bigg[\left.T^{D^{b^{\ast}}}
   \mathbbm{1}_{\{ T^{D^{b^{\ast}}}<\tau_{{a_0}}^{D^{b^{\ast}}}\}}\right|\{R_u^{D^{b^{\ast}}}\}_{u\ge 0}\bigg]\bigg]
   +\mathbb{E}_x\bigg[\mathbb{E}_x\bigg[\left. T^{D^{b^{\ast}}}
   \mathbbm{1}_{\{ T^{D^{b^{\ast}}}\ge\tau_{{a_0}}^{D^{b^{\ast}}}\}}\right|{R_u^{D^{b^{\ast}}}}_{u\ge 0}\bigg]\bigg]\nonumber\\
=&\mathbb{E}_x\bigg[\int_0^{\tau_{a_0}^{D^{b^{\ast}}}} s\ 
P(\left.T^{D^{b^{\ast}}}\in ds\right|\{R_u^{D^{b^{\ast}}}\}_{u\ge 0})\bigg]+\mathbb{E}_x\bigg[\tau_{{a_0}}^{D^{b^{\ast}}}P( \left.T^{D^{b^{\ast}}}\ge \tau_{{a_0}}^{D^{b^{\ast}}}\right|\{R_u^{D^{b^{\ast}}}\}_{u\ge 0})\bigg],\quad x\ge a_0,\label{alt-p1-ext}
    \end{align}
where  the last equality follows by noticing $\tau_{{a_0}}^{D^{b^{\ast}}}\in \mathbf{\sigma}(R_u^{D^{b^{\ast}}}: u\ge 0)$ on  $\{T^{D^{b^{\ast}}}<\tau^{D^{b^{\ast}}}_{{a_0}}\}$ and $T^{D^{b^{\ast}}}=\tau^{D^{b^{\ast}}}_{{a_0}} $ on  $\{T^{D^{b^{\ast}}}\ge \tau^{D^{b^{\ast}}}_{{a_0}}\}$.

Since, given the path of the controlled surplus $R^{D^{b^{\ast}}}$,  the conditional liquidation intensity  at time $s$ is $\omega(R_s^{D^{b^{\ast}}})$,
\begin{align}
&\mathbb{E}_x\bigg[\int_0^{\tau_{a_0}^{D^{b^{\ast}}}}s\ 
P(\left. T^{D^{b^{\ast}}}\in ds\right|\{R_u^{D^{b^{\ast}}}\}_{u\ge 0})\bigg]\nonumber\\
=&\mathbb{E}_x\bigg[\int_0^{\tau_{a_0}^{D^{b^{\ast}}}}
s\ 
\omega(R_s^{D^{b^{\ast}}})
    e^{-\int_0^s\omega(R_u^{D^{b^{\ast}}})du}ds\bigg]=\mathbb{E}_x\bigg[\int_0^{\tau_{a_0}^{D^{b^{\ast}}}}
\left(\int_0^s dy\right)\ 
\omega(R_s^{D^{b^{\ast}}})
    e^{-\int_0^s\omega(R_u^{D^{b^{\ast}}})du}ds\bigg]\nonumber\\
    =& \mathbb{E}_x\bigg[\int_0^{\tau_{a_0}^{D^{b^{\ast}}}} \left(
\int_y^{\tau_{a_0}^{D^{b^{\ast}}}} \ 
\omega(R_s^{D^{b^{\ast}}})
    e^{-\int_0^s\omega(R_u^{D^{b^{\ast}}})du}ds\right)dy\bigg]\nonumber\\
    =& \mathbb{E}_x\bigg[\int_0^{\tau_{a_0}^{D^{b^{\ast}}}} 
    e^{-\int_0^y\omega(R_u^{D^{b^{\ast}}})du}ds-
\tau_{a_0}^{D^{b^{\ast}}}e^{-\int_0^{\tau_{a_0}^{D^{b^{\ast}}}}\omega(R_u^{D^{b^{\ast}}})du}\bigg], \quad x\ge a_0,\\
      &\mathbb{E}_x\bigg[\tau^{D^{b^{\ast}}}_{{a_0}}\ P( T^{D^{b^{\ast}}}\ge \tau_{{a_0}}^{D^{b^{\ast}}}|\{R_u^{D^{b^{\ast}}}\}_{u\ge 0})\bigg]=\mathbb{E}_x\bigg[\tau_{{a_0}}^{D^{b^{\ast}}}\ e^{-\int_0^{\tau^D_{{a_0}}}\omega(R_u^{D^{b^{\ast}}})
    du}\bigg], \quad x\ge a_0.\label{alt-p3-ext}
    \end{align}
Substituting these expressions into 
\eqref{alt-p1-ext} yields
$
m(x)
=\mathbb{E}_x\bigg[\int_0^{\tau_{a_0}^{D^{b^{\ast}}}}
    e^{-\int_0^s\omega(R_u^{D^{b^{\ast}}})du}ds\bigg]$ for $x\ge a_0$. \hfill $\square$\\


\noindent
\textbf{Proof of  Theorem \ref{survival-representation}}
The representation follows from the linearity of the differential equations \eqref{eq1-00-ext}--\eqref{eq4-00-ext}. On each continuation region, the expected survival time is given by the sum of a general solution to the associated homogeneous equation and a particular solution.

Suppose first that $b^{\ast}>a_s$. Then
\begin{align*}
m(x)=
\begin{cases}
\tilde C_4 \tilde g_4(x)+ \tilde C_5 \tilde g_5(x)+B_1(x),&a_0\le x\le a_d,\\
\tilde C_6 \tilde g_6(x)+ \tilde C_7 \tilde g_7(x)+B_2(x),&a_d< x\le a_s,\\
\tilde C_8 \tilde g_8(x)+ \tilde C_9 \tilde g_9(x)+B_3(x),& a_s< x\le b^\ast.
\end{cases}
\end{align*}
The constants are determined from the boundary and smooth-fit conditions:
\begin{align}
  &  \tilde C_4 \tilde g_4(a_0)+ \tilde C_5 \tilde g_5(a_0)+B_1(a_0)=0,\label{171123-1-ext-new}\\
   & \tilde C_4 \tilde g_4(a_d)+ \tilde C_5 \tilde g_5(a_d)+B_1(a_d)=
    \tilde C_6 \tilde g_6(a_d)+ \tilde C_7 \tilde g_7(a_d)+B_2(a_d),\label{171123-2-ext-new}\\
 &    \tilde C_4 \tilde g_4^\prime(a_d)+ \tilde C_5 \tilde g_5^\prime(a_d)+B_1^\prime(a_d)=
    \tilde C_6 \tilde g_6^\prime(a_d)+ \tilde C_7 \tilde g_7^\prime(a_d)+B_2^\prime(a_d),\label{171123-3-ext-new}\\
    & \tilde C_6 \tilde g_6(a_s)+ \tilde C_7 \tilde g_7(a_s)+B_2(a_s)=
    \tilde C_8 \tilde g_8(a_s)+ \tilde C_9 \tilde g_9(a_s)+B_3(a_s),\label{171123-4-ext-new}\\
 &    \tilde C_6 \tilde g_6^\prime(a_s)+ \tilde C_7 \tilde g_7^\prime(a_s)+B_2^\prime(a_s)=
    \tilde C_8 \tilde g_8^\prime(a_s)+ \tilde C_9 \tilde g_9^\prime(a_s)+B_3^\prime(a_s),\label{171123-5-ext-new}\\
&\tilde C_8 \tilde g_8^\prime(b^\ast)+ \tilde C_9 \tilde g_9^\prime(b^\ast)+B_3^\prime(b^\ast)=0.\label{171123-6-ext-new}
\end{align}

Using
$
B_1(a_0)=0$, $B_2(a_d)=B_2'(a_d)=0$, $
B_3(a_s)=B_3'(a_s)=0,
$ 
and the initial conditions defining $\tilde g_4-\tilde g_9$, the above system reduces to
\begin{align}
&\tilde C_5=0, 
\quad \tilde C_7=\tilde C_4 \tilde g_4(a_d)+B_1(a_d),
\quad \tilde C_6=\tilde C_4 \tilde g_4'(a_d)+B_1'(a_d), 
\label{171123-2-0-ext-new} \\
&\tilde C_9=\tilde C_6 \tilde g_6(a_s)+ \tilde C_7 \tilde g_7(a_s)+B_2(a_s),
\quad \tilde C_8=\tilde C_6 \tilde g_6'(a_s)+ \tilde C_7 \tilde g_7'(a_s)+B_2'(a_s),\label{171123-5-0-ext-new}
\end{align}
where $\tilde C_4$ is uniquely determined from \eqref{171123-6-ext-new}, as specified in \eqref{171123-6-ext} and 
 $\tilde C_6-\tilde C_9$ subsequently determined by \eqref{171123-2-0-ext-new}--\eqref{171123-5-0-ext-new}, as stated in \eqref{171123-9-ext}-\eqref{171123-10-ext-00}.

If $b^{\ast}=a_s$, then
\begin{align*}
m(x)=
\begin{cases}
\tilde C_{10} \tilde g_4(x)+ \tilde C_{11} \tilde g_5(x)+B_1(x), & a_0\le x\le a_d,\\
\tilde C_{12}\tilde g_6(x)+\tilde C_{13}\tilde g_7(x)+B_2(x), & a_d< x\le a_s.
\end{cases}
\end{align*}
The constants satisfy
\begin{align}
  &  \tilde C_{10} \tilde g_4(a_0)+ \tilde C_{11} \tilde g_5(a_0)+B_1(a_0)=0,\label{171123-100-ext-new}\\
  &\tilde C_{10}\tilde g_4(a_d)+\tilde C_{11}\tilde g_5(a_d)+B_1(a_d)=\tilde C_{12}\tilde g_6(a_d)+\tilde C_{13}\tilde g_7(a_d)+B_2(a_d),\label{171123-200-ext-new}\\
  &\tilde C_{10}\tilde g_4^\prime(a_d)+\tilde C_{11}\tilde g_5^\prime(a_d)+B_1^\prime(a_d)=\tilde C_{12}\tilde g_6^\prime(a_d)+\tilde C_{13}\tilde g_7^\prime(a_d)+B_2^\prime(a_d),\label{171123-300-ext-new}\\
   &\tilde C_{12} \tilde g_6^\prime(a_s)+ \tilde C_{13} \tilde g_7^\prime(a_s)+B_2^\prime(a_s)=0.\label{171123-400-ext-new}
\end{align}
Using again the initial conditions and the properties of $B_1$ and $B_2$, the system reduces to
 $\tilde C_{11} =0,$  
 $ \tilde C_{10}\tilde g_4(a_d)+\tilde C_{11}\tilde g_5(a_d)+B_1(a_d)=\tilde C_{13}$,  $\tilde C_{10}\tilde g_4^\prime(a_d)+\tilde C_{11}\tilde g_5^\prime(a_d)+B_1^\prime(a_d)=\tilde C_{12}$ and 
 $\tilde C_{12} \tilde g_6^\prime(a_s)+ \tilde C_{13} \tilde g_7^\prime(a_s)+B_2^\prime(a_s)= 0,$ 
which uniquely determine $\tilde C_{10}-\tilde C_{13}$, as specified in  
\eqref{171123-600-ext3}. This completes the proof. \hfill $\square$\\

%
%
%

\section{Brownian-bridge absorption correction} \label{app:bridge}

The simulation study of Section~\ref{num:simulation} discretises the controlled
surplus on a uniform time grid $0=t_0<t_1<\cdots$ with $t_{k+1}-t_k=\Delta t$,
using the Euler--Maruyama recursion
\begin{equation}\label{eq:euler}
  \widetilde R_{t_{k+1}}
  = \widetilde R_{t_k}
    + \mu\big(\widetilde R_{t_k}\big)\,\Delta t
    + \sigma\big(\widetilde R_{t_k}\big)\,\big(W_{t_{k+1}}-W_{t_k}\big),
\end{equation}
and declaring liquidation at the hard threshold $a_0$ the first time a grid value
falls to or below $a_0$. Monitoring the boundary only at grid points overlooks
excursions that cross $a_0$ and return within a single step, and therefore
systematically \textit{overstates} the time to liquidation. This bias is not a
second-order nuisance: for a diffusion killed at a boundary, the discretely
monitored Euler scheme converges only at rate $\Delta t^{1/2}$, against the
$\Delta t$ rate enjoyed by the Euler scheme in the absence of killing, and this
slower rate is exact and intrinsic to discrete killing \citep{Gobet2000}.

The standard remedy is to account for within-step crossings analytically rather
than by brute-force grid refinement \citep{Glasserman2004}. Conditional on its
two endpoints $\widetilde R_{t_k}=u$ and $\widetilde R_{t_{k+1}}=v$, and over a
step short enough that the coefficients are effectively constant, the simulated
path is a Brownian bridge with diffusion coefficient $\sigma(u)$---which is
precisely the locally Gaussian, frozen-coefficient approximation that
\eqref{eq:euler} already makes. By the reflection principle, the probability that
such a bridge attains the level $a_0$ at some time in $[t_k,t_{k+1}]$, given
$u>a_0$ and $v>a_0$, is available in closed form,
\begin{equation}\label{eq:bridgeprob}
  p_k \;=\; \exp\!\left(-\,\frac{2\,(u-a_0)\,(v-a_0)}{\sigma(u)^2\,\Delta t}\right).
\end{equation}
The drift does not enter: once the endpoints are fixed, the law of the bridge is
independent of $\mu$. The simulation accordingly augments the absorption test as
follows. At each step the path is liquidated at $a_0$ if $\widetilde
R_{t_{k+1}}\le a_0$ (a directly observed crossing) or, when both endpoints lie
strictly above $a_0$, with probability $p_k$ in \eqref{eq:bridgeprob}---
implemented by drawing an independent uniform $U_k\sim\mathrm{U}(0,1)$ and
absorbing the path when $U_k<p_k$. The same device is used to price discretely
monitored barrier options, where it is known as a continuity correction
\citep{BroadieGlassermanKou1997}.

Two features make the correction attractive here. First, \eqref{eq:bridgeprob} is
\textit{exact} for a Brownian bridge: it introduces no modelling or numerical
approximation beyond the locally Gaussian step that defines the Euler scheme
itself, requires neither calibration nor any auxiliary simulation (one uniform
draw and one exponential evaluation per step), and removes the leading
$\Delta t^{1/2}$ boundary term, restoring the first-order weak-convergence rate.
Second, the correction applies \textit{only} to the hard absorbing threshold
$a_0$. Liquidation within the distress zone is governed by the separate
state-dependent intensity $\omega$, which is simulated exactly through its
integrated-hazard (exponential) clock and is unaffected by the correction; and a
residual $O(\Delta t)$ error remains from the time discretisation of the hazard
integral and of the dividend reflection at the barrier $b^{\ast}$. This is why
refining $\Delta t$ continues to improve accuracy even once the boundary term has
been removed, and why the mean simulated survival in Table~\ref{tab:ex5} still
sits slightly above the analytic value at the step size used.

\end{document}